\newtheorem{theorem}{Theorem}[section]
\newtheorem{cor}[theorem]{Corollary}
\newtheorem{defi}[theorem]{Definition}
\newtheorem{lemma}[theorem]{Lemma}
\newtheorem{prop}[theorem]{Proposition}
\newtheorem{remark}[theorem]{Remark}
\numberwithin{equation}{section}
\newcommand{\R}{\mathbb{R}}
\newcommand{\N}{\mathbb{N}}
\newcommand{\func}[3]{#1 : #2 \longrightarrow #3}
\newcommand{\disp}{\displaystyle}
\newcommand{\abs}[1]{\left|#1\right|}
\newcommand{\eps}{\varepsilon}
\newcommand{\norm}[1]{\left\|#1\right\|}
\renewcommand{\leq}{\leqslant}
\renewcommand{\geq}{\geqslant}
\renewcommand{\bar}{\overline}
\renewcommand{\tilde}{\widetilde}
\newcommand{\pa}[1]{\left(#1\right)}
\newcommand{\cro}[1]{\left[#1\right]}
\newcommand{\br}[1]{\left\{#1\right\}}
\newcommand\restr[2]{{
  \left.\kern-\nulldelimiterspace 
  #1 
  \right|_{#2} 
  }}
\def\signmb{\bigskip \begin{center} {\sc
Marc Briant\par\vspace{3mm}
Brown University\par
Division of Applied Mathematics\par
182 George Street, Box F
Providence, RI 02192, USA\par
\vspace{3mm}
e-mail:} \tt{briant.maths@gmail.com} \end{center}}
\begin{document} 

\title[Perturbative theory for the Boltzmann equation in bounded domains]{Perturbative theory for the Boltzmann equation in bounded domains with different boundary conditions}
\author{Marc Briant}
\thanks{The author was supported by the $150^{th}$ Anniversary Postdoctoral Mobility Grant of the London Mathematical Society. The author would also like to acknowledge the Division of Applied Mathematics at Brown University, where this work was achieved.}

\begin{abstract}
We study the Boltzmann equation near a global Maxwellian in the case of bounded domains. We consider the boundary conditions to be either specular reflections or Maxwellian diffusion. Starting from the reference work of Guo \cite{Gu6} in $L^\infty_{x,v}\pa{\pa{1+\abs{v}}^\beta e^{\abs{v}^2/4}}$, we prove existence, uniqueness, continuity and positivity of solutions for less restrictive weights in the velocity variable; namely, polynomials and stretch exponentials. The methods developed here are constructive.
\end{abstract}

\maketitle

\vspace*{10mm}

\textbf{Keywords:} Boltzmann equation; Perturbative theory; Specular reflection boundary conditions; Maxwellian diffusion boundary conditions.

\smallskip
\textbf{Acknowledgements:} I would like to thank Yan Guo for the fruitful discussions we had.

\tableofcontents

\section{Introduction} \label{sec:intro}

The Boltzmann equation rules the dynamics of rarefied gas particles moving in a domain $\Omega$ of $\R^3$ with velocities in $\R^3$ when the sole interactions taken into account are elastic binary collisions. More precisely, the Boltzmann equation describes the time evolution of $F(t,x,v)$, the distribution of particles in position and velocity, starting from an initial distribution $F_0(x,v)$. It reads
\begin{eqnarray}
\forall t \geq 0 &,& \:\forall (x,v) \in \Omega \times \R^3,\quad  \partial_t F + v\cdot \nabla_x F = Q(F,F),\label{BE}
\\ && \:\forall (x,v) \in \Omega \times \R^3,\quad F(0,x,v) = F_0(x,v). \nonumber
\end{eqnarray}
To which one have to add boundary conditions on $F$. We decompose the phase space boundary
$$\Lambda = \partial \Omega\times\R^3$$
into three sets
\begin{eqnarray*}
\Lambda^+ &=& \br{\pa{x,v}\in\partial\Omega\times\R^3, \quad n(x)\cdot v >0},
\\\Lambda^- &=& \br{\pa{x,v}\in\partial\Omega\times\R^3, \quad n(x)\cdot v <0},
\\\Lambda_0 &=& \br{\pa{x,v}\in\partial\Omega\times\R^3, \quad n(x)\cdot v =0},
\end{eqnarray*}
where $n(x)$ the outward normal at a point $x$ on $\partial\Omega$. The set $\Lambda_0$ is called the grazing set.
\par In the present work, we will consider two types of interactions with the boundary of the domain $\partial\Omega$. Either the specular reflections
\begin{equation}\label{SR}
\forall t > 0,\:\forall (x,v) \in \Lambda^-,\quad  F(t,x,v) = F(t,x,\mathcal{R}_x(v))
\end{equation}
where $\mathcal{R}_x$ stands for the specular reflection at the point $x$ on the boundary:
$$\forall v \in \R^3,\quad \mathcal{R}_x(v) = v - 2(v\cdot n(x))n(x).$$
This interaction describes the fact that the gas particles elastically collide against the wall like billiard balls. The second type is the Maxwellian diffusion boundary condition
\begin{equation}\label{MD}
\forall t>0,\:\forall (x,v) \in \Lambda^-, \quad F(t,x,v)= c_\mu \mu(v)\left[\int_{v_*\cdot n(x)>0} F(t,x,v_*)\left(v_*\cdot n(x)\right)\:dv_*\right]
\end{equation}
where
$$\mu(v) = \frac{1}{\pa{2\pi}^{3/2}}e^{-\frac{\abs{v}^2}{2}} \quad\mbox{and}\quad c_\mu\int_{v\cdot n(x)>0} \mu(v)\left(v\cdot n(x)\right)\:dv=1.$$
This boundary condition expresses the physical process where particles are absorbed by the wall and then emitted back into $\Omega$ according to the thermodynamical equilibrium distribution between the wall and the gas.

\bigskip
The operator $Q(F,F)$ encodes the physical properties of the interactions between two particles. This operator is quadratic and local in time and space. It is given by 
$$Q(F,F) =  \int_{\R^3\times \mathbb{S}^{2}}B\left(|v - v_*|,\mbox{cos}\:\theta\right)\left[F'F'_* - FF_*\right]dv_*d\sigma,$$
where $F'$, $F_*$, $F'_*$ and $F$ are the values taken by $F$ at $v'$, $v_*$, $v'_*$ and $v$ respectively. Define:
$$\left\{ \begin{array}{rl} \displaystyle{v'} & \displaystyle{= \frac{v+v_*}{2} +  \frac{|v-v_*|}{2}\sigma} \vspace{2mm} \\ \vspace{2mm} \displaystyle{v' _*}&\displaystyle{= \frac{v+v_*}{2}  -  \frac{|v-v_*|}{2}\sigma} \end{array}\right., \: \mbox{and} \quad \mbox{cos}\:\theta = \langle \frac{v-v_*}{\abs{v-v_*}},\sigma\rangle .$$
We recognise here the conservation of kinetic energy and momentum when two particles of velocities $v$ and $v_*$ collide to give two particles of velocities $v'$ and $v'_*$.
\par The collision kernel $B$ contains all the information about the interaction between two particles and is determined by physics. We mention, at this point, that one can derive this type of equations from Newtonian mechanics at least formally \cite{Ce}\cite{CIP}. The rigorous validity of the Boltzmann equation from Newtonian laws is known for short times (Landford's theorem \cite{La} or more recently \cite{GST,PSS}).

\bigskip
In the present paper we are interested in the well-posedness of the Boltzmann equation $\eqref{BE}$ for fluctuations around the global equilibrium 
$$\mu(v) = \frac{1}{\pa{2\pi}^{3/2}}e^{-\frac{\abs{v}^2}{2}}.$$
More precisely, in the perturbative regime $F=\mu + f$ we construct a Cauchy theory in $L^\infty_{x,v}$ spaces endowed with strech exponential or polynomial weights and study the continuity and the positivity of such solutions for both specular reflections and diffusive boundary conditions.
\par Under the perturbative regime, the Cauchy problem amounts to solving the perturbed Boltzmann equation
\begin{equation}\label{perturbedBE}
\partial_t f + v\cdot\nabla_x f = Lf + Q(f,f)
\end{equation}
with $L$ being the linear Boltzmann operator
$$Lf = 2Q(\mu,f)$$
and we considered $Q$ as a symmetric bilinear operator
\begin{equation}\label{Qfg}
Q(f,g) = \frac{1}{2}\int_{\R^3\times \mathbb{S}^{2}}B\left(|v - v_*|,\mbox{cos}\:\theta\right)\left[f'g'_* + g'f'_* - fg_*-gf_*\right]dv_*d\sigma.
\end{equation}
Throughout this paper we deal with the perturbed Boltzmann equation $\eqref{perturbedBE}$ and the domain $\Omega$ is supposed to be $C^1$ so that its outwards normal is well-defined (it will be analytic and strictly convex in the case of specular reflections or just connected in the case of Maxwellian diffusion).
\bigskip


\subsection{Notations and assumptions}\label{subsec:notations}

We describe the assumptions and notations we shall use throughout the sequel.

\bigskip
\textbf{Function spaces.} 
Define
$$\langle \cdot \rangle = \sqrt{1+\abs{\cdot}^2}.$$
\par The convention we choose is to index the space by the name of the concerned variable so we have, for $p$ in $[1,+\infty]$,
$$L^p_{[0,T]} = L^p\pa{[0,T]},\quad L^p_{t} = L^p \left(\R^+\right),\quad L^p_x = L^p\left(\Omega\right), \quad L^p_v = L^p\left(\R^3\right).$$
\par For $\func{m}{\R^3}{\R^+}$ a strictly positive measurable function we define the following weighted Lebesgue spaces by the norms
\begin{eqnarray*}
\norm{f}_{L^\infty_{x,v}\pa{m}} &=& \sup\limits_{(x,v)\in\Omega\times\R^3}\cro{\abs{f(x,v)}\:m(v)}
\\\norm{f}_{L^1_vL^\infty_{x}\pa{m}} &=& \int_{\R^3}\sup\limits_{x\in\Omega}\abs{f(x,v)}\:m(v) \:dv
\end{eqnarray*}
and in general with $p$, $q$ in $[1,\infty)$: $\norm{f}_{L^p_vL^q_x\pa{m}} = \norm{\norm{f}_{L^q_x}m(v)}_{L^p_v}$.
\par We define the Lebesgue spaces on the boundary:
\begin{eqnarray*}
\norm{f}_{L^\infty_\Lambda\pa{m}} &=&\sup\limits_{(x,v)\in\Lambda}\cro{\abs{f(x,v)}\:m(v)}
\\\norm{f}_{L^1L^\infty_\Lambda\pa{m}} &=&\int_{\R^3}\sup\limits_{x:\: (x,v)\in\Lambda}\abs{f(x,v)v\cdot n(x)}\:m(v) \:dv
\end{eqnarray*}
with obvious equivalent definitions for $\Lambda^{\pm}$ or $\Lambda_0$. However, when we do not consider the $L^\infty$ setting in the spatial variable we define
$$\norm{f}_{L^2_{\Lambda}\pa{m}} = \cro{\int_{\Lambda} \abs{f(x,v)^2 m(v)^2 \abs{v\cdot n(x)}\:dS(x)dv}}^{1/2},$$
where $dS(x)$ is the Lebesgue measure on $\partial\Omega$.
\par For a function $g$ defined on a space $\mathcal{E}$ and a subset $E \subset \mathcal{E}$ we denote by $\restr{g}{E}$ the restriction of $g$ on $E$.

\bigskip
\textbf{Assumptions on the collision kernel.}
We assume that the collision kernel $B$ can be written as
\begin{equation}\label{B}
B(v,v_*,\theta) = \Phi\left(|v - v_*|\right)b\left( \mbox{cos}\:\theta\right),
\end{equation}
which covers a wide range of physical situations (see for instance \cite{Vi2} Chapter $1$).
\par Moreover, we will consider only kernels with hard potentials, that is 
\begin{equation}\label{hardpot}
\Phi(z) = C_\Phi z^\gamma \:,\:\: \gamma \in [0,1],
\end{equation}
where $C_\Phi>0$ is a given constant. Of special note is the case $\gamma=0$ which is usually known as Maxwellian potentials.
We will assume that the angular kernel $b\circ \mbox{cos}$ is positive and continuous on $(0,\pi)$, and that it satisfies a strong form of Grad's angular cut-off:
\begin{equation}\label{cutoff}
b_\infty=\norm{b}_{L^\infty_{[-1,1]}}<\infty
\end{equation}
The latter property implies the usual Grad's cut-off \cite{Gr1}:
\begin{equation}\label{lb}
l_b = \int_{\mathbb{S}^{d-1}}b\left(\mbox{cos}\:\theta\right)d\sigma = \left|\mathbb{S}^{d-2}\right|\int_0^\pi b\left(\mbox{cos}\:\theta\right) \mbox{sin}^{d-2}\theta \:d\theta < \infty.
\end{equation}
Such requirements are satisfied by many physically relevant cases. The hard spheres case ($b=\gamma=1$) is a prime example.
\bigskip


\subsection{Our goals, strategies and comparison with previous studies} \label{subsec:strategy}

Few results have been obtained about the perturbative theory for the Boltzmann equation with other boundary conditions than the periodicity of the torus. On the torus we can mention \cite{Uk}\cite{Gu3}\cite{Gu4}\cite{MN}\cite{Bri1}\cite{GMM} for collision kernels with hard potentials with cutoff, \cite{GreStr} without the assumption of angular cutoff or \cite{Gu1}\cite{Kim} for soft potentials. A good review of the methods and techniques used can be found in the exhaustive \cite{UkYa}.
\par The study of the well-posedness of the Boltzmann equation, as well as the trend to equilibrium, when the spatial domain is bounded with non-periodic boundary conditions is scarce and only focuses on hard potential kernels with angular cutoff. The cornerstone is the work by Guo \cite{Gu6} who established a complete Cauchy theory around a global Maxwellian and prove the exponential convergence to equilibrium in $L^\infty_{x,v}$ with an important weight $\langle v \rangle^\beta\mu(v)^{-1/2}$. The latter weight is quite restrictive and has been required in all the studies so far. This perturbative theory is done in smooth convex domain for Maxwellian diffusion boundary conditions and strictly convex and analytic domains in the case of specular reflections (note that in-flow and bounce-back boundary conditions are also dealt with). The method of Guo is based on an $L^2-L^\infty$ theory, we briefly explain it later, that was then used in \cite{KimYun} (to obtain similar perturbative results around a rotational local Maxwellian in the case of specular reflections) and recently in \cite{EGKM} to deal with non global diffusive boundary conditions in more general domains.
\par To conclude this overview let us mention that unlike the case of the torus where regularity theory in Sobolev spaces is now well established, a recent result by Kim \cite{Kim1} showed that singularities arise at non-convex points on the boundary even around a global Maxwellian. However, we can still recover some weak form of regularity in $\Omega$ is strictly convex \cite{GKTT1} or if the boundary conditions are diffusive \cite{GKTT2}.

\bigskip
As mentioned before, the main goal of the present work is to establish the perturbative well-posedness and exponential trend to equilibrium for the Boltzmann equation with specular reflexion or diffusive boundary conditions in the $L^\infty_{x,v}$ setting with less restrictive weights than the studies mentioned above. More precisely, we shall deal with $L^\infty_{x,v}\pa{m}$ where $m$ is either a stretch exponential or a polynomial instead of $m=\langle v \rangle^\beta\mu(v)^{-1/2}$ with $\beta$ large. There are two main advances in this work. The first one is a study of transport-like equations with diffusive boundary conditions in a mixed setting $L^1_vL^\infty_x$. The second one is a new analytic version of the extension theory of Gualdani, Mischler and Mouhot \cite{GMM} that fits both the boundary conditions and the lack of hypodissipativity of the linear operator.
\par More precisely, the main contribution of our work if to establish a Cauchy theory in more general spaces. The main strategy is to combine a decomposition of the Boltzmann linear operator $L$ into $A+B$ where $B$ will act like a small perturbation of the operator $G_{\nu} = -v\cdot\nabla_x - \nu(v)$ and $A$ has a regularizing effect. This idea comes from the recent work \cite{GMM} for which we develop here an analytic and non-linear version. The regularizing property of the operator $A$ allows us to decompose the perturbative equation into a system of differential equations
\begin{eqnarray}
\partial_t f_1 + v\cdot\nabla_x f_1 &=& Bf_1 + Q(f_1,f_1+f_2)\label{introf1}
\\\partial_t f_2 + v\cdot\nabla_x f_2 &=& Lf_2 + Q(f_2,f_2) + Af_1 \label{introf2}
\end{eqnarray}
where the first equation can be solved in $L^\infty_{x,v}\pa{m}$ and the second is dealt with in $L^\infty_{x,v}\pa{\langle v \rangle^\beta\mu^{-1/2})}$ where the theory of Guo \cite{Gu6} is known to hold.

\par The key ingredient to study $\eqref{introf1}$ is to show that $G_{\nu}$ along with boundary conditions generates a semigroup $S_{G_\nu}(t)$ exponentially decaying in $L^\infty_{x,v}\pa{m}$. The specular reflections and diffusive boundary conditions cannot be treated by the standard semigroup results in bounded domain \cite{BePro} and we adapt the tools developed in \cite{Gu6} to the weights $m$ considered here. We obtain an explicit form for $S_{G_\nu(t)}$ in the case of specular reflection whereas we only have an implicit description of it in the case of Maxwellian diffusion. The latter implicit description includes the contribution of all the possible backward characteristic trajectories starting at $(t,x,v)$. We then use the fact that the measure of the set of trajectories not reaching the initial plane $\br{t=0}$ is small.
\par The second difficulty in solving $\eqref{introf1}$ is to prove that $B$ does not perturb ``too much'' the exponential decay generated by the semigroup $S_{G_\nu}(t)$. Indeed, the latter semigroup is not strongly continuous and we therefore loose the hypodissipativity properties that hold for $G_\nu$ is the case of the torus \cite{GMM}. The case of specular reflections can be dealt with thanks to a Duhamel formulation because $S_{G_\nu}(t)$ has a good contractive property. Such a property is missing in the case of diffusive boundary condition. Due to the implicit description of $S_{G_\nu}(t)$, the proof of $B$ being a small perturbation of $G_\nu$ requires a $L^1_vL^\infty_x$-theory for the semigroup $S_{G_\nu}(t)$ as well as a new mixing estimate for $B$. The study of transport-like equations with boundary conditions in mixed norms seems new to our knowledge.
\par The second equation $\eqref{introf2}$ can be solved easily using the regularizing property of the operator $A$ and the results already described for specular reflections in strictly convex and analytic domains or \cite{EGKM} for Maxwellian diffusion boundary condition in $C^1$ bounded domains.

\bigskip
We conclude by mentioning that our results also give the continuity of the aforementioned solutions
 away from the grazing set $\Lambda_0$. Such a property also allows us to obtain the positivity (and quantify it explicitely) of the latter solutions thanks to recent results by the author \cite{Bri2}\cite{Bri5}.
\bigskip


\subsection{Organisation of the article}\label{subsec:organization}

Section $\ref{sec:mainresults}$ is dedicated to the statement and the description of the main results proved in this paper. We also give some background properties about the linear Boltzmann operator.
\par In Section $\ref{sec:semigroupcollisionfrequency}$ we study the semigroup generated by the transport part and the collision frequency kernel $G_\nu=-v\cdot\nabla_x -\nu$ along with boundary conditions.
\par We give a brief review of the existing $L^2-L^\infty$ theory for the full linear perturbed operator $G=-v\cdot\nabla_x +L$ in Section $\ref{sec:L2Linftytheory}$.
\par We present and solve the system of equations $\eqref{introf1}$-$\eqref{introf2}$ in Section $\ref{sec:extensionSRMD}$.
\par Lastly, Section $\ref{sec:mainproofs}$ is dedicated to the proof of existence, uniqueness, exponential decay, continuity and positivity of solutions to the full Boltzmann equation $\eqref{BE}$.
\bigskip

\section{Main results} \label{sec:mainresults}

\subsection{Some essential background on the perturbed Boltzmann equation}\label{subsec:background}

We gather here some renown properties about the Boltzmann equation.

\bigskip
\textbf{\textit{A priori} conservation laws.}We start by noticing the symmetry property of the Boltzmann operator (see \cite{Ce}\cite{CIP}\cite{Vi2} among others).

\bigskip
\begin{lemma}\label{lem:integralQ}
Let $f$ be such that $Q(f,f)$ is well-defined. Then for all $\Psi(v)$ we have
$$\int_{\R^3}Q(f,f)\Psi\:dv = \frac{C_\Phi}{4}\int_{\R^d\times\R^d\times\mathbb{S}^{d-1}}q(f)(v,v_*)\left[\Psi'_* + \Psi' - \Psi_* - \Psi\right]\:d\sigma dvdv_*,$$
with
$$q(f)(v,v_*) = |v-v_*|^\gamma b\left(\mbox{cos}\:\theta\right)ff_*.$$
\end{lemma}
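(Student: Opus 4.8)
The plan is to derive the weak (variational) form of $Q(f,f)$ from the strong form \eqref{Qfg} by exploiting the two classical symmetries of the collision integral: the exchange of the two colliding particles and the pre--post collisional involution. First I would specialise \eqref{Qfg} to $g=f$, which collapses the symmetric bilinear form into $Q(f,f)=\int_{\R^3\times\mathbb{S}^2}B\,\cro{f'f'_*-ff_*}\,dv_*\,d\sigma$, multiply by an arbitrary $\Psi=\Psi(v)$ and integrate in $v$. Grad's cut-off \eqref{lb}, together with the hypothesis that $Q(f,f)$ is well defined, lets me invoke Fubini and split the triple integral $J:=\int\!\!\int\!\!\int B\,\cro{f'f'_*-ff_*}\,\Psi\,d\sigma\,dv_*\,dv$ into its gain and loss parts, each separately finite, so that all the changes of variables below are licit.

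The core of the argument is two measure-preserving involutions on $\R^3\times\R^3\times\mathbb{S}^2$ that leave the kernel $B=C_\Phi\abs{v-v_*}^\gamma b(\cos\theta)$ invariant. The first is the particle exchange $(v,v_*,\sigma)\mapsto(v_*,v,-\sigma)$: it has unit Jacobian, fixes $\abs{v-v_*}$ and, since $\langle\widehat{v_*-v},-\sigma\rangle=\langle\widehat{v-v_*},\sigma\rangle$, also fixes $\cos\theta$, hence $B$ and the products $ff_*$, $f'f'_*$; meanwhile it swaps $v'\leftrightarrow v'_*$, so it sends $\Psi\mapsto\Psi_*$ and $\Psi'\mapsto\Psi'_*$. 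The second is the pre--post collisional involution $(v,v_*,\sigma)\mapsto(v',v'_*,\widehat{v-v_*})$: conservation of momentum and energy give $\abs{v'-v'_*}=\abs{v-v_*}$, and one checks $\langle\widehat{v'-v'_*},\widehat{v-v_*}\rangle=\langle\sigma,\widehat{v-v_*}\rangle=\cos\theta$, so again $B$ is preserved, while the map exchanges the pairs $(v,v_*)\leftrightarrow(v',v'_*)$ and is an involution of unit Jacobian. Under it $ff_*\leftrightarrow f'f'_*$ and $\Psi\mapsto\Psi'$, $\Psi_*\mapsto\Psi'_*$.

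With these two symmetries in hand I would evaluate $J$ in the four equivalent ways obtained by applying the identity, the exchange, the pre--post involution, and their composition, namely $J=\int B\cro{f'f'_*-ff_*}\Psi$, $J=\int B\cro{f'f'_*-ff_*}\Psi_*$, $J=-\int B\cro{f'f'_*-ff_*}\Psi'$ and $J=-\int B\cro{f'f'_*-ff_*}\Psi'_*$, the minus signs coming from $\cro{f'f'_*-ff_*}\mapsto-\cro{f'f'_*-ff_*}$ under the pre--post involution. Averaging the four representations yields the fully symmetric form
\[ J=\tfrac14\int\!\!\int\!\!\int B\,\cro{f'f'_*-ff_*}\,\cro{\Psi+\Psi_*-\Psi'-\Psi'_*}\,d\sigma\,dv_*\,dv. \]
A last application of the pre--post involution to the gain contribution $f'f'_*$ rewrites it as $ff_*$ (flipping the sign of the test-function bracket), so the two pieces combine into a single term built on $q(f)=\abs{v-v_*}^\gamma b(\cos\theta)ff_*$ and the combination $\Psi'+\Psi'_*-\Psi-\Psi_*$; restoring the constant $C_\Phi$ from \eqref{hardpot} gives precisely the claimed identity.

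The only genuinely delicate point is the rigorous justification of the pre--post collisional change of variables: that $(v,v_*,\sigma)\mapsto(v',v'_*,\widehat{v-v_*})$ is an involution of unit Jacobian leaving $d\sigma\,dv\,dv_*$ invariant. I would obtain this from the geometry of the $\sigma$-representation (the centre of mass $\tfrac{v+v_*}{2}$ is fixed and the post-collisional relative velocity $\abs{v-v_*}\sigma$ has the same length as $v-v_*$), or equivalently by passing to the Carleman representation; this is the standard but slightly technical computation underlying every such weak formulation. Everything else is bookkeeping made rigorous by the cut-off \eqref{lb}, which secures the separate finiteness of the gain and loss integrals and hence the validity of Fubini and of the manipulations above.
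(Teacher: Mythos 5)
Your proof is correct and follows exactly the route the paper itself indicates (the paper gives no detailed proof, only remarking that the identity follows from the changes of variables $(v,v_*)\to(v_*,v)$ and $(v,v_*)\to(v',v'_*)$ together with the symmetries of $q(f)$), so the two approaches coincide. One bookkeeping point: carried to the end, your final combination of the gain and loss pieces produces the prefactor $\tfrac{C_\Phi}{2}$ in front of the $ff_*$-integrand (the factor $\tfrac14$ pairs with the antisymmetrized integrand $f'f'_*-ff_*$, and converting the gain term back to $ff_*$ doubles it), so the constant $\tfrac{C_\Phi}{4}$ in the statement is off by a factor of $2$; this discrepancy lies in the lemma as printed rather than in your argument, and it is immaterial for the paper since the bracket $\Psi'_*+\Psi'-\Psi_*-\Psi$ vanishes on the collision invariants used to derive the conservation laws.
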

\bigskip

This result is well-known for the Boltzmann equation and is a simple manipulation of the integrand using changes of variables $(v,v_*)\to (v_*,v)$ and $(v,v_*)\to (v',v'_*)$, as well as using the symmetries of the operator $q(f)$. A straightforward consequence of the above is the \textit{a priori} conservation of mass when one consider either specular reflections or Maxwellian diffusion
\begin{equation}\label{conservationmass}
\forall t\geq 0, \quad \int_{\Omega\times\R^3} f(t,x,v)\:dxdv = \int_{\Omega\times\R^3} f_0(x,v)\:dxdv.
\end{equation}

\par In the case of specular reflections Lemma $\ref{lem:integralQ}$ also implies the \textit{a priori} conservation of energy
\begin{equation}\label{conservationenergy}
\forall t\geq 0, \quad \int_{\Omega\times\R^3}\abs{v}^2 f(t,x,v)\:dxdv = \int_{\Omega\times\R^3} \abs{v}^2f_0(x,v)\:dxdv.
\end{equation}

\par Lastly, in the specific case of specular reflections inside a domain $\Omega$ with an axis of rotation symmetry:
\begin{equation}\label{axissymmetric}
\exists x_0,\:\omega \in \R^3, \: \forall x\in\partial\Omega, \quad \br{(x-x_0)\times\omega}\cdot n(x)=0, 
\end{equation}
we also obtain the \textit{a priori} conservation of the following angular momentum
\begin{equation}\label{conservationangularmomentum}
\forall t\geq 0, \: \int_{\Omega\times\R^3} \br{(x-x_0)\times\omega}\cdot v f(t,x,v)\:dxdv = \int_{\Omega\times\R^3} \br{(x-x_0)\times\omega}\cdot vf_0(x,v)\:dxdv.
\end{equation}

\bigskip
\textbf{The linear Boltzmann operator.} We gather some well-known properties of the linear Boltzmann operator $L$ (see \cite{Ce, CIP, Vi2, GMM} for instance). 
\par $L$ is a closed self-adjoint operator in $L^{2}_v\left(\mu^{-1/2}\right)$ with kernel
$$\mbox{Ker}\left(L\right) = \mbox{Span}\left\{\phi_0(v),\dots,\phi_{4}(v)\right\}\mu ,$$
where $\pa{\phi_i}_{0\leq i\leq 4}$ is an orthonormal basis of $\mbox{Ker}\left(L\right)$ in $L^2_v\left(\mu^{-1/2}\right)$. More precisely, if we denote $\pi_L$ the orthogonal projection onto $\mbox{Ker}\left(L\right)$ in $L^2_v\left(\mu^{-1/2}\right)$:
\begin{equation}\label{piL}
\left\{\begin{array}{l} \disp{\pi_L(g) = \sum\limits_{i=0}^{4} \pa{\int_{\R^3} g(v_*)\phi_i(v_*)\:dv_*} \phi_i(v)\mu(v)} \vspace{2mm}\\\vspace{2mm} \disp{\phi_0(v)=1,\quad \phi_i(v) = v_i,\:1\leq i \leq 3,\quad \phi_{4}(v)=\frac{\abs{v}^2-3}{\sqrt{6}},}\end{array}\right.
\end{equation}
and we define $\pi_L^\bot = \mbox{Id} - \pi_L$. The projection $\pi_L(f(x,\cdot))(v)$ of $f(x,v)$ onto the kernel of $L$ is called its fluid part whereas $\pi_L^\bot(f)$ is its microscopic part.
\par $L$ can be written under the following form
\begin{equation}\label{LnuK}
L = -\nu(v) +K,
\end{equation}
where $\nu(v)$ is the collision frequency
$$\nu(v) = \int_{\R^3\times\mathbb{S}^{2}} b\left(\mbox{cos}\:\theta\right)\abs{v-v_*}^\gamma \mu_*\:d\sigma dv_*$$
and $K$ is a bounded and compact operator in $L^2_v\pa{\mu^{-1/2}}$ that takes the form
$$K(f)(v) = \int_{\R^3} k(v,v_*)f(v_*)\:dv_*.$$
\par Finally we remind that there exist $\nu_0,\:\nu_1 >0$ such that
\begin{equation}\label{nu0nu1}
\forall v \in \R^3,\quad \nu_0(1+\abs{v}^\gamma)\leq \nu(v)\leq \nu_1(1+\abs{v}^\gamma),
\end{equation}
and that $L$ has a spectral gap $\lambda_L >0$ in $L^2_{x,v}\pa{\mu^{-1/2}}$ (see \cite{BM,Mo1} for explicit proofs)
\begin{equation}\label{spectralgapL}
\forall f \in L^2_v\pa{\mu^{-1/2}}, \quad \langle L(f),f\rangle_{L^2_v\pa{\mu^{-1/2}}} \leq -\lambda_L \norm{\pi_L^\bot(f)}_{L^2_v\pa{\mu^{-1/2}}}^2.
\end{equation}

\bigskip
\textbf{The linear perturbed Boltzmann operator.}
The linear perturbed Boltzmann operator is the full linear part of the perturbed Boltzmann equation $\eqref{perturbedBE}$:
$$G = L - v\cdot \nabla_x.$$

An important point is that the same computations as to show the \textit{a priori} conservation laws implies that in $L^2_{x,v}\pa{\mu^{-1/2}}$ the space $\pa{\mbox{Span}\br{\mu,\abs{v}^2\mu}}^\bot$ is stable under the flow
$$\partial_t f = G(f)$$
with specular reflections whereas $\pa{\mbox{Span}\br{\mu}}^\bot$ is stable under the same differential equation with diffusive boundary conditions. We thus define the $L^2_{x,v}\pa{\mu^{-1/2}}$-projection onto that space
\begin{equation}\label{PiGSR}
\Pi_G(f)(v) = \left(\int_{\Omega\times\R^3} h(x,v_*)\:dxdv_*\right)\mu(v) + \left(\int_{\Omega\times\R^3} \abs{v_*}^2h(x,v_*)\:dxdv_*\right)\abs{v}^2\mu(v),
\end{equation}
(with the addition of the angular momentum term when $\Omega$ is axis-symmetric) and in the case of Maxwellian diffusion
\begin{equation}\label{PiGMD}
\Pi_G(f)(v) = \left(\int_{\Omega\times\R^3} h(x,v_*)\:dxdv_*\right)\mu(v).
\end{equation}
Again we define $\Pi_G^\bot = \mbox{Id} - \Pi_G$.

\bigskip
In order to avoid repeating the conservation laws, for a function space $E$ we define the following sets
\begin{eqnarray*}
\mbox{SR}\cro{E} &=& \br{f\in E,\quad \Pi_G(f) = 0,\:\Pi_G \:\mbox{defined for specular reflection}\:\eqref{PiGSR}}
\\\mbox{MD}\cro{E} &=& \br{f\in E,\quad \Pi_G (f)=0,\:\Pi_G \:\mbox{defined for specular reflection}\:\eqref{PiGMD}}.
\end{eqnarray*}
This amounts to saying that the functions in $\mbox{SR}\cro{E}$ satisfy the conservation of mass $\eqref{conservationmass}$ and energy $\eqref{conservationenergy}$ (and angular momentum $\eqref{conservationangularmomentum}$ if $\Omega$ is axis-symmetric) whilst the functions in $\mbox{MD}\cro{E}$ satisfy the conservation of mass $\eqref{conservationmass}$.
\bigskip


\subsection{Main theorems}\label{subsec:maintheorems}

We start with the following definition.
\bigskip
\begin{defi}
Let $\Omega$ be a bounded domain in $\R^3$. We say that $\Omega$ is analytic and strictly convex if there exists an analytic function $\func{\xi}{\R^3}{\R}$ such that $\Omega = \br{x:\:\xi(x)<0}$ and
\begin{itemize}
\item at the boundary $\xi(x)=0$ and $\nabla\xi(x)\neq 0$,
\item there exists $c_\xi>0$ such that for all $x\in\R^3$, 
\begin{equation}\label{strictlyconvex}
\sum\limits_{1\leq i,j \leq 3}\partial_{ij}\xi(x)x_ix_j \geq c_\xi\abs{x}^2.
\end{equation}
\end{itemize}
\end{defi}
\bigskip

The present work is dedicated to proving the following two perturbative studies for the Boltzmann equation in bounded domains.

\bigskip
\begin{theorem}\label{theo:cauchySR}
Let $\Omega$ be an analytic strictly convex $\eqref{strictlyconvex}$ bounded domain and $B$ be a collision kernel of the form $\eqref{B}$ with hard potential $\eqref{hardpot}$ and angular cutoff $\eqref{cutoff}$.   Let $m=e^{\kappa\abs{v}^\alpha}$ with $\kappa >0$ and $\alpha$ in $(0,2)$ or $m=\langle v \rangle^k$ with 
$$k>1+\gamma +\frac{16\pi b_\infty}{l_b}$$
where $b_\infty$ and $l_b$ were defined by $\eqref{cutoff}$ and $\eqref{lb}$.
\\ Then there exists $\eta_0$, $C_0$ and $\lambda_0 >0$ such that if $F_0 = \mu + f_0$ with $f_0$ in $\mbox{SR}\cro{L^\infty_{x,v}\pa{m}}$ satisfies
$$\norm{f_0}_{L^\infty_{x,v}\pa{m}}\leq \eta_0$$
then there exists a unique $F = \mu + f$ with $f$ in $L^\infty_{[0,+\infty)}\pa{\mbox{SR}\cro{L^\infty_{x,v}\pa{m}}}$ solution to the Boltzmann equation $\eqref{BE}$ with specular reflections boundary conditions $\eqref{SR}$. Moreover, the following holds
\begin{enumerate}
\item $$\forall t \geq 0, \quad \norm{f(t)}_{L^\infty_{x,v}\pa{m}} \leq C_0 e^{-\lambda_0 t}\norm{f_0}_{L^\infty_{x,v}\pa{m}};$$
\item if $F_0 \geq 0$ is continuous on $\bar{\Omega}\times\R^3-\Lambda_0$ and satisfies the specular reflections boundary condition then $F \geq 0$ and $F$ is continuous on $[0,+\infty) \times \pa{\bar{\Omega}\times\R^3-\Lambda_0}$.
\end{enumerate}
\end{theorem}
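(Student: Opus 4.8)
The plan is to prove Theorem~\ref{theo:cauchySR} by assembling the linear semigroup theory with the analytic extension argument sketched in Subsection~\ref{subsec:strategy}, and then solving the coupled system \eqref{introf1}--\eqref{introf2} by a fixed-point argument. I would set $f = f_1 + f_2$ with initial data $f_1(0)=f_0$ and $f_2(0)=0$, the decomposition being arranged so that the sum $f=f_1+f_2$ is a mild solution of \eqref{perturbedBE} with specular reflection \eqref{SR}. I would run the iteration in Duhamel form, writing for the first component
$$f_1^{n+1}(t) = S_B(t)f_0 + \int_0^t S_B(t-s)\,Q\pa{f_1^n,f_1^n+f_2^n}(s)\,ds,$$
where $S_B(t)$ denotes the semigroup generated by $B-v\cdot\nabla_x$ with specular boundary condition, and for the second component the analogous formula driven by the full perturbed semigroup $S_G(t)$ plus the source $Af_1^{n+1}$, with $f_2$ sought in the strong space $L^\infty_{x,v}\pa{\langle v\rangle^\beta\mu^{-1/2}}$ where Guo's $L^2$--$L^\infty$ theory \cite{Gu6} applies.

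The two equations live in different spaces, linked only through the regularizing operator $A$ and the bilinear operator $Q$. The estimates I would need are: (i) exponential decay of $S_B(t)$ in $L^\infty_{x,v}(m)$, which is the content of the semigroup analysis of $G_\nu$ in Section~\ref{sec:semigroupcollisionfrequency} together with the proof that $B$ is a small perturbation; (ii) the smoothing bound $\norm{Af_1}_{L^\infty_{x,v}(\langle v\rangle^\beta\mu^{-1/2})} \lesssim \norm{f_1}_{L^\infty_{x,v}(m)}$; (iii) the bilinear estimates presenting $Q$ as a bounded map with a $\nu$-loss in each weighted space; and (iv) Guo's decay for $S_G(t)$ on $\mbox{SR}\cro{\cdot}$. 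Equipping the product space with the time-weighted norms $\sup_{t\ge 0}e^{\lambda_0 t}\norm{\cdot}$ and working on a small ball of radius $O(\eta_0)$, I would close a contraction, which yields existence, and---since $m\leq \langle v\rangle^\beta\mu^{-1/2}$, so that $\norm{f_2}_{L^\infty_{x,v}(m)}\leq \norm{f_2}_{L^\infty_{x,v}(\langle v\rangle^\beta\mu^{-1/2})}$---the exponential decay of item~(1) at the common rate $\lambda_0$. Uniqueness in $\mbox{SR}\cro{L^\infty_{x,v}(m)}$ would follow from the same bilinear estimates applied to the difference of two solutions together with a Gr\"onwall/continuation argument, using that the flow preserves $\Pi_G f = 0$.

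For the continuity and positivity of item~(2) I would exploit that, on an analytic strictly convex $\Omega$, the specular semigroup $S_{G_\nu}(t)$ admits the \emph{explicit} backward-characteristic (billiard) representation, whose characteristics depend continuously on $(t,x,v)$ away from the grazing set $\Lambda_0$; this is precisely where analyticity and strict convexity \eqref{strictlyconvex} enter. Propagating this continuity through each Duhamel term, and using the uniform (exponentially weighted) convergence of the iteration, one gets that $f$, hence $F=\mu+f$, is continuous on $[0,+\infty)\times\pa{\bar\Omega\times\R^3-\Lambda_0}$ whenever $F_0$ is. The positivity $F\geq 0$ then follows from this continuity together with the author's quantitative lower-bound results \cite{Bri2}\cite{Bri5}, for which continuity of the mild solution away from $\Lambda_0$ is exactly the required hypothesis.

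The main obstacle is step~(i): because $S_{G_\nu}(t)$ is not strongly continuous and $G_\nu$ is not hypodissipative in the $L^\infty_{x,v}(m)$ setting (unlike on the torus \cite{GMM}), one cannot simply quote an abstract perturbation theorem to obtain the decay of $S_B(t)$. The decay must instead be extracted from the explicit specular representation of $S_{G_\nu}(t)$ combined with a Duhamel series for the perturbation $B-G_\nu$, and the resulting mixing constants have to be controlled sharply. In the polynomial case this sharpness is exactly what forces the threshold $k>1+\gamma+16\pi b_\infty/l_b$: the gain coming from the weight must beat the loss coming from the non-truncated part of $K$ inside $B$, measured by the ratio $b_\infty/l_b$. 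Making all these constants line up---so that $A$ still regularizes into the strong space while $B$ retains a strictly negative rate---is the delicate heart of the argument.
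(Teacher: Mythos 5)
Your proposal follows essentially the same route as the paper: the $A^{(\delta)}+B^{(\delta)}$ splitting into the coupled system \eqref{introf1}--\eqref{introf2}, the Duhamel iteration around $S_{G_\nu}(t)$ with $B^{(\delta)}_2$ absorbed as a small perturbation (quantified by the same threshold $k>1+\gamma+16\pi b_\infty/l_b$ making $\Delta_{m,\infty}(\delta)<1$), Guo's $L^2$--$L^\infty$ theory for the $f_2$ equation, a perturbative bootstrap for uniqueness, continuity propagated along the specular characteristics through the approximating scheme, and positivity from \cite{Bri2}\cite{Bri5}. The only cosmetic difference is that you phrase the first equation via a semigroup $S_B(t)$ before correctly reducing it to the $S_{G_\nu}$-Duhamel series, which is exactly how the paper proceeds.
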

\bigskip

\begin{remark}
We make a few comments about the previous result.
\begin{itemize}
\item The analyticity and the strict convexity of the domain are required to ensure that one can use the  control of the $L^\infty_{x,v}\pa{\langle v \rangle^\beta \mu^{-1/2}}$ theory by the $L^2_{x,v}\pa{\mu^{-1/2}}$ developed in \cite{Gu6} (see Remark \ref{rem:analytic}). Moreover, the methods are constructive starting from \cite{Gu6}. The constants are thus not explicit since the methods in \cite{Gu6} are not. Obtaining a constructive theory in the latter spaces, thus getting rid of the strong assumption of analyticity, would be of great interest;
\item The positivity of $F$ is actually quantified \cite{Bri2}\cite{Bri5} and is an explicit Maxwellian lower bound. We refer to Subsection $\ref{subsec:positivity}$ for more details;
\item The uniqueness is obtained in a perturbative setting, \textit{i.e.} on the set of function of the form $F = \mu +f$ with $f$ small. If the uniqueness of solutions to the Boltzmann equation in $L^1_vL^\infty_x\pa{\langle v \rangle^{2+0}}$ is known on the torus \cite{GMM} a uniqueness theory outside the perturbative regime remains, at this date, an open problem in the case of bounded domains. 
\end{itemize}
\end{remark}
\bigskip

We obtain a similar result in the case of Maxwellian diffusion boundary condition. As explained in the introduction, the assumptions on the domain $\Omega$ are far less restrictive. We however define two new sets. 
\par For $(x,v)$ define the backward exit time by $t_{b}(x,v) = \inf\br{t>0,\: x-tv \notin \Omega}$ and the footprint $x_b(x,v) = x-t_b(x,v)v$. Define the singular grazing boundary 
$$\Lambda_0^{(S)} = \br{(x,v)\in \Lambda_0, \quad t_b(x,v)\neq 0 \:\mbox{or}\:t_b(x,-v)\neq 0 }$$
and the discontinuity set
$$\mathfrak{D} = \Lambda_0 \cup \br{(x,v)\in\bar{\Omega}\times\R^3, \quad (x_b(x,v),v)\in\Lambda_0^{(S)}}$$

\bigskip
\begin{theorem}\label{theo:cauchyMD}
Let $\Omega$ be a $C^1$ connected bounded domain and $B$ be a collision kernel of the form $\eqref{B}$ with hard potential $\eqref{hardpot}$ and angular cutoff $\eqref{cutoff}$.   Let $m=e^{\kappa\abs{v}^\alpha}$ with $\kappa >0$ and $\alpha$ in $(0,2)$ or $m=\langle v \rangle^k$ with 
$$k>1+\gamma +\frac{16\pi b_\infty}{l_b}$$
where $b_\infty$ and $l_b$ were defined by $\eqref{cutoff}$ and $\eqref{lb}$.
\\ Then there exists $\eta_0$, $C_0$ and $\lambda_0 >0$ such that if $F_0 = \mu + f_0$ with $f_0$ in $\mbox{MD}\cro{L^\infty_{x,v}\pa{m}}$ satisfies
$$\norm{f_0}_{L^\infty_{x,v}\pa{m}}\leq \eta_0$$
then there exists a unique $F = \mu + f$ with $f$ in $L^\infty_{[0,+\infty)}\pa{\mbox{MD}\cro{L^\infty_{x,v}\pa{m}}}$ solution to the Boltzmann equation $\eqref{BE}$ with Maxwellian diffusion boundary conditions $\eqref{MD}$. Moreover, the following holds
\begin{enumerate}
\item $$\forall t \geq 0, \quad \norm{f(t)}_{L^\infty_{x,v}\pa{m}} \leq C_0 e^{-\lambda_0 t}\norm{f_0}_{L^\infty_{x,v}\pa{m}};$$
\item if $F_0\geq 0$ is continuous on $\bar{\Omega}\times\R^3-\Lambda_0$ and satisfies the Maxwellian diffusion boundary condition then $F\geq 0$ is continuous on $[0,+\infty) \times \pa{\bar{\Omega}\times\R^3-\mathfrak{D}}$.
\end{enumerate}
\end{theorem}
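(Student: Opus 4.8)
The plan is to adapt the enlargement-by-factorization strategy of \cite{GMM} to the diffusive boundary condition \eqref{MD}, in the analytic and nonlinear form announced in the introduction. First I would split the linearized operator as $L = A + B$, where $A$ is a velocity-truncated version of the kernel operator $K$ depending on a large cut-off parameter, chosen so that $A$ is bounded from $L^1_v L^\infty_x(m)$ into Guo's space $L^\infty_{x,v}(\langle v \rangle^\beta \mu^{-1/2})$, and $B = L - A = -\nu + (K-A)$. Writing the solution as $f = f_1 + f_2$ governed by the system \eqref{introf1}--\eqref{introf2}, the generator of the $f_1$-equation is $G_\nu + (K-A)$, so that $K-A$ plays the role of an operator-norm-small perturbation of $G_\nu = -v\cdot\nabla_x - \nu$ once the cut-off is large. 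The rough component $f_1$ then lives in $L^\infty_{x,v}(m)$, while the regular component $f_2$, driven by the smoothing source $A f_1$, is handled in Guo's space.

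The core of the argument is the construction and exponential decay of the semigroup $S_{G_\nu}(t)$ under diffusive boundary conditions in $L^\infty_{x,v}(m)$. I would build it by backward characteristics: along $x - sv$ the equation integrates the multiplier $e^{-\nu(v)s}$, and at each boundary hit the condition \eqref{MD} re-emits the flux according to $\mu$, so that $S_{G_\nu}(t)$ is an implicit sum over all admissible bouncing trajectories issued from $(t,x,v)$. Following \cite{Gu6}, the quantitative input is that the set of trajectories failing to reach the initial plane $\br{t=0}$ within a fixed number of collisions has small measure; combined with the pointwise decay $e^{-\nu_0 t}$ coming from \eqref{nu0nu1}, this yields $\norm{S_{G_\nu}(t)}_{L^\infty_{x,v}(m)} \lesssim e^{-\lambda t}$ for some $\lambda > 0$. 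Crucially, because $S_{G_\nu}(t)$ is not strongly continuous and the torus hypodissipativity of \cite{GMM} is unavailable, I would also establish the same exponential decay in the mixed norm $L^1_v L^\infty_x(m)$, which is the new transport estimate flagged in the introduction.

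With the semigroup in hand I would treat $K - A$ perturbatively through the Duhamel iteration $S_B(t) = \sum_{n \ge 0} \mathcal{I}_n(t)$, with $\mathcal{I}_0 = S_{G_\nu}$ and $\mathcal{I}_{n+1}(t) = \int_0^t S_{G_\nu}(t-s)(K-A)\,\mathcal{I}_n(s)\,ds$. The difficulty specific to diffusion, and the step I expect to be the main obstacle, is that $S_{G_\nu}$ has no simple boundary contraction as in the specular case, so bounding $(K-A)$ sandwiched between two diffusive semigroups directly in $L^\infty_{x,v}(m)$ is too lossy. I would instead prove a \textbf{mixing estimate}: passing through the intermediate space $L^1_v L^\infty_x(m)$, the operator $K - A$ gains integrability and maps with an arbitrarily small constant when the cut-off is large, so that the series converges and $\norm{S_B(t)}_{L^\infty_{x,v}(m)} \lesssim e^{-\lambda' t}$ with $\lambda' > 0$ below the bare rate $\lambda$. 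Once this is secured, equation \eqref{introf1} is solved by a contraction on a small ball in $L^\infty_{[0,+\infty)}(L^\infty_{x,v}(m))$, using the smallness $\norm{f_0}_{L^\infty_{x,v}(m)} \le \eta_0$ and the bilinear estimate for $Q$ in that weight.

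Finally, inserting the source $A f_1 \in L^\infty_{x,v}(\langle v\rangle^\beta \mu^{-1/2})$ into \eqref{introf2} and invoking the $L^2$--$L^\infty$ theory of \cite{Gu6,EGKM}, valid for diffusive boundary conditions on $C^1$ connected domains, solves the $f_2$-equation with exponential decay in Guo's space; a joint contraction on the pair $(f_1,f_2)$ closes existence, uniqueness and the decay of item (1), the constraint $\Pi_G f = 0$ being preserved by the mass conservation \eqref{conservationmass}. For item (2), continuity of $F$ propagates along the backward characteristics and across the diffusive boundary integral \eqref{MD}, the set $\mathfrak{D}$ being precisely where this fails because the characteristic accumulates on the singular grazing set $\Lambda_0^{(S)}$; the positivity $F \ge 0$, together with its explicit Maxwellian lower bound, is then imported from \cite{Bri2,Bri5} using this continuity and the boundary condition.
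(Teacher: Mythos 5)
Your proposal follows essentially the same route as the paper: the $A+B$ splitting of \cite{GMM} with $B=G_\nu+(K-A)$, the implicit trajectory representation and exponential decay of $S_{G_\nu}(t)$ under diffusive reflection in both $L^\infty_{x,v}(m)$ and $L^1_vL^\infty_x(m)$ via the small-measure estimate on non-returning trajectories, the mixing estimate $L^\infty_{x,v}(m)\to L^1_vL^\infty_x$ with arbitrarily small constant to close the Duhamel iteration for $f_1$, the Guo--Esposito--Kim--Marra theory for $f_2$, and the cited continuity and lower-bound results for item (2). This matches the paper's proof in structure and in every key lemma, so no further comparison is needed.
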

\bigskip

\begin{remark}
We first emphasize that the latter Theorem is obtained with constructive arguments and the constants $\eta_0$, $C_0$ and $\lambda_0 >0$ can be computed explicitly in terms of $m$ and the collision operator. Then we make a few comments.
\begin{itemize}
\item In the case of a convex domain $\mathfrak{D} = \Lambda_0$ (see \cite{EGKM} Lemma $3.1$);
\item The rate of trend to equilibrium $\lambda_0$ can be chosen as close as one wants from the optimal one in the $L^2\pa{\mu^{-1/2}}$ framework;
\item The positivity of $F$ can be quantified in the case $\Omega$ convex \cite{Bri5}. We obtain an explicit Maxwellian lower bound, see Subsection $\ref{subsec:positivity}$;
\item Here again the uniqueness is obtained only in a perturbative setting.
\end{itemize}
\end{remark}
\bigskip

\section{Preliminaries: semigroup generated by the collision frequency}\label{sec:semigroupcollisionfrequency}

For general domains $\Omega$, the Cauchy theory in $L^p_{x,v}$ ($1\leq p <+\infty$) of equations of the type
$$\partial_t f + v\cdot\nabla_x f = g$$
with boundary conditions
$$\forall (x,v) \in \Lambda^-, \quad f(t,x,v) = P(f)(t,x,v),$$
where $\func{P}{L^p_{\Lambda^+}}{L^p_{\Lambda^-}}$ is a bounded linear operator, is well-defined in $L^p_{x,v}$ when $\norm{P} <1$ \cite{BePro}. The specific case $\norm{P} = 1$ can still be dealt with (\cite{BePro} Section $4$) but if the existence of solutions in $L^p_{x,v}$ can be proved, the uniqueness is not always given unless one can prove that the trace of $f$ belongs to $L^2_{\mbox{\scriptsize{loc}}}\pa{\R^+;L^p_{x,v}\pa{\Lambda}}$.
\par For specular reflections or Maxwellian diffusion boundary conditions, the boundary operator $P$ is of norm exactly one and the general theory fails. If this generates difficulties for the full linear operator $L$, we can overcome this problem in the case of a mere multiplicative function $g=\nu(v)$.

\bigskip
This section is devoted to proving that the following operator
$$G_\nu = -\nu(v) -v\cdot\nabla_x$$
generates a semigroup $S_{G_\nu}(t)$ in two different frameworks: the specular reflections and the Maxwellian diffusion . We prove that $G_\nu$ along with either specular reflections or Maxwellian diffusion generates a semigroup with exponential decay in $L^\infty_{x,v}$ spaces endowed with polynomial and stretch exponential weights.
\par The $L^\infty_v$ setting is essential for the existence of solutions in the case of specular reflections since one needs to control the solution along the characteristic trajectories (see Remark $\ref{rem:whyLinftyx}$) whereas we show that in the case of diffusion it also generates a semigroup in weighted $L^1_vL^\infty_x$.

\bigskip
\par We emphasize here that such a study was done in \cite{Gu6} in $L^\infty_{x,v}\pa{m(v)\mu^{-1/2}}$. We extend his proofs to more general and less restrictive weights as well as to a new $L^1_vL^\infty_x $ setting in the diffusive setting.
\par Regarding existence and uniqueness, the methods are standard in the study of linear equations in bounded domains \cite{BePro} for specular reflections and rely on an approximation of the boundary operator $P$ ($\norm{P}= 1$). The case of Maxwellian diffusion is different since the norm of the boundary operator heavily depends on the weight function. In the case $L^\infty_{x,v}$ we prove that we can use the arguments developed in \cite{Gu6} whereas the new framework $L^1_vL^\infty_x$ requires new estimates to obtain weak converge which does not come directly from uniform boudnedness in $L^1$.
\par The exponential decay is more intricate and requires a description of the characteristic trajectories for the free transport equation with boundary conditions to obtain explicit formula in terms of $f_0$ for $S_{G_\nu}(t)f_0$. Although this is possible in the case of specular reflections, such an explicit form is not known for the Maxwellian diffusion and it has to be dealt with using equivalent norms.
\bigskip


\subsection{The case of specular reflections}\label{subsec:GnuSR}

As we shall see, the case of specular reflections in a weighted Lebesgue space is equivalent to the same problem with a weight $1$. The study in $L^\infty_{x,v}$ has been done in \cite{Gu6} Lemma $20$ but we write it down for the sake of completeness.

\bigskip
\begin{prop}\label{prop:semigroupGnuSR}
Let $m=e^{\kappa\abs{v}^\alpha}$ with $\kappa >0$ and $\alpha$ in $(0,2)$ or $m=\langle v \rangle^k$ with $k$ in $\N$; let $f_0$ be in $L^\infty_{x,v}\pa{m}$. Then there exists a unique solution $S_{G_\nu}(t)f_0 \in L^\infty_{x,v}\pa{m}$ to 
\begin{equation}\label{eqGnu}
\cro{\partial_t + v\cdot\nabla_x + \nu(v)}\pa{S_{G_\nu}(t)f_0}=0
\end{equation}
such that $\restr{\pa{S_{G_\nu}(t)f_0}}{\Lambda} \in L^\infty_\Lambda\pa{m}$ and satisfying the specular reflections $\eqref{SR}$ with initial data $f_0$. Moreover it satisfies
$$\forall t\geq 0, \quad \norm{S_{G_\nu}(t)f_0}_{L^\infty_{x,v}\pa{m}} \leq e^{-\nu_0t}\norm{f_0}_{L^\infty_{x,v}\pa{m}},$$
with $\nu_0 = \inf\br{\nu(v)}>0$.
\end{prop}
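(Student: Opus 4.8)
The plan is to exploit the fact that both the collision frequency $\nu$ and the weight $m$ are functions of $\abs{v}$ alone, hence are independent of $t$ and $x$ and are left invariant by the specular reflection $\mathcal{R}_x$ (which preserves speed: $\abs{\mathcal{R}_x(v)}=\abs{v}$). This lets me strip off both the weight and the damping by a single change of unknown, reducing $\eqref{eqGnu}$ to the free transport equation, for which the specular billiard flow gives everything at once. This is the meaning of the remark that the weighted problem is equivalent to the one with weight $1$.

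Concretely, the first step is to set $g(t,x,v) = m(v)e^{\nu(v)t}\pa{S_{G_\nu}(t)f_0}(x,v)$. Since $m$ and $e^{\nu t}$ do not depend on $x$, a direct computation shows that $\eqref{eqGnu}$ for $f=S_{G_\nu}(t)f_0$ is equivalent to the free transport equation $\partial_t g + v\cdot\nabla_x g = 0$ for $g$. Because $m(\mathcal{R}_x(v))=m(v)$ and $\nu(\mathcal{R}_x(v))=\nu(v)$, the specular reflection condition $\eqref{SR}$ for $f$ is equivalent to the same condition for $g$, and one has $g(0)=mf_0\in L^\infty_{x,v}$ with $\norm{g(0)}_{L^\infty_{x,v}}=\norm{f_0}_{L^\infty_{x,v}\pa{m}}$. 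Thus the weighted damped problem for $f$ is equivalent to the unweighted free transport problem for $g$, and it suffices to treat the latter.

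The second step is to solve free transport with specular reflections. Here I would invoke the backward billiard characteristics: tracing $(t,x,v)$ backwards along straight lines while reflecting the velocity by $\mathcal{R}$ at each boundary hit, one gets $g(t,x,v)=g_0\pa{X_{\mathrm{cl}}(0;t,x,v),V_{\mathrm{cl}}(0;t,x,v)}$. The billiard flow is a measure-preserving bijection of $\Omega\times\R^3$ defined for almost every $(x,v)$, which yields existence, uniqueness, the conservation $\norm{g(t)}_{L^\infty_{x,v}}=\norm{g_0}_{L^\infty_{x,v}}$, and the control of the boundary trace in $L^\infty_\Lambda$. For existence and uniqueness one may alternatively follow the standard route of \cite{BePro}, approximating the norm-one boundary operator $P$ by $(1-\eps)P$ and passing to the limit $\eps\to 0$, the uniform bound making the limit harmless.

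Finally, undoing the change of unknown gives the decay. Since
$$m(v)\abs{f(t,x,v)} = e^{-\nu(v)t}\abs{g(t,x,v)} \leq e^{-\nu_0 t}\norm{g(t)}_{L^\infty_{x,v}} = e^{-\nu_0 t}\norm{f_0}_{L^\infty_{x,v}\pa{m}},$$
taking the supremum over $(x,v)$ yields the claimed estimate. The main obstacle is not the estimate itself --- which becomes transparent once the reduction to free transport is made --- but the rigorous control of the backward billiard characteristics: that they are well-defined for almost every $(x,v)$, do not accumulate infinitely many bounces in finite time, and avoid the grazing set $\Lambda_0$. This is precisely where the analyticity and strict convexity of $\Omega$ enter, through the trajectory estimates already established in \cite{Gu6}, which I would cite rather than reprove.
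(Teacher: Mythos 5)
Your reduction to the unweighted problem, your existence argument, and your decay estimate all match the paper's proof in essentials: the paper likewise divides out the weight (using that $m(\mathcal{R}_x(v))=m(v)$), constructs the solution by replacing the specular boundary operator with $(1-\eps)$ times itself and iterating an in-flow problem, and obtains the decay from the fact that $e^{\nu(v)t}h$ is constant along the backward billiard characteristics of \cite{Bri2}; your extra factor $e^{\nu(v)t}$ in the change of unknown is the same observation packaged slightly differently. The one genuine gap is uniqueness. You derive it from the statement that the billiard flow is an a.e.-defined measure-preserving bijection, or alternatively from \cite{BePro}. Neither closes the argument: the formula $g(t,x,v)=g_0\pa{X_{\mathrm{cl}}(0;t,x,v),V_{\mathrm{cl}}(0;t,x,v)}$ exhibits \emph{a} solution, but to conclude uniqueness you must show that \emph{every} $L^\infty$ solution with trace in $L^\infty_\Lambda$ is constant along characteristics, which is precisely the delicate trace issue for transport in a bounded domain; and, as the paper recalls at the start of Section~\ref{sec:semigroupcollisionfrequency}, the theory of \cite{BePro} gives existence but not uniqueness when the boundary operator has norm exactly one, unless one separately controls the trace. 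The paper supplies the missing step by a short energy argument: it sets $h=\langle v\rangle^{-\beta}f$ with $\beta$ large enough that $h\in L^2_{x,v}$ and $\restr{h}{\Lambda}\in L^2_\Lambda$, applies the divergence theorem, notes that the boundary term $\int_\Lambda h^2\,(v\cdot n)\,dS\,dv$ vanishes by specular reflection, and concludes by Gr\"onwall. You need this (or an equivalent trace-based argument) to make the uniqueness claim rigorous.

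A minor misattribution: analyticity and strict convexity of $\Omega$ are \emph{not} used here. The backward characteristics, the a.e. finiteness of the number of rebounds in finite time, and the avoidance of the grazing set are established in \cite{Bri2} for $C^1$ bounded domains, which is why Proposition~\ref{prop:f1SR} is stated in that generality. Analyticity and strict convexity enter only later, for the $L^2$--$L^\infty$ theory of the full linearized operator with specular reflections (Theorem~\ref{theo:semigroupLinfty}).
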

\bigskip

\begin{proof}[Proof of Proposition $\ref{prop:semigroupGnuSR}$]

The proof will be done in three steps: uniqueness, existence and finally the exponential decay.
\par We start by noticing that if $f$ belongs to $L^\infty_{x,v}\pa{m}$ and satisfies
$$\cro{\partial_t + v\cdot\nabla_x + \nu(v)}f(t)=0$$
with specular reflections boundary condition then $h= m(v) f$ is also a solution with specular reflections and $h$ belongs to $L^\infty_{x,v}$ and its restriction on $\Lambda$ belongs to $L^\infty _\Lambda$. Thus, we only prove the proposition in the case $m=1$.

\bigskip
\textbf{Step 1: Uniqueness.} Assume that there exists such a solution $f$ in $L^\infty_{x,v}$.
\par Consider the function $h(t,x,v) = \langle v \rangle^{-\beta} f(t,x,v)$ where $\beta$ is chosen such that
\begin{equation}\label{beta}
\langle v \rangle^{-2\beta} \pa{1+\abs{v}} \in L^1_v.
\end{equation}
A mere Cauchy-Schwarz inequality shows that $h$ is in $L^2_{x,v}$ and $\restr{h}{\Lambda} \in L^2_\Lambda$. Moreover, $h$ satisfies the same differential equality as $f$. Multiply $\eqref{eqGnu}$ by $h$ and integrating in $x$ and $v$, we can use the divergence theorem on $\Lambda$ and the fact that $\nu(v)\geq \nu_0>0$: 
\begin{eqnarray}
\frac{1}{2}\frac{d}{dt}\norm{h}^2_{L^2_{x,v}} &=& \int_{\Omega\times\R^3}h(t,x,v)\cro{-v\cdot\nabla_x - \nu(v)}h(t,x,v)\:dxdv\nonumber
\\&=& -\int_{\Omega\times\R^3} v\cdot\nabla_x\pa{h^2}\:dxdv - \norm{\nu(v)h}^2_{L^2_{x,v}}\nonumber
\\&\leq& -\int_{\Lambda}\abs{h(t,x,v)}^2\pa{v\cdot n(x)}\:dS(x)dv - \nu_0\norm{h}^2_{L^2_{x,v}}.\label{equniquenessL2}
\end{eqnarray}
The integral on $\Lambda$ is null since $h$ satisfies the specular reflections and therefore we can apply a Gr\"onwall lemma to $\norm{h}_{L^2_{x,v}}$ and obtain the uniqueness for $h$ and thus for $f$.

\bigskip
\textbf{Step 2: Existence.}  Existence is proved by approximating the specular reflections in order to get a decrease at the boundary and be in the case $\norm{P} <1$.

\bigskip
Let $f_0$ be in $L^\infty_{x,v}$.
\par For any $\eps$ in $(0,1)$ we consider the following differential problem with $h_\eps \in L^\infty_{x,v}$ and $\restr{h_\eps}{\Lambda} \in L^\infty _\Lambda$:
\begin{equation}\label{eqdiffGnu}
\cro{\partial_t + v\cdot\nabla_x +\nu} h_\eps =0,\quad h_\eps(0,x,v)= f_0(x,v)
\end{equation}
with the absorbed specular reflections boundary condition
$$\forall (t,x,v) \in \R^3\times \Lambda^-, \quad h_\eps(t,x,v) = (1-\eps)h_\eps(t,x,\mathcal{R}_x(v))).$$
This problem has a unique solution. Indeed we construct the following iterative scheme
\begin{equation*}
\left\{\begin{array}{l}\disp{\cro{\partial_t + v\cdot\nabla_x +\nu} h^{(l+1)}_\eps =0,\quad h^{(l+1)}_\eps(0,x,v)= f_0(x,v)}\vspace{2mm} \\  \disp{\restr{h^{(0)}_\eps}{\Lambda^+}=0\quad\mbox{and}\quad\forall t > 0, \:\forall (x,v)\in\Lambda^-,\: h^{(l+1)}_\eps(t,x,v) = h^{(l)}_\eps\pa{t,x,\mathcal{R}_x(v)}.}\end{array}\right.
\end{equation*}
The functions $h^{(l)}_\eps$ are well-defined because the boundary condition is now an in-flow boundary condition to which existence is known (see \cite{Gu6} Lemma $12$ for instance).

\bigskip
We know that along the characteristic trajectories (straight lines in between two rebounds, see \cite{Bri2} Appendix for rigorous construction of characteristics in a $C^1$ bounded domain) $e^{\nu t}h^{(l+1)}_\eps$ is constant. We denote the first backward exit time by
\begin{equation}\label{tmin}
t_{min}(x,v) = \max\br{t\geq 0; \quad x-sv \in \bar{\Omega},\: \forall 0\leq s \leq t}.
\end{equation}

\par Consider $(x,v) \notin \Lambda_0\cup\Lambda^-$, then $t_{min}(x,v)> 0$. If $t_{min}(x,v) \geq t$ then the backward characteristic line starting at $(x,v)$ at time $t$ reaches the initial plan $\br{t=0}$ whereas if $t_{min}(x,v) < t$ it hits the boundary at time $t-t_{min}(x,v)$ at $(x-t_{min}(x,v)v,v)\in\Lambda^-$, where we can apply the boundary condition. Therefore we have the following representation for $h^{(l+1)}_\eps$ for all $t\geq 0$ and for almost all $(x,v) \notin \Lambda_0\cup\Lambda^-$,
\begin{equation}\label{representationGnueps}
\begin{split}
h^{(l+1)}_\eps(t,x,v) =& \mathbf{1}_{t_{min}(x,v) \geq t}\: e^{-\nu(v)t}f_0(x-tv,v) 
\\&+ \mathbf{1}_{t_{min}(x,v) < t}\: (1-\eps)e^{-\nu(v)t_{min}(x,v)}\:\restr{h^{(l)}_\eps}{\Lambda^+}(t-t_{min},x_1,v_1),
\end{split}
\end{equation}
where we defined $x_1 = x-t_{min}(x,v)v$ and $v_1 = \mathcal{R}_{x_1}(v)$.

\bigskip
For all $t\geq 0$, for all $(x,v)\notin\Lambda_0\cup\Lambda^-$ and for all $l\geq 1$,
\begin{equation}\label{cauchysequence}
\abs{h^{(l+1)}_\eps(t,x,v) - h^{(l)}_\eps(t,x,v)} \leq (1-\eps)\Big|\restr{h^{(l)}_\eps}{\Lambda^+}(t,x_1,v_1) - \restr{h^{(l-1)}_\eps}{\Lambda^+}(t,x_1,v_1)\Big|.
\end{equation}
Thus, considering $(x,v)\in \Lambda^+$ we show that $\pa{\restr{h_\eps^{(l)}}{\Lambda^+}}_{l\in\N}$ is a Cauchy sequence in $L^\infty_t L^\infty _{\Lambda^+}$. Then by the boundary condition it implies that $\pa{\restr{h_\eps^{(l)}}{\Lambda^-}}_{l\in\N}$ is also a Cauchy sequence in $L^\infty_t L^\infty _{\Lambda^-}$. Finally from $\eqref{cauchysequence}$, $\pa{h_\eps^{(l)}}_{l\in\N}$ is also a Cauchy sequence in $L^\infty_t L^\infty_{x,v}$.

\bigskip
\begin{remark}\label{rem:whyLinftyx}
The $L^\infty$ framework is essential to obtain the control of $\abs{h^{(l+1)}_\eps - h^{(l)}_\eps}$ by the control of $\abs{h^{(l)}_\eps - h^{(l-1)}_\eps}$ at $(x_1,v_1)$. Any other $L^p_x$ spaces would have required to change the variable $v_1 \mapsto v$ to which a computation of the jacobian is still a very hard problem (see \cite{Gu6} Subsection $4.3.1$) and can be $0$.
\end{remark}
\bigskip

\par We obtain existence of $h_\eps$ solution to $\eqref{eqdiffGnu}$ by letting $l$ tend to infinity. The latter solution is unique since its restriction on the boundary belongs to $L^\infty _{\Lambda}$. Indeed, we can apply the divergence theorem as in $\eqref{equniquenessL2}$ which yields uniqueness because the integral on $\Lambda$ is positive since $1-\eps <1$.

\bigskip
It only remains to show that one can indeed take the limit of $\pa{h_\eps}_{\eps>0}$ when $\eps$ goes to zero.
\par We remind that we chose $\restr{h^{(0)}_\eps}{\Lambda^+}=0$ and therefore by $\eqref{representationGnueps}$ applied to $(x,v) \in \Lambda^+$:
$$\abs{h^{(l+1)}_\eps(t,x,v)} \leq \left\{\begin{array}{l} \disp{\norm{f_0}_{L^\infty_{x,v}} \quad\mbox{if}\quad t\leq t_{min}(x,v)} \vspace{2mm}\\ \disp{\norm{\restr{h^{(l)}_\eps}{\Lambda^+}}_{L^\infty_{\Lambda^+}} \quad\mbox{if}\quad t>t_{min}(x,v).}   \end{array} \right.$$
The latter further implies
\begin{equation}\label{uniformepsLambda+SR}
\forall l\geq 0,\:\forall t \geq 0, \quad \norm{h^{(l)}_\eps(t,\cdot,\cdot)}_{L^\infty _{\Lambda^+}}\leq \norm{f_0}_{L^\infty_{x,v}}.
\end{equation}
The boundary condition then implies
\begin{equation}\label{uniformepsLambda-SR}
\forall l\geq 0,\:\forall t \geq 0, \quad \norm{h^{(l)}_\eps(t,\cdot,\cdot)}_{L^\infty _{\Lambda^-}}\leq \norm{f_0}_{L^\infty_{x,v}},
\end{equation}
and finally the representation of $h^{(l+1)}_\eps$ $\eqref{representationGnueps}$ combined with $\eqref{uniformepsLambda+SR}$ yields
\begin{equation}\label{uniformepsOmega}
\forall t \geq 0,\quad \norm{h^{(l)}_\eps(t,\cdot,\cdot)}_{L^\infty_{x,v}}\leq \norm{f_0}_{L^\infty_{x,v}}.
\end{equation}
\par From the uniform controls $\eqref{uniformepsLambda+SR}-\eqref{uniformepsLambda-SR}-\eqref{uniformepsOmega}$ one can take a weak-* limit of $h_\eps$ in  $L^\infty_{t,x,v}$ and of $\restr{h_\eps}{\Lambda}$ in $L^\infty_tL^\infty_\Lambda$ and such a limit is solution to our initial problem.

\bigskip
\textbf{Step 3: Exponential decay.} We use the study of backwards characteristic trajectories of the transport equation in $C^1$ bounded domains derived in \cite{Bri2}. 
\par For $(x,v) \notin \Lambda_0$, the backwards trajectory starting from $(x,v)$ are straight lines in between two consecutive rebounds. We define a sequence of rebounds $(t_i,x_i,v_i) = (t_i(x,v),x_i(x,v),v_i(x,v))$ with $(t_0,x_0,v_0)=(t,x,v)$ that are the footprints (and time) of the backward trajectories of the transport equation in $\Omega$ starting at $(x,v)$ at time $t$ (see \cite{Bri2} Proposition $A.8$ and Definition $A.6$). Moreover, the sequence $(t_i,x_i,v_i)$ is almost always well defined (countably many rebounds) and finite for any given $t\geq 0$ (see \cite{Bri2} Proposition $A.4$).
\par With this description of characteristics we can iterate the process initiated in $\eqref{representationGnueps}$. This gives that $e^{\nu(v)t}h_\eps$ is constant along characteristics and
\begin{equation}\label{representationGnuSR}
h_\eps(t,x,v) = \sum\limits_{i}\mathbf{1}_{[t_{i+1},t_i)}(0)\cro{1-\eps}^i e^{-\nu(v)t}f_0(x_i-t_iv_i,v_i),
\end{equation}
for almost every $(x,v)\in\bar{\Omega}\times\R^3-\Lambda_0$. Note that we used that $\nu(v_i)=\nu(v)$ because $\nu$ is invariant by rotations. Moreover, the summation is almost always finite and when it is there is only one term (see \cite{Bri2} Appendix). For this $i^{th}$ term we have 
$$\abs{h_\eps(t,x,v)} \leq e^{-\nu_0 t} \abs{f_0(x_i-t_iv_i,v_i)}\leq e^{-\nu_0 t} \norm{f_0}_{l^\infty_{x,v}},$$
which is the desired exponential decay by taking the weak-* limit of $h_\eps$ in $L^\infty_{t,x,v}$.
\end{proof}
\bigskip


\subsection{The case of Maxwellian diffusion}\label{subsec:GnuMD}

The diffusion operator on the boundary does not have a norm equals to one, the latter norm heavily depends on the weight of the space. The exponential decay is delicate since we do not have an explicit representation of $S_{G_\nu}(t)$ along characteristic trajectories. One needs to control the characteristic trajectories that do not reach the plane $\br{t=0}$ in time $t$. As we shall see, this number of problematic trajectories is small when the number of rebounds is large and so can be controlled for long times.

\bigskip
\begin{prop}\label{prop:semigroupGnuMD}
Let $q\in \br{1,\infty}$ $m=e^{\kappa\abs{v}^\alpha}$ with $\kappa >0$ and $\alpha$ in $(0,2)$ or $m=\langle v \rangle^k$ with $k > 2^{1/q}4^{1-1/q}$; let $f_0$ be in $L^q_vL^\infty_x\pa{m}$. Then there exists a unique solution $S_{G_\nu}(t)f_0 \in L^q_vL^\infty_x\pa{m}$ to 
\begin{equation}\label{eqGnuMD}
\cro{\partial_t + v\cdot\nabla_x + \nu(v)}\pa{S_{G_\nu}(t)f_0}=0
\end{equation}
such that $\restr{\pa{S_{G_\nu}(t)f_0}}{\Lambda} \in L^qL^\infty_\Lambda\pa{m}$ and satisfying the Maxwelian diffusion $\eqref{MD}$ with initial data $f_0$. Moreover it satisfies
$$\forall \nu_0' <\nu_0,\:\exists \: C_{\nu_0'}>0,\:\forall t\geq 0, \quad \norm{S_{G_\nu}(t)f_0}_{L^q_vL^\infty_x\pa{m}} \leq C_{\nu_0'}e^{-\nu'_0t}\norm{f_0}_{L^q_vL^\infty_x\pa{m}},$$
with $\nu_0 = \inf\br{\nu(v)}>0$.
\end{prop}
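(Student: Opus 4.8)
The plan is to reproduce the three-step scheme of Proposition~\ref{prop:semigroupGnuSR} — uniqueness, existence, exponential decay — but with one structural change forced by the boundary condition: for Maxwellian diffusion one cannot reduce to the weight $m=1$. Indeed, multiplication by $m(v)$ commutes with $\partial_t+v\cdot\nabla_x+\nu(v)$, so $g=mf$ still solves $\eqref{eqGnuMD}$, but it does \emph{not} commute with the diffuse reflection $\eqref{MD}$: the transformed unknown satisfies on $\Lambda^-$ the law $g(x,v)=c_\mu m(v)\mu(v)\int_{v_*\cdot n(x)>0}m(v_*)^{-1}g(x,v_*)(v_*\cdot n(x))\,dv_*$, whose operator norm genuinely depends on $m$. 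I would therefore first record that this weighted diffuse operator is bounded on the trace spaces $L^qL^\infty_\Lambda(m)$. This is exactly where the hypothesis $k>2^{1/q}4^{1-1/q}=2^{2-1/q}$ enters: the $L^\infty_v$ endpoint ($q=\infty$) requires $\sup_v[m(v)\mu(v)]\int_{v_*\cdot n>0}(v_*\cdot n)m(v_*)^{-1}\,dv_*<\infty$, i.e. $k>4$ for a polynomial weight (the integrand behaves like $\abs{v_*}^{1-k}$, integrable in $\R^3$ iff $k>4$); the $q=1$ endpoint requires only $k>2$; and the stated threshold is the expression interpolating the two.

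\textbf{Uniqueness.} Because the re-emitted incoming trace in $\eqref{MD}$ always carries the Gaussian factor $\mu(v)$, it is automatically square integrable against $\mu^{-1}$, which lets me run the divergence-theorem estimate $\eqref{equniquenessL2}$ in a weighted $L^2$ and control the boundary integral. The new point, relative to specular reflection where that integral vanished, is that here the boundary term $-\int_\Lambda\abs{h}^2(v\cdot n)$ must be shown to be nonpositive; this is the Darroz\`es--Guiraud dissipation of Maxwellian reflection, namely a Cauchy--Schwarz bound of the re-emitted flux by the outgoing one using the normalisation $c_\mu\int_{v\cdot n>0}\mu(v)(v\cdot n)\,dv=1$. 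A Gr\"onwall argument then gives uniqueness in the $L^\infty_{x,v}$ setting. Alternatively, and more in the spirit of the decay step, uniqueness can be read off the implicit cycle representation below: the difference of two solutions has zero initial data and is therefore supported on trajectories bouncing arbitrarily many times before reaching $\br{t=0}$, a set of vanishing measure. The genuinely new difficulty is the $L^1_vL^\infty_x$ case, where a uniform $L^1$ bound does not yield enough compactness and I expect to argue by a Dunford--Pettis / equi-integrability argument to identify the limiting trace.

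\textbf{Existence.} I would follow Guo and penalise the diffuse reflection so that the boundary operator becomes a strict contraction (for instance inserting a factor $1-\eps$), construct the approximation $h_\eps$ by the in-flow iteration of $\eqref{representationGnueps}$--$\eqref{cauchysequence}$, derive uniform-in-$\eps$ bounds in $L^q_vL^\infty_x(m)$ and on the trace as in $\eqref{uniformepsOmega}$ from the boundedness of the weighted diffuse operator, and let $\eps\to0$. In the $L^\infty$ setting this is a weak-$*$ limit; in the new $L^1_vL^\infty_x$ setting weak compactness is not automatic and must come from the same equi-integrability estimates, one of the announced novelties.

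\textbf{Exponential decay (main obstacle).} This is the heart of the matter and the step I expect to be hardest, since no explicit trajectory formula exists. Iterating the first-bounce identity produces the implicit stochastic-cycle representation: tracing back from $(t,x,v)$ and re-emitting at each wall hit according to the probability measure $d\sigma_i=c_\mu\mu(v_i)(v_i\cdot n)\,dv_i$ on $\br{v_i\cdot n>0}$ (a probability measure by the normalisation of $c_\mu$), one writes $S_{G_\nu}(t)f_0$ as a sum over the number $k$ of bounces of iterated integrals each carrying $e^{-\nu(v)t}$ along the whole path, which already yields $e^{-\nu_0 t}$ through $\eqref{nu0nu1}$. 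The obstruction is that $k$ is unbounded and each re-emission redistributes mass onto large velocities weighted by $m$; the boundedness of the weighted diffuse operator controls the integrand, but to sum the series one needs the measure of stochastic trajectories undergoing many bounces without reaching $\br{t=0}$ in time $t$ to be small. The key estimate, adapting the stochastic-cycle lemma, is that beyond a fixed number of bounces this measure decays like $2^{-ck}$ for some $c>0$, so the tail is summable uniformly in $t$. Choosing the number of cycles proportional to $t$ and absorbing the polynomially many good cycles into an arbitrarily small loss of rate, I expect to obtain, for every $\nu_0'<\nu_0$, the stated bound $\norm{S_{G_\nu}(t)f_0}_{L^q_vL^\infty_x(m)}\leq C_{\nu_0'}e^{-\nu_0't}\norm{f_0}_{L^q_vL^\infty_x(m)}$.
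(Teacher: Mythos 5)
Your architecture matches the paper's: a penalized/absorbed diffuse boundary operator for existence, the implicit stochastic-cycle representation along backward trajectories for decay, the key small-measure estimate (the paper cites \cite{EGKM} Lemma~4.1, taking $p\sim T_0^{5/4}$ bounces so that the set of trajectories not reaching $\br{t=0}$ has measure $\leq (1/2)^{C_2T_0^{5/4}}$), absorption of the remainder and iteration over time intervals of length $T_0$ to trade the constant for an arbitrarily small loss $\nu_0'<\nu_0$ in the rate, and Dunford--Pettis equi-integrability to pass to the limit in $L^1_vL^\infty_x$. Two local steps, however, would not go through as you wrote them. First, uniqueness: your primary route runs the divergence theorem in a weighted $L^2$ and invokes the Darroz\`es--Guiraud inequality, but that inequality is the Cauchy--Schwarz statement for the weight $\mu^{-1}$, and a solution in $L^\infty_{x,v}\pa{\langle v\rangle^k}$ is \emph{not} in $L^2_{x,v}\pa{\mu^{-1/2}}$ (nor is its outgoing trace), so the bulk and boundary integrals you need are not finite; for a general polynomial weight the analogous $L^2(w)$ flux comparison also carries a constant that need not be $\leq 1$. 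The paper instead runs the dissipation in $L^1_{x,v}$ (multiplying by $\mbox{sgn}(f)$), where $\int_{\Lambda^-}\abs{P(f)}\abs{v\cdot n}\leq\int_{\Lambda^+}\abs{f}\abs{v\cdot n}$ follows from the triangle inequality and the normalisation $c_\mu\int_{v\cdot n>0}\mu(v)(v\cdot n)\,dv=1$ alone, and $L^\infty_{x,v}(m)\subset L^1_{x,v}$ for the admissible weights.

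Second, existence of the approximations: you propose to run the in-flow iteration of the specular case with a factor $1-\eps$ in front of the \emph{weighted} diffuse operator, but that iteration contracts only if $(1-\eps)$ times the operator norm of the weighted diffuse reflection on $L^qL^\infty_\Lambda$ is below $1$, and that norm (roughly $\sup_v[c_\mu m\mu]\cdot\int m^{-1}\abs{v\cdot n}\,dv$) can exceed $1$ — boundedness of the operator, which is where your threshold $2^{1/q}4^{1-1/q}$ correctly enters, is not enough. The paper avoids this by the change of unknown $\tilde{h}^{(l)}=(m\mu)^{-1}h^{(l)}$, which converts the boundary condition back into the \emph{standard} diffuse reflection with factor $1-1/l$, of norm exactly $1-1/l<1$, so that \cite{BePro} applies directly; the uniform-in-$l$ bounds then come from the cycle representation and the small-measure lemma, not from a contraction. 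With these two adjustments your proof coincides with the paper's.
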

\bigskip

\begin{proof}[Proof of Proposition $\ref{prop:semigroupGnuMD}$]

We first prove uniqueness, then existence and finally exponential decay of solutions.

\bigskip
\textbf{Step 1: Uniqueness.} Assume that there exists such a solution $f$ in $L^\infty_{x,v}\pa{m}$.  The choice of weight implies
$$m(v)^{-1}\pa{1+\abs{v}} \in L^1_v,$$
and hence $f$ belongs to $L^1_{x,v}$ and $\restr{f}{\Lambda}$ belongs to $L^1_\Lambda$. We can therefore use the divergence theorem and the fact that $\nu(v)\geq \nu_0>0$:
\begin{eqnarray}
\frac{d}{dt}\norm{f}_{L^1_{x,v}} &=& \int_{\Omega\times\R^3}\mbox{sgn}(f(t,x,v))\cro{-v\cdot\nabla_x - \nu(v)}f(t,x,v)\:dxdv\nonumber
\\&=& -\int_{\Omega\times\R^3} v\cdot\nabla_x\pa{\abs{f}}\:dxdv - \norm{\nu(v)f}_{L^1_{x,v}}\nonumber
\\&\leq& -\int_{\Lambda}\abs{f(t,x,v)}\pa{v\cdot n(x)}\:dS(x)dv - \nu_0\norm{f}_{L^1_{x,v}}.\label{equniquenessL1}
\end{eqnarray}

\par Then using the change of variable $ v \mapsto \mathcal{R}_x(v)$, which has jacobian one, we have the boundary conditions $\eqref{MD}$
$$\int_{\Lambda^-}\abs{P_{\Lambda}(f)(x,v)}\abs{v\cdot n(x)}\:dS(x)dv \leq \int_{\Lambda^+}\abs{f(t,x,v_*)}\abs{v_*\cdot n(x)}\:dS(x)dv_*,$$
which implies that the integral on the boundary is positive. Hence uniqueness follows from a Gr\"onwall lemma.
\par The case $q=1$ is dealt with the same way since $L^1_vL^\infty_x\pa{m} \subset L^1_{x,v}$ and also $L^1L^\infty_\Lambda\pa{m} \subset L^1L^\infty_\Lambda$.

\bigskip
\textbf{Step 2: Existence.} Let $f(t,x,v) \in L^q_vL^\infty_x\pa{m}$ be a solution to $\eqref{eqGnuMD}$ satisfying Maxwellian diffusion boundary conditions and $\restr{f}{\Lambda} \in L^qL^\infty_{\Lambda}\pa{m}$. Then $h(t,x,v) = m(v)f(t,x,v)$ belongs to $L^q_vL^\infty_x$ with $\restr{h}{\Lambda} \in L^1L^\infty_{\Lambda}$. Moreover, $h$ satisfies the differential equation $\eqref{eqGnuMD}$ with the following boundary condition for all $t>0$
\begin{equation}\label{boundarynuMD}
\forall (x,v) \in \Lambda^-, \quad h(t,x,v) = c_\mu m(v)\mu(v)\int_{v_*\cdot n(x)>0} h(t,x,v_*)m(v_*)^{-1}\abs{v_*\cdot n(x)}\:dv_*.
\end{equation}
In order to work without weight we will prove the existence of $f\in L^q_vL^\infty_x$ such that $\restr{f}{\Lambda} \in L^qL^\infty_{\Lambda}$ and $f$ satisfies
$$\cro{\partial_t + v\cdot\nabla_x + \nu(v)}f=0$$
with the new diffusive condition $\eqref{boundarynuMD}$. And we will prove exponential decay in $L^q_vL^\infty_x$ for this function $f$.

\bigskip
To prove existence we consider the following iterative scheme with $h^{(l)} \in L^q_vL^\infty_x$ and $\restr{h^{(l)}}{\Lambda} \in L^qL^\infty_\Lambda$:
$$\cro{\partial_t + v\cdot\nabla_x +\nu} h^{(l)} =0,\quad h^{(l)}(0,x,v)= f_0(x,v)\mathbf{1}_{\br{\abs{v}\leq l}}$$
with the absorbed diffusion boundary condition for $t>0$ and $(x,v)$ in $\Lambda^-$
\begin{eqnarray}
h^{(l)}(t,x,v) &=& P^{(l)}_{\Lambda,m}(\restr{h^{(l)}}{\Lambda^+})(t,x,v) \nonumber
\\&=& \pa{1-\frac{1}{l}}c_\mu m(v)\mu(v)\int_{v_*\cdot n(x)>0} h^{(l)}(t,x,v_*)m(v_*)^{-1}\abs{v_*\cdot n(x)}\:dv_*.\label{Plkrho}
\end{eqnarray}
\par Again, multiplying $h^{(l)}$ by the appropriate weight raise the uniqueness of such a $h^{(l)}$ for any given $l$. The existence is proved  \textit{via} $\tilde{h}^{(l)} = m(v)^{-1}\mu^{-1}h^{(l)}$ since it satisfies $\cro{\partial_t -v\cdot\nabla_x -\nu(v)}\tilde{h}^{(l)} =0$ with the boundary condition
$$\tilde{h}^{(l)}(t,x,v) = \pa{1-\frac{1}{l}}\int_{v_*\cdot n(x)>0} \tilde{h}^{(l)}(t,x,v_*)c_\mu \mu(v_*)\abs{v_*\cdot n(x)}\:dv_*$$
and the initial data
$$\norm{\tilde{h}_0^{(l)}}_{L^q_vL^\infty_x} = \norm{m^{-1}\mu^{-1}f_0\mathbf{1}_{\br{\abs{v}\leq l}}}_{L^q_vL^\infty_x}\leq C_{l,m}\norm{f_0}_{L^q_vL^\infty_x}.$$
The boundary operator from $L^qL^\infty_{\Lambda^+}$ to $L^qL^\infty_{\Lambda^-}$ applied to $\tilde{h}^{(l)}$ is bounded by $(1-l^{-1}) <1$ and therefore $\tilde{h}^{(l)}\in L^q_vL^\infty_x$ exists with its restriction in $L^qL^\infty_\Lambda$(see \cite{BePro}). Thus the existence of $h^{(l)}$. The proof that $h^{(l)}$ is indeed in $L^q_vL^\infty_x$ and converges as $l$ tends to infinity will be done within the proof of exponential decay uniformly in $l$.

\bigskip
\textbf{Step 3: Exponential decay.}
As for the specular case $\eqref{representationGnueps}$, we can use the flow of characteristic to obtain a representation of $h^{(l+1)}$ in terms of $f_0$ and $h^{(l)}$. We recall the boundary operator $P^{(l)}_{\Lambda,m}$ $\eqref{Plkrho}$ and for all $(x,v) \notin \Lambda_0\cup\Lambda^-$,
\begin{equation}\label{representationGnuepsMD}
\begin{split}
h^{(l)}(t,x,v) =& \mathbf{1}_{t_1(x,v) \leq 0}\: e^{-\nu(v)t}f_0(x-tv,v)\mathbf{1}_{\br{\abs{v}\leq l}} 
\\&+ \mathbf{1}_{t_1(x,v) > 0}\: e^{-\nu(v)(t-t_1)}P^{(l)}_{\Lambda,m}\pa{\restr{h^{(l)}}{\Lambda^+}}(t_1,x_1,v),
\end{split}
\end{equation}
where we defined $t_1=t - t_{min}(x,v)$ and $x_1(x,v) = x-(t-t_1(x,v))v$.
\par The idea is to iterate the latter representation inside the integral term $P^{(l)}_{\Lambda,m}$. This leads to a sequence of functions $(t_p,x_p,v_p)$ depending on the independent variables $(t_i,x_i,v_i)_{0\leq i \leq p-1}$ with $(t_0,x_0,v_0) = (t,x,v)$.

\bigskip
To shorten notations we define the probability measure on $\Lambda^+$
$$d\sigma_x(v) = c_\mu \mu(v) \abs{v\cdot n(x)}\:dv$$
and remark that the boundary condition $\eqref{Plkrho}$ becomes
$$h^{(l)}(t,x,v) = \pa{1-\frac{1}{l}}\frac{1}{\tilde{m}(v)}\int_{v_*\cdot n(x)>0} h^{(l)}(t,x,v_*)\tilde{m}(v_*)\:d\sigma_x(v_*)$$
with
\begin{equation}\label{tildemkrho}
\tilde{m}(v) = \frac{1}{c_\mu \mu(v) m(v)}.
\end{equation}
With these notations one can derive the following implicit iterative representation of $h^{(l)}$.  We refer to \cite{Gu6} Lemma $24$ and $(208)$ for a rigorous induction.
\begin{itemize}
\item If $t_1\leq 0$ then 
\end{itemize}
\begin{equation}\label{startinductionMD}
h^{(l)}(t,x,v) = e^{-\nu(v)t}f_0(x-tv,v)\mathbf{1}_{\br{\abs{v}\leq l}};
\end{equation}
\begin{itemize}
\item If $t_1 > 0$ then for all $p\geq 2$, 
\end{itemize}
\begin{equation}\label{inductionMD}
\begin{split}
&h^{(l)}(t,x,v) 
\\&= \frac{1}{\tilde{m}(v)}e^{-\nu(v)(t-t_1)}\sum\limits_{i=1}^{p}\pa{1-\frac{1}{l}}^i\int_{\prod\limits_{j=1}^{p}\br{v_j \cdot n(x_i)>0}}\mathbf{1}_{[t_{i+1},t_i)}(0)\:h^{(l)}_0(x_i - t_iv_i,v_i)d\Sigma_i(0)
\\&\quad +\pa{1-\frac{1}{l}}^p\frac{1}{\tilde{m}(v)}e^{-\nu(v)(t-t_1)}\int_{\prod\limits_{j=1}^{p}\br{v_j \cdot n(x_i)>0}}\mathbf{1}_{t_{p+1}>0}\:h^{(l)}(t_p,x_p,v_p)d\Sigma_{p}(t_p),
\end{split}
\end{equation}
where
$$d\Sigma_{i}(s) = e^{-\nu(v_i)(t_i-s)}\tilde{m}(v_i)\pa{\prod\limits_{j=1}^{i-1}e^{-\nu(v_j)(t_j-t_{j+1})}} \:d\sigma_{x_1}(v_1)\dots d\sigma_{x_{p}}(v_{p}).$$
The last term on the right-hand side of $\eqref{inductionMD}$ represents all the possible trajectories that are still able to generate new trajectories after $p$ rebounds. The first term describes all the possible trajectories reaching the initial plane $\br{t=0}$ in at most $p$ rebounds.

\bigskip
Computations are similar either $q=1$ or $q=\infty$. For $q=\infty$, it is enough to bound $\pa{h^{(l)}}_{l\in\N}$ to obtain weak-* convergence whereas $q=1$ requires more efforts. We therefore only deal with $q=1$ and point out the few differences for $q=\infty$ in Remark $\ref{rem:differenceL1Linftynu}$.
\par We will prove that $\norm{h^{(l)}}_{L^1_vL^\infty_x}$ satisfies an exponential decay uniformly in $l$ and then show that $\pa{h^{(l)}}_{l\in\N}$ (resp. its restrictions on $\Lambda^+$ and $\Lambda^-$) is weakly compact in $L^\infty_tL^1_vL^\infty_x$ (resp. on $\Lambda^+$ and $\Lambda^-$). The proof will be done in three steps. We first study the sequence $\pa{h^{(l)}\mathbf{1}_{t_1\leq 0}}_{l\in\N}$ in $L^\infty_tL^1_vL^\infty_x$, then $\pa{h^{(l)}\mathbf{1}_{t_1>0}}_{l\in\N}$ in $L^\infty_{[0,T_0]}L^1_vL^\infty_x$ with $T_0$ large and finally $\pa{h^{(l)}\mathbf{1}_{t_1>0}}_{l\in\N}$.

\bigskip
\textbf{Step 3.1: $\mathbf{\br{t_1\leq 0}}$.}
We first use $\eqref{startinductionMD}$ for all $l$ in $\N$ and all $t\geq 0$,
\begin{equation}\label{expodecayt1leq0}
\norm{h^{(l)}\mathbf{1}_{t_1\leq 0}(t,\cdot,\cdot)}_{L^1_vL^\infty_x} \leq  e^{-\nu_0 t}\norm{f_0}_{L^1_vL^\infty_x}.
\end{equation}
And also for all measurable set $K \subset \R^3$,
\begin{equation}\label{dunfordpettist1leq0}
\int_{K}\sup\limits_{x \in \Omega}\abs{h^{(l)}(t,x,v)\mathbf{1}_{t_1\leq 0}} \:dv \leq \int_{K}\sup\limits_{x \in \Omega}\abs{f_0(t,x,v)} \:dv.
\end{equation}
$f_0$ belongs to $L^1_vL^\infty_x$ and therefore the latter inequality implies that the sequence $\pa{\sup\limits_{x\in \Omega}\abs{h^{(l)}\mathbf{1}_{t_1\leq 0}(t,x,\cdot)}}_{l\in\N}$ is bounded and equi-integrable. The latter is also true restricted to $\Lambda^+$ since in that case $\restr{h^{(l)}\mathbf{1}_{t_1\leq 0}}{\Lambda^+} =0$.

\bigskip
\textbf{Step 3.2: $\mathbf{\br{t_1 > 0}}$ and $\mathbf{0\leq t \leq T_0}$.}
We focus on the case $t_1 >0$.
\par The exponential decay in $d\Sigma_i(s)$ is bounded by $e^{-\nu_0(t_1-s)}$ and we notice that the definition of $\tilde{m}$ $\eqref{tildemkrho}$ implies
$$\tilde{m}(v)d\sigma_x(v) = m(v)^{-1}\abs{v\cdot n(x)}dv$$
We first take the supremum over $x$ in $\Omega$ and then integrate in $v$ over $\R^d$ the first term on the right-hand side of $\eqref{inductionMD}$ and we obtain the following upper bound
\begin{equation}\label{startrighthandside}
\begin{split}
&e^{-\nu_0 t} \int_{\R^3}\frac{dv}{\tilde{m}(v)}\Big\{
\\&\sup\limits_{x\in\Omega}\sum\limits_{i=1}^p\int_{\prod\limits_{\overset{j=1}{j\neq i}}^{p}\br{v_j \cdot n(x_i)>0}}\mathbf{1}_{[t_{i+1},t_i)}(0)\pa{\int_{\R^3}\sup\limits_{y\in\Omega}\abs{h^{(l)}_0(y,v_i)}\frac{\abs{v_i}}{m(v_i)}\:dv_i}\:\prod\limits_{\overset{j=1}{j\neq i}}^{p}d\sigma_{x_j}(v_j)\Big\}.
\end{split}
\end{equation}
Since there exists $C_m >0$ (note that in what follows $C_m$ will stand for any explicit positive constant only depending on $m$) such that 
$$\frac{\abs{v}}{m(v)}\leq C_m$$
we can further bound $\eqref{startrighthandside}$ by
\begin{equation}\label{righthansidefinal}
\begin{split}
&C_m e^{-\nu_0 t} \pa{\int_{\R^3}\frac{dv}{\tilde{m}(v)}}\norm{f_0}_{L^1_vL^\infty_x}\sup\limits_{x,v}\sum\limits_{i=1}^p\int_{\prod\limits_{j=1}^{p}\br{v_j \cdot n(x_i)>0}}\mathbf{1}_{[t_{i+1},t_i)}(0)\:d\sigma_{x_1}\dots d\sigma_{x_p}
\\&\quad\quad\quad\leq C_m e^{-\nu_0 t} \pa{\int_{\R^3}\frac{dv}{\tilde{m}(v)}}\norm{f_0}_{L^1_vL^\infty_x}
\\ &\quad\quad\quad\leq C_m e^{-\nu_0 t}\norm{f_0}_{L^1_vL^\infty_x},
\end{split}
\end{equation}
where we used the fact that $\int_{v_i\cdot n(x_i)>0}d\sigma_{x_i}(v_i)=1$ and the following control
\begin{equation}\label{integralmtilde}
\int_{\R^3}\frac{dv}{\tilde{m}(v)} \leq C_m.
\end{equation}

\bigskip
We now turn to the study of the second term on the right-hand side of $\eqref{inductionMD}$.
\par We first notice that on the set $\br{t_{p+1}>0}$ we have $t_1(t_p,x_p,v_p)>0$ and therefore
\begin{equation}\label{tp+1tot1}
\mathbf{1}_{t_{p+1}>0}\abs{h^{(l)}(t_p,x_p,v_p)}\leq \mathbf{1}_{t_{p}>0}\sup\limits_{y\in\Omega}\abs{h^{(l)}(t_p,y,v_p)\mathbf{1}_{t_{1}>0}}.
\end{equation}

\par We take the supremum in $x\in\Omega$ and integrating in $v$ over $\R^3$ the second term on the right-hand side of $\eqref{inductionMD}$ and make the same computations as for the first term. This yields the following upper bound for $0\leq t \leq T_0$
\begin{equation*}
\begin{split}
&C_m e^{-\nu_0 (t-t_1)} \pa{\int_{\R^3}\frac{dv}{\tilde{m}(v)}\sup\limits_{x\in\Omega}\int_{\prod\limits_{j=1}^{p}\br{v_j \cdot n(x_i)>0}}e^{-\nu_0(t_1-t_p)}\sup\limits_{y\in\Omega}\abs{h^{(l)}(t_p,y,v_p)\mathbf{1}_{t_{1}>0}}\mathbf{1}_{t_p>0}}
\\&\leq C_m e^{-\nu_0 t} \pa{\int_{\R^3}\frac{dv}{\tilde{m}(v)}}\sup\limits_{0\leq s \leq T_0}\pa{\cro{e^{\nu_0s}\norm{h^{(l)}\mathbf{1}_{t_1>0}}}\sup\limits_{x,v}\int_{\prod\limits_{j=1}^{p}\br{v_j \cdot n(x_i)>0}}\mathbf{1}_{t_p>0}\prod\limits_{j=1}^p d\sigma_{x_i}}
\end{split}
\end{equation*}

\par As said at the beginning of the section, the trajectories not hitting the initial plane after $p$ rebounds is small when $p$ becomes large. This is given by \cite{EGKM} Lemma $4.1$ which states that there exist $C_1$, $C_2 >0$ such that for all $T_0$ sufficiently large, taking $p=C_1 T_0^{5/4}$ yields
\begin{equation}\label{controlreboundT0}
\forall 0\leq s\leq T_0,\: \forall x \in \bar{\Omega},\:\forall v \in\R^3, \quad \int_{\prod\limits_{j=1}^{p}\br{v_j \cdot n(x_i)>0}}\mathbf{1}_{t_p>0}\prod\limits_{j=1}^p d\sigma_{x_i} \leq \pa{\frac{1}{2}}^{C_2T_0^{5/4}}.
\end{equation}
Plugging it into the last inequality yields the following bound for the second term on the right-hand side of $\eqref{inductionMD}$ for all $t$ in $[0,T_0]$
\begin{equation}\label{finalrighthandside}
\begin{split}
&C_m e^{-\nu_0 t}\pa{\frac{1}{2}}^{C_2T_0^{5/4}}\pa{\int_{\R^3}\frac{dv}{\tilde{m}(v)}}\sup\limits_{0\leq s \leq T_0}\cro{e^{\nu_0s}\norm{h^{(l)}\mathbf{1}_{t_1>0}}_{L^1_vL^\infty_x}}
\\&\quad\quad\quad\quad\quad\leq C_m e^{-\nu_0 t}\pa{\frac{1}{2}}^{C_2T_0^{5/4}}\sup\limits_{0\leq s \leq T_0}\cro{e^{\nu_0s}\norm{h^{(l)}\mathbf{1}_{t_1>0}}_{L^1_vL^\infty_x}},
\end{split}
\end{equation}
where we used $\eqref{integralmtilde}$.

\bigskip
Gathering $\eqref{righthansidefinal}$ and $\eqref{finalrighthandside}$ gives
\begin{equation*}
\begin{split}
\sup\limits_{0\leq t \leq T_0}\cro{e^{\nu_0 t}\norm{h^{(l)}\mathbf{1}_{t_1>0}}_{L^1_vL^\infty_x}} \leq& C_m\norm{f_0}_{L^1_vL^\infty_x}
\\& + C_m\pa{\frac{1}{2}}^{C_2T_0^{5/4}}\sup\limits_{0\leq t \leq T_0}\cro{e^{\nu_0 t}\norm{h^{(l)}\mathbf{1}_{t_1>0}}_{L^1_vL^\infty_x}}.
\end{split}
\end{equation*}
Choosing $T_0$ even larger if need be such that 
$$C_m\pa{\frac{1}{2}}^{C_2T_0^{5/4}} \leq \frac{1}{2},$$
gives
\begin{equation}\label{expodecayt1>0<rho}
\exists C_{m} >0,\:\forall t \in [0,T_0], \quad \norm{h^{(l)}\mathbf{1}_{t_1>0}(t,\cdot,\cdot)}_{L^1_vL^\infty_x} \leq C_{m} e^{-\nu_0 t} \norm{f_0}_{L^1_vL^\infty_x}.
\end{equation}
Moreover, in $\eqref{righthansidefinal}$ and $\eqref{finalrighthandside}$ we kept the dependencies in the integration against $v$ in $\R^3$. Taking the integration over a measurable set $K\subset\R^3$, the same computations and the same choice of $T_0$ would give
\begin{equation}\label{dunfordpettist1>0<rho}
\exists C_{m} >0,\:\forall t \in [0,T_0], \quad \int_K \sup\limits_{x\in\Omega}\abs{h^{(l)}\mathbf{1}_{t_1>0}(t,x,v)}\:dv \leq C_{m}\norm{f_0}_{L^1_vL^\infty_x} \int_{K}\frac{dv}{\tilde{m}(v)}.
\end{equation}

\par Since $\tilde{m}^{-1}$ is integrable on $\R^3$, $\pa{\sup\limits_{x\in \Omega}\abs{h^{(l)}\mathbf{1}_{t_1> 0}(t,x,\cdot)}}_{l\in\N}$ is bounded and equi-integrable.

\bigskip
\textbf{Step 3.3: conclusion}
The constant $C_{m}$ in $\eqref{expodecayt1>0<rho}$ does not depend on $T_0$. Therefore, for any $\nu_0' < \nu_0$ we can choose $T_0 = T_0(m,\nu_0')$ large enough so that $\eqref{expodecayt1>0<rho}$ holds for $0\leq t \leq T_0$ and $C_{m}e^{-\nu_0 T_0} \leq e^{-\nu_0'T_0}$.
\par For that specific $T_0$ one has
$$\norm{h^{(l)}(T_0)}_{L^1_vL^\infty_x} \leq e^{-\nu_0'T_0} \norm{f_0}_{L^1_vL^\infty_x}.$$

\bigskip
We could now start the proof at $T_0$ up to $2T_0$ and iterating this process we get
\begin{equation*}
\begin{split}
\forall n\in\N,\quad \norm{h^{(l)}(nT_0)\mathbf{1}_{t_1>0}}_{L^1_vL^\infty_x} &\leq e^{-\nu_0'T_0} \norm{h^{(l)}((n-1)T_0}_{L^1_vL^\infty_x}
\\&\leq e^{-2\nu_0'T_0} \norm{h^{(l)}((n-2)T_0)}_{L^1_vL^\infty_x}
\\&\leq \dots \leq e^{-\nu_0'nT_0} \norm{f_0}_{L^1_vL^\infty_x}.
\end{split}
\end{equation*}
Finally, for all $t$ in $[nT_0,(n+1)T_0]$ we apply $\eqref{expodecayt1>0<rho}$ with the above to get
\begin{equation*}
\begin{split}
\norm{h^{(l)}\mathbf{1}_{t_1>0}(t,\cdot,\cdot)}_{L^1_vL^\infty_x} &\leq C_m e^{-\nu_0 (t-nT_0)} \norm{h^{(l)}(nT_0)}_{L^1_vL^\infty_x}
\\&\leq C_me^{-\nu_0 t+ \pa{\nu_0-\nu_0'}nT_0}\norm{f_0}_{L^1_vL^\infty_x}.
\end{split}
\end{equation*}
Hence the uniform control in $t$,
\begin{equation}\label{expodecaytotal}
\exists C_m >0,\:\forall t\geq 0, \quad \norm{h^{(l)}\mathbf{1}_{t_1>0}(t,\cdot,\cdot)}_{L^1_vL^\infty_x} \leq C_m e^{-\nu'_0 t} \norm{f_0}_{L^1_vL^\infty_x}.
\end{equation}
Again, we could only integrate over a measurable set $K\subset\R^3$, the same computations and the same choice of $T_0$ would give
\begin{equation}\label{dunfordpettistotal}
\exists C_k >0,\:\forall t \geq 0, \quad \int_K \sup\limits_{x\in\Omega}\abs{h^{(l)}\mathbf{1}_{t_1>0}(t,x,v)}\:dv \leq C_m\norm{f_0}_{L^1_vL^\infty_x} \int_{K}\frac{dv}{\tilde{m}(v)}.
\end{equation}

\bigskip
Combining $\eqref{expodecayt1leq0}$-$\eqref{expodecaytotal}$ we see that $\pa{h^{(l)}}_{l\in\N}$ is bounded in $L^\infty_tL^1_vL^\infty_v$. Moreover, by $\eqref{dunfordpettist1leq0}$-$\eqref{dunfordpettistotal}$ $\pa{\sup\limits_{x\in\Omega}\abs{h^{(l)}}}_{l\in\N}$ is equi-integrable on $L^1_v$. We can therefore apply the Dunford-Pettis theorem for $L^1$ combined with weak-compactness property of $L^\infty$ and find that $\pa{h^{(l)}}_{l\in\N}$ converges (up to a subsequence) weakly-* in $L^\infty_tL^1_vL^\infty_x$. The limit $f$ in $L^\infty_tL^1_vL^\infty_x$ is solution to the linear equation $\partial_t f = G_\nu f$ with initial data $f_0$.
\par Besides, since we always bound the integral of $h^{(l)}$ on $\br{v_i\cdot n(x_i)}>0$ by its integral on $\R^3$, we could do the same computations for $\restr{h^{(l)}}{\Lambda^+}$ by keeping the integral on $\br{v_i\cdot n(x_i)}>0$. Dunford-Pettis theorem again  and then the boundary conditions implies that $\pa{\restr{h^{(l)}}{\Lambda}}_{l\in\N}$ converges weakly-* in $L^\infty_tL^1L^\infty_\Lambda$.
\par $f$ thus satisfies the diffusion boundary condition $\eqref{boundarynuMD}$, has its restriction on $\Lambda$ in $L^1L^\infty_\Lambda$ and the exponential decay holds (since it holds for all $h^{(l)}$ uniformly in $l$). Which concludes the proof of existence and exponential decay.

\bigskip
\begin{remark}\label{rem:differenceL1Linftynu}
The case $q=\infty$ is dealt with the same way since the function that we bound $\abs{v}m(v)^{-1}$ and the one we integrate $\tilde{m}(v)^{-1}$ are respectively integrable and bounded since $k>4$ in the case $q=\infty$. Therefore in $\eqref{startrighthandside}$ we can take out the $L^\infty_{x,v}$-norm of $h_0^{(l)}$ and have $C_m$ be the integral of $\abs{v}m(v)^{-1}$ and finally bound $\tilde{m}(v)^{-1}$ instead of integrating it in $\eqref{righthansidefinal}$. This leads to the same estimates in $L^\infty_{x,v}$.
\end{remark}
\end{proof}
\bigskip

The proof above can be adapted to obtain that $S_{G_\nu}(t)$ controls `a bit more' than the mere $L^q_vL^\infty_{x}\pa{m}$-norm. This property will play a key role in the nonlinear case.
\par In $\eqref{startrighthandside}$ one could multiply and divide by $\nu(v_i)$ and the function
$$\frac{\abs{v_i}\nu(v_i)}{m(v_i)}$$
is still bounded (resp. integrable) on $\R^3$ if $m$ is a stretch exponential or if $m = \langle v \rangle^k$ with $k>1+\gamma$ (resp. $k>4+\gamma$). The conclusion $\eqref{expodecayt1>0<rho}$ then becomes
$$\exists C_m >0,\:\forall t \in [0,T_0], \quad \norm{h^{(l)}\mathbf{1}_{t_1>0}(t,\cdot,\cdot)}_{L^q_vL^\infty_{x}} \leq C_m e^{-\nu_0 t} \norm{f_0}_{L^q_vL^\infty_{x}\pa{\nu^{-1}}}.$$
This inequality is true at $T_0$ so using the induction that lead to $\eqref{expodecaytotal}$ and using the latter gives the following corollary.

\bigskip
\begin{cor}\label{cor:SGnuMDcontrolnu-1}
Let $q\in \br{1,\infty}$ $m=e^{\kappa\abs{v}^\alpha}$ with $\kappa >0$ and $\alpha$ in $(0,2)$ or $m=\langle v \rangle^k$ with $k > 2^{1/q}4^{1-1/q}+\gamma$; let $f_0$ be in $L^q_vL^\infty_{x}\pa{m}$. Then the solution $S_{G_\nu}(t)f_0 \in L^q_vL^\infty_{x}\pa{m}$ built in Proposition $\ref{prop:semigroupGnuMD}$ satisfies for all $\nu_0'<\nu_0$
$$\exists \: C_{\nu_0'}>0,\:\forall t\geq 0, \quad \norm{S_{G_\nu}(t)\pa{f_0}\mathbf{1}_{t_1>0}(t,\cdot,\cdot)}_{L^q_vL^\infty_{x}\pa{m}} \leq C_m e^{-\nu'_0 t} \norm{f_0}_{L^q_vL^\infty_{x}\pa{m\nu^{-1}}}.$$
\end{cor}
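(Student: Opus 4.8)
The plan is to mirror the proof of Proposition~\ref{prop:semigroupGnuMD} almost verbatim, modifying only the point where weights in the velocity integral are estimated. Recall that in that proof the decay constant was generated by two competing mechanisms: the exponential factor $e^{-\nu_0(t-s)}$ coming from the collision frequency, and the probability structure of the measures $d\sigma_{x_j}(v_j)$ which integrate to one on each hemisphere. The key observation is that in the estimate~\eqref{startrighthandside}, when we take the supremum over $x$ and integrate over $v_i$, the quantity that was bounded by $C_m$ was $\abs{v_i}/m(v_i)$. The improvement comes from noticing that the exponential decay factor $e^{-\nu_0(t_i-s)}$ present in $d\Sigma_i$ is dominated by $e^{-\nu(v_i)(t_i-s)}$, so that a full factor $\nu(v_i)$ is available to absorb into the weight function.

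First I would reprise the induction formula~\eqref{inductionMD} and carry out exactly the same manipulations as in Step~3.2, taking suprema in $x$ and integrating in $v$; the only change is that in the inner integral against $v_i$ I multiply and divide by $\nu(v_i)$, writing $\abs{v_i}/m(v_i) = \bigl(\abs{v_i}\nu(v_i)/m(v_i)\bigr)\nu(v_i)^{-1}$, and I measure $f_0$ in the stronger norm $\norm{f_0}_{L^q_vL^\infty_{x}\pa{m\nu^{-1}}}$ so that the leftover $\nu(v_i)^{-1}$ is absorbed into the initial datum. The function $\abs{v_i}\nu(v_i)/m(v_i)$ is bounded (if $q=\infty$) or integrable (if $q=1$) on $\R^3$ precisely under the hypothesis on $m$ in the statement: using $\eqref{nu0nu1}$ one has $\nu(v)\leq \nu_1(1+\abs{v}^\gamma)$, so $\abs{v}\nu(v)/m(v)$ grows like $\langle v\rangle^{1+\gamma}/m(v)$, which decays if $m$ is a stretch exponential, and which lies in $L^\infty_v$ when $m=\langle v\rangle^k$ with $k>1+\gamma$, or in $L^1_v$ when $k>4+\gamma$. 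This is exactly the thresholding $k>2^{1/q}4^{1-1/q}+\gamma$ of the corollary, interpolating between the two endpoint cases. The probability measures $d\sigma_{x_j}$ for $j\neq i$ still integrate to one, so the rest of the estimate is unchanged and~\eqref{righthansidefinal} holds with $\norm{f_0}_{L^q_vL^\infty_x}$ replaced by $\norm{f_0}_{L^q_vL^\infty_x\pa{\nu^{-1}}}$.

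Next I would treat the remainder term, the second term of~\eqref{inductionMD}, exactly as in Step~3.2: it involves $h^{(l)}$ itself rather than $f_0$, so no weight gain is needed there and the estimate~\eqref{finalrighthandside} together with the smallness~\eqref{controlreboundT0} of trajectories failing to reach $\br{t=0}$ after $p=C_1T_0^{5/4}$ rebounds applies without modification. Combining the two contributions and choosing $T_0$ large enough to absorb the remainder yields, in place of~\eqref{expodecayt1>0<rho}, the bound
$$\exists C_m>0,\:\forall t\in[0,T_0],\quad \norm{S_{G_\nu}(t)\pa{f_0}\mathbf{1}_{t_1>0}}_{L^q_vL^\infty_x\pa{m}} \leq C_m e^{-\nu_0 t}\norm{f_0}_{L^q_vL^\infty_x\pa{m\nu^{-1}}}.$$
Finally I would run the iteration over the intervals $[nT_0,(n+1)T_0]$ exactly as in Step~3.3 to upgrade the local-in-time bound to the global exponential decay with rate $\nu_0'<\nu_0$, concluding the corollary. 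The main conceptual obstacle is verifying that the extra factor $\nu(v)$ really is available in $d\Sigma_i$ at the stage where one integrates in the problematic variable $v_i$ without disturbing the probabilistic normalization of the remaining boundary measures; once that bookkeeping is in place the argument is a routine reweighting of the already-established proof.
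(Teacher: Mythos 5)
Your proposal follows the paper's argument exactly: the paper proves the corollary by multiplying and dividing by $\nu(v_i)$ in $\eqref{startrighthandside}$, observing that $\abs{v_i}\nu(v_i)/m(v_i)$ remains bounded (for $q=1$) or integrable (for $q=\infty$) under the strengthened hypothesis on $m$ so that the leftover $\nu(v_i)^{-1}$ is absorbed into $\norm{f_0}_{L^q_vL^\infty_x\pa{m\nu^{-1}}}$, and then rerunning the $[nT_0,(n+1)T_0]$ iteration of Step 3.3 unchanged. The only slip is that you attached ``bounded'' to $q=\infty$ and ``integrable'' to $q=1$, which is the reverse of what the $L^1_vL^\infty_x$ and $L^\infty_{x,v}$ estimates respectively require, but the two thresholds $k>1+\gamma$ and $k>4+\gamma$ you state are correctly paired with boundedness and integrability, so the argument is intact.
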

\bigskip

\section{Review of the $L^2-L^\infty$ theory for the full linear part}\label{sec:L2Linftytheory}

As discussed in the introduction, a mixed $L^2-L^\infty$ theory has been developed \cite{Gu6}\cite{EGKM} for the linear perturbed Boltzmann equation
\begin{equation}\label{lineareq}
\partial_t f + v\cdot\nabla_x f = L(f),
\end{equation}
together with boundary conditions. The idea of studying the possible generation of a semigroup with exponential decay in $L^2_{x,v}\pa{\mu^{-1/2}}$ by
$$G = -v\cdot\nabla_x + L,$$
together with boundary conditions,  is a natural one because of Subsection \ref{subsec:background}. 

\bigskip
This section is devoted to the description of the $L^2-L^\infty$ theory developed first by Guo \cite{Gu6} and extended by Esposito, Guo, Kim and Marra \cite{EGKM}. This theory will be the starting point of our main proofs.
\bigskip


\subsection{$L^2_{x,v}(\mu^{-1/2})$ theory for the linear perturbed operator}\label{subsec:L2theory}

As seen discussed before, the general theory \cite{BePro} for equations of the form 
$$\partial_t f + v\cdot\nabla_x f = g$$
fails for specular reflections or Maxwellian diffusion boundary conditions because the boundary operator $P$ is of norm one. However, restricting the Maxwellian diffusion to the set of functions in $L^2_{x,v}$ satisfying the preservation of mass implies that, in some sense, $\norm{P} <1$ (mere strict Cauchy-Schwarz inequality). One can therefore hope to develop a semigroup theory for $G$ in $L^2_{x,v}\pa{\mu^{-1/2}}$ with mass conservation. This has been recently achieved by constructive methods \cite{EGKM}. They proved the following theorem (see \cite{EGKM} Theorem $6.1$ with $g=r=0$).

\bigskip
\begin{theorem}\label{theo:semigroupL2MD}
Let $f_0$ be in $\mbox{MD}\cro{L^2_{x,v}\pa{\mu^{-1/2}}}$. Then there exists a unique mass preserving solution $S_G(t)f_0 \in L^2_{x,v}\pa{\mu^{-1/2}}$ to the linear perturbed Boltzmann equation $\eqref{lineareq}$ with Maxwellian diffusion boundary condition $\eqref{MD}$.
\\ Moreover there exist explicit $C_G$, $\lambda_G >0$, independent of $f_0$, such that
$$\forall t \geq 0, \quad \norm{S_G(t)f_0}_{L^2_{x,v}\pa{\mu^{-1/2}}} \leq C_G e^{-\lambda_G t}\norm{f_0}_{L^2_{x,v}\pa{\mu^{-1/2}}}.$$
\end{theorem}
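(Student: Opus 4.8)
The plan is to follow the two-tier strategy underlying the $L^2$--$L^\infty$ theory: first establish well-posedness by approximating the diffusive boundary operator, and then derive the exponential decay through a combination of microscopic dissipation coming from the spectral gap $\eqref{spectralgapL}$ and a macroscopic (hydrodynamic) coercivity estimate. Since the boundary operator in $\eqref{MD}$ has norm exactly one, I would first replace it by the absorbed version obtained by inserting a factor $(1-\eps)$ in front of the right-hand side of $\eqref{MD}$, exactly as in the proof of Proposition $\ref{prop:semigroupGnuMD}$. For this approximated problem the boundary operator has norm strictly less than one, so the general theory of \cite{BePro} provides existence and uniqueness in $L^2_{x,v}\pa{\mu^{-1/2}}$ together with a trace in $L^2_\Lambda\pa{\mu^{-1/2}}$. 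I would then derive bounds uniform in $\eps$ and pass to the weak limit.

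The basic a priori estimate is obtained by multiplying $\eqref{lineareq}$ by $f\mu^{-1}$ and integrating over $\Omega\times\R^3$. The transport term yields, via the divergence theorem, a boundary contribution $\int_\Lambda \abs{f}^2\pa{v\cdot n(x)}\mu^{-1}\:dS(x)dv$, and the diffusive boundary condition $\eqref{MD}$ together with a Cauchy--Schwarz inequality makes this integral non-negative, so it can be discarded favourably. The collision term is handled by the spectral gap $\eqref{spectralgapL}$, giving
\begin{equation*}
\frac{1}{2}\frac{d}{dt}\norm{f}^2_{L^2_{x,v}\pa{\mu^{-1/2}}} \leq -\lambda_L\norm{\pi_L^\bot(f)}^2_{L^2_{x,v}\pa{\mu^{-1/2}}}.
\end{equation*}
This already yields uniqueness (through a Gr\"onwall lemma) and uniform-in-time boundedness of the semigroup, but it only dissipates the microscopic part $\pi_L^\bot(f)$ and says nothing about the fluid part $\pi_L(f)$ defined in $\eqref{piL}$.

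The main obstacle is therefore to recover coercivity for the hydrodynamic component $\pi_L(f)$, that is to control $\int_0^T\norm{\pi_L(f)}^2$ by the microscopic dissipation plus boundary data. I would proceed as in \cite{EGKM}: testing $\eqref{lineareq}$ against the collision invariants $1$, $v$, $\abs{v}^2$ produces the local conservation laws, i.e. a closed system for the macroscopic fields $\pa{\rho,u,\theta}$ whose source terms are purely microscopic. One then builds, for each field, an auxiliary test function solving an appropriate elliptic (Poisson-type) problem in $\Omega$ with suitable boundary data; inserting these test functions into the weak formulation converts spatial derivatives of $\pa{\rho,u,\theta}$ into the already-controlled microscopic quantities and into boundary terms governed by the diffusion condition. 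The constraint $\Pi_G(f)=0$ from $\eqref{PiGMD}$ removes the constant mode, which is precisely what allows the Poincar\'e-type inequalities to close the argument. The delicate points are the boundary contributions of the auxiliary problems, for which $\eqref{MD}$ must be used quantitatively, and the handling of the geometry of the $C^1$ domain.

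Combining the microscopic dissipation with this hydrodynamic estimate yields an integrated coercivity bound of the form $\int_0^T\norm{f(t)}^2_{L^2_{x,v}\pa{\mu^{-1/2}}}\:dt \leq C\norm{f_0}^2_{L^2_{x,v}\pa{\mu^{-1/2}}}$, uniformly in $T$. Together with the uniform-in-time boundedness of the semigroup, a Datko--Pazy type argument (or a direct iteration over time windows, as in the exponential-decay step of Proposition $\ref{prop:semigroupGnuMD}$) upgrades this time-integrability into the asserted decay, with explicit $C_G$ and $\lambda_G$ depending only on $\lambda_L$, the domain and the collision kernel. Finally, since all constants are uniform in $\eps$, passing to the limit $\eps\to 0$ preserves both the boundary condition $\eqref{MD}$ and the bound $\norm{S_G(t)f_0}_{L^2_{x,v}\pa{\mu^{-1/2}}}\leq C_G e^{-\lambda_G t}\norm{f_0}_{L^2_{x,v}\pa{\mu^{-1/2}}}$, which concludes the proof.
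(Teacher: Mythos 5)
The paper does not prove this theorem at all: it is imported verbatim from \cite{EGKM} (Theorem $6.1$ there, with $g=r=0$), so there is no internal proof to compare against. Your outline faithfully reconstructs the strategy of that cited reference --- absorbed boundary approximation \`a la \cite{BePro}, the energy identity with the spectral gap $\eqref{spectralgapL}$ dissipating only $\pi_L^\bot(f)$, the macroscopic coercivity estimate via elliptic test functions (where the quantitative use of $\eqref{MD}$ and the zero-mass constraint $\eqref{PiGMD}$ are indeed the delicate points), and the upgrade from time-integrability to exponential decay --- so it is consistent with the proof the paper relies on, with the caveat that the test-function construction you defer to \cite{EGKM} is where essentially all of the work lies.
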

\bigskip

\bigskip
Unfortunately, in the case of specular reflections the uniqueness is not true in general due to a possible blow-up of the $L^2_{\mbox{\scriptsize{loc}}}\pa{\R^+;L^2_{x,v}\pa{\Lambda}}$ at the grazing set $\Lambda_0$ \cite{Uk3,BePro,CIP}. However, an \textit{a priori} exponential decay of solutions is enough to obtain an $L^\infty$ theory provided that we endow the space with a strong weight (see next subsection). Such an \textit{a priori} study has been derived in \cite{Gu6} by a contradiction argument.

\bigskip
\begin{theorem}\label{theo:semigroupL2SR}
Let $f_0$ be in $\mbox{SR}\cro{L^2_{x,v}\pa{\mu^{-1/2}}}$. Suppose that $f(t,x,v)$ is a solution to the linear perturbed Boltzmann equation $\eqref{lineareq}$ in $\mbox{SR}\cro{L^2_{x,v}\pa{\mu^{-1/2}}}$ with initial data $f_0$ and satisfying the specular reflections boundary condition $\eqref{SR}$. Suppose also that $\restr{f}{\Lambda}$ belongs to $L^2_\Lambda\pa{\mu^{-1/2}}$. Then there exists $C_G$, $\lambda_G>0$ such that
$$\forall t \geq 0, \quad \norm{f(t)}_{L^2_{x,v}\pa{\mu^{-1/2}}} \leq C_G e^{-\lambda_G t}\norm{f_0}_{L^2_{x,v}\pa{\mu^{-1/2}}}.$$
The constants $C_G$ and $\lambda_G$ are independent of $f$.
\end{theorem}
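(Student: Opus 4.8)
The plan is to prove the decay as an \emph{a priori} estimate, combining the basic energy identity (which only controls the microscopic part) with a macroscopic coercivity inequality established by a compactness–contradiction argument, in the spirit of \cite{Gu6}. First I would multiply \eqref{lineareq} by $f\mu^{-1}$ and integrate over $\Omega\times\R^3$. The divergence theorem produces a boundary term $\frac12\int_\Lambda \abs{f}^2(v\cdot n(x))\mu^{-1}\:dS(x)dv$, which is legitimate precisely because $\restr{f}{\Lambda}\in L^2_\Lambda\pa{\mu^{-1/2}}$. Under specular reflection this term vanishes: the map $v\mapsto\mathcal R_x(v)$ preserves $\abs{v}$ (hence $\mu$) and $\abs{f}$ while reversing the sign of $v\cdot n(x)$, so the $\Lambda^+$ and $\Lambda^-$ contributions cancel. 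Invoking the spectral gap \eqref{spectralgapL} then gives
$$\frac12\frac{d}{dt}\norm{f(t)}_{L^2_{x,v}\pa{\mu^{-1/2}}}^2\leq -\lambda_L\norm{\pi_L^\bot f(t)}_{L^2_{x,v}\pa{\mu^{-1/2}}}^2,$$
so the norm is non-increasing and the microscopic part is dissipated; it remains to recover control of the fluid part $\pi_L f$.

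The core of the argument is a coercivity estimate: there exist $T>0$ and $C>0$, independent of the solution, with
$$\int_0^T\norm{\pi_L f(t)}_{L^2_{x,v}\pa{\mu^{-1/2}}}^2\,dt\leq C\int_0^T\norm{\pi_L^\bot f(t)}_{L^2_{x,v}\pa{\mu^{-1/2}}}^2\,dt.$$
I would prove this by contradiction. If it fails, there is a sequence of solutions $f^n$ with $\int_0^T\norm{\pi_L f^n}^2=1$ while $\int_0^T\norm{\pi_L^\bot f^n}^2\to 0$. The energy identity bounds the sequence uniformly, so one extracts a weak limit $f$ whose microscopic part vanishes; hence $f(t,x,v)=\pa{a(t,x)+b(t,x)\cdot v+c(t,x)(\abs{v}^2-3)}\mu(v)$ lies pointwise in $\mathrm{Ker}(L)$ and solves the free transport equation $\partial_t f+v\cdot\nabla_x f=0$ with specular reflections. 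Matching the polynomial-in-$v$ coefficients converts this into the classical macroscopic system on $(a,b,c)$, whose solutions compatible with the specular boundary condition are, for an analytic and strictly convex $\Omega$, exactly the genuine invariants (mass, energy, and the angular momentum of \eqref{axissymmetric} in the axis-symmetric case). Since each $f^n$ satisfies $\Pi_G f^n=0$ and this linear constraint passes to the limit, the limit $f$ is orthogonal to all of them, so $f\equiv 0$. Upgrading the weak convergence of the macroscopic moments to strong convergence then forces $\int_0^T\norm{\pi_L f}^2=1$ in the limit, a contradiction.

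Finally I would combine the two estimates. Writing $\norm{f}^2=\norm{\pi_L f}^2+\norm{\pi_L^\bot f}^2$ and using coercivity, $\int_0^T\norm{f}^2\leq (C+1)\int_0^T\norm{\pi_L^\bot f}^2$, while the energy identity integrated on $[0,T]$ gives $2\lambda_L\int_0^T\norm{\pi_L^\bot f}^2\leq \norm{f_0}^2-\norm{f(T)}^2$. Since $t\mapsto\norm{f(t)}^2$ is non-increasing, $T\norm{f(T)}^2\leq\int_0^T\norm{f}^2$, and rearranging yields $\norm{f(T)}^2\leq(1-\delta)\norm{f_0}^2$ for an explicit $\delta\in(0,1)$ depending only on $T,C,\lambda_L$. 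Iterating over the windows $[nT,(n+1)T]$ produces the exponential decay with explicit constants $C_G,\lambda_G$. The main obstacle is the contradiction step: identifying the kernel of the limiting transport–specular problem and ruling out spurious stationary infinitesimal Maxwellians, together with securing the strong (not merely weak) convergence of the fluid moments. This is exactly where the analyticity and strict convexity of $\Omega$ enter (cf.\ Remark~\ref{rem:analytic}), by controlling the geometry of the specular characteristic flow so that no invariants beyond the true conservation laws survive.
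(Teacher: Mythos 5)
Your proposal is essentially correct and follows the same route as the source this paper relies on: the paper does not prove Theorem~\ref{theo:semigroupL2SR} itself but imports it from Guo \cite{Gu6}, whose argument is precisely the energy identity (boundary term cancelling under specular reflection, spectral gap dissipating the microscopic part) combined with a macroscopic coercivity estimate obtained by the compactness--contradiction argument you describe, with analyticity and strict convexity entering exactly where you place them, namely in ruling out nontrivial infinitesimal-Maxwellian solutions of the limiting transport--specular problem beyond the genuine conservation laws (cf.\ Remark~\ref{rem:analytic}). Your time-window iteration converting the integrated inequality into exponential decay is likewise the standard closing step of that argument.
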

\bigskip

Note that the two previous theorems hold for $\Omega$ being a $C^1$ bounded domain.
\bigskip


\subsection{The $L^\infty$ framework}\label{subsec:Linftytheory}

It has been proved in \cite{Gu6} section $4$ that if Theorem \ref{theo:semigroupL2MD} and Theorem \ref{theo:semigroupL2SR} hold true then one can develop and $L^\infty_{x,v}\pa{\langle v \rangle^\beta\mu^{-1/2}}$ theory for the semigroup generated by $G=-v\cdot\nabla_x+L$, as long as $\beta$ is sufficiently large. We already discussed the fact that we do not have a semigroup property for $G$ in $L^2_{x,v}$ due to the possible lack of uniqueness. To overcome this inconvenient it is compulsory to go into $L^\infty_{x,v}\pa{\langle v \rangle^{\beta}\mu^{-1/2})}$ where trace theorems are known to hold and $\beta$ is large enough so that one can use the \textit{a priori} estimate in $L^2_{x,v}\pa{\mu^{-1/2}}$ thanks to a change of variable along the characteristic trajectories (which requires the strict convexity and the analyticity of the domain in the case of specular reflexions).
\par In other words, having only an $\textit{a priori}$ exponential decay in $L^2_{x,v}$ can be used to obtain that $G$ actually generates an exponentially decaying semigroup in $L^\infty$, where we have Ukai's trace theorem \cite{Uk3} Theorem $5.1.1$ to have well-defined restrictions at the boundary and therefore uniqueness of solutions.
\par Moreover, this semigroup theory is compatible with the remainder term $Q(f,f)$ and offers existence, uniqueness and solutions to the perturbed Boltzmann equation
\begin{equation}\label{perturbedeqfinal}
\partial_t f = G(f) + Q(f,f)
\end{equation}
in $L^\infty\pa{\langle v \rangle^\beta\mu^{-1/2}}$ as long as $\norm{f_0}_{L^\infty\pa{\langle v \rangle^\beta\mu^{-1/2}}}$ is small enough.
\par We state here a theorem adapted from \cite{Gu6}. The case of specular reflections is derived from Theorem $8$ and the proof of Theorem $3$ and the case of Maxwellian diffusion from Theorem $9$ and the proof of Theorem $4$. One can also look at \cite{EGKM} Theorem $1.3$ for a constructive proof in the case of Maxwellian diffusion boundary conditions.

\bigskip
\begin{theorem}\label{theo:semigroupLinfty}
Let $\Omega$ be a $C^1$ bounded domain if boundary conditions are Maxwellian diffusion and let $\Omega$ be analytic and strictly convex in the sense of $\eqref{strictlyconvex}$ if they are specular reflections. Define $w_\beta(v) = \langle v \rangle^\beta\mu(v)^{-1/2}$.
\\Then for all $\beta$ such that $\beta^{-2}\pa{1+\abs{v}}^3 \in L^1_v$ the operator $G=-v\cdot\nabla_x +L$ generates a semigroup $S_G(t)$ in $\mbox{SR}\cro{L^\infty_{x,v}(w_\beta)}$ and in $\mbox{MD}\cro{L^\infty_{x,v}(w_\beta)}$. Moreover, there exists $C_G$, $\lambda_G>0$ such that for all $f_0$ in $L^\infty_tL^\infty_{x,v}(w_\beta)$ satisfying the appropriate conservation laws and all $t \geq 0$
$$\norm{S_G(t)f_0}_{L^\infty_{x,v}(w_\beta)} \leq C_G e^{-\lambda_G t}\norm{f_0}_{L^\infty_{x,v}(w_\beta)},$$
and for all $0<\lambda_G' < \lambda_G$,
\begin{equation*}
\begin{split}
&\norm{\int_0^t S_G(t-s)Q(f_0,f_0)\:ds}_{L^\infty_{x,v}(w_\beta)} \leq C_Ge^{-\lambda_G' t}\sup\limits_{s\in[0,t]}\cro{e^{\lambda_G' s}\norm{f_0(s)}^2_{L^\infty_{x,v}(w_\beta)}}.
 \end{split}
 \end{equation*}
\end{theorem}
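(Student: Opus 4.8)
The plan is to reproduce, in the present weighted $L^\infty$ setting, the $L^2$--$L^\infty$ bootstrap of Guo. The starting point is the splitting $\eqref{LnuK}$, which gives $G = G_\nu + K$ with $G_\nu = -v\cdot\nabla_x - \nu(v)$ and $K$ the compact kernel operator. Since Proposition $\ref{prop:semigroupGnuSR}$ (resp. Proposition $\ref{prop:semigroupGnuMD}$) already furnishes an exponentially decaying semigroup $S_{G_\nu}(t)$ in $L^\infty_{x,v}(w_\beta)$, I would construct $S_G(t)$ through the Dyson--Duhamel identity
\begin{equation*}
S_G(t)f_0 = S_{G_\nu}(t)f_0 + \int_0^t S_{G_\nu}(t-s)\,K\,S_G(s)f_0\:ds.
\end{equation*}
Conjugating by the weight turns $K$ into an operator with kernel $k_w(v,v_*) = \frac{w_\beta(v)}{w_\beta(v_*)}k(v,v_*)$, which still inherits the pointwise decay coming from the cut-off assumption $\eqref{cutoff}$; in particular its velocity integral against $\langle v_* \rangle^{-1}$ stays bounded. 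This is what lets one treat $K$ as a gain of velocity-integrability rather than as a genuine perturbation of $G_\nu$.

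The heart of the argument is to iterate this identity a second time, replacing $S_G(s)f_0$ inside the integral by its own Duhamel expansion, so as to split $S_G(t)f_0$ into a free term, a single-$K$ term, and a double-$K$ term of the schematic form $\int\!\!\int S_{G_\nu}K_w S_{G_\nu}K_w S_G f_0$. Representing $S_{G_\nu}$ explicitly along the backward characteristics of Proposition $\ref{prop:semigroupGnuSR}$ for specular reflections, or implicitly through $\eqref{inductionMD}$ for diffusion, the double velocity integration against the two kernels $k_w$ can be converted, after a change of variables carrying one intermediate velocity onto the space variable along the broken trajectory, into an $L^2_{x,v}$ average of $f_0$. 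The upshot is that the double-$K$ term is controlled by $\norm{S_G(s')f_0}_{L^2_{x,v}\pa{\mu^{-1/2}}}$, on which the a priori exponential decay of Theorem $\ref{theo:semigroupL2SR}$ (specular) or Theorem $\ref{theo:semigroupL2MD}$ (diffusion) is available. Feeding this decay back and absorbing the remaining small $L^\infty$ contribution for $t$ large closes a Gr\"onwall-type inequality and produces $\norm{S_G(t)f_0}_{L^\infty_{x,v}(w_\beta)} \leq C_G e^{-\lambda_G t}\norm{f_0}_{L^\infty_{x,v}(w_\beta)}$. Uniqueness, hence the genuine semigroup property, follows because Ukai's trace theorem guarantees in this $L^\infty(w_\beta)$ setting that $\restr{f}{\Lambda}$ belongs to $L^2_\Lambda\pa{\mu^{-1/2}}$, exactly the hypothesis required in Theorem $\ref{theo:semigroupL2SR}$.

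I expect the main obstacle to be the non-degeneracy of this change of variables along characteristics. For Maxwellian diffusion in a merely $C^1$ domain the velocity is resampled at each rebound, the relevant Jacobian is harmless, and the estimate of \cite{EGKM} applies. For specular reflections the intermediate velocity is rigidly fixed by the reflection law, and the Jacobian of the map sending it to the position along the trajectory may vanish; it is precisely here that the strict convexity $\eqref{strictlyconvex}$ and the analyticity of $\Omega$ enter, following \cite{Gu6} Subsection $4.3.1$, to ensure that the degeneracy set is negligible and the substitution is licit. This is why the specular statement is confined to analytic strictly convex domains.

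Finally, for the nonlinear Duhamel term I would combine the just-established exponential decay of $S_G(t)$ with the standard weighted bilinear estimate for the collision operator under cut-off, of the form $\norm{w_\beta\,\nu^{-1}Q(f,g)}_{L^\infty_v} \lesssim \norm{w_\beta f}_{L^\infty_v}\norm{w_\beta g}_{L^\infty_v}$. This transfers the quadratic control $\norm{f_0(s)}_{L^\infty_{x,v}(w_\beta)}^2$ onto $Q(f_0,f_0)$ at the price of one factor $\nu$, a loss that is compensated by the $\nu^{-1}$-gain enjoyed by the time-integrated semigroup (the elementary bound $\int_0^t e^{-\nu(v)(t-s)}\:ds \leq \nu(v)^{-1}$, recorded at the level of $S_{G_\nu}$ in Corollary $\ref{cor:SGnuMDcontrolnu-1}$). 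Inserting the decay $C_G e^{-\lambda_G(t-s)}$ and using the integrability of $e^{-(\lambda_G-\lambda_G')(t-s)}$ then yields the stated bound for every $\lambda_G' < \lambda_G$.
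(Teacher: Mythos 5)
Your proposal is a faithful reconstruction of the $L^2$--$L^\infty$ bootstrap that the paper itself does not reprove but imports from \cite{Gu6} (Theorems 8, 9 and the proofs of Theorems 3, 4) and \cite{EGKM}: the double Duhamel iteration with the kernel $k_w$, the change of variables along characteristics converting the double velocity integral into an $L^2_{x,v}\pa{\mu^{-1/2}}$ average controlled by the a priori decay of Theorems \ref{theo:semigroupL2SR} and \ref{theo:semigroupL2MD}, the role of analyticity and strict convexity in keeping that Jacobian non-degenerate for specular reflections, and Ukai's trace theorem for uniqueness are exactly the ingredients the paper describes in Subsection \ref{subsec:Linftytheory} and Remark \ref{rem:analytic}. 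The argument is correct and matches the cited route, including the treatment of the nonlinear Duhamel term via the $\nu^{-1}$ gain against the weighted bilinear estimate.
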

\bigskip

\begin{remark}\label{rem:analytic}
We emphasize that the strict convexity required in our Theorem $\ref{theo:cauchySR}$ only comes from the fact that such a geometric property is needed in order to apply the theorem above and thus having a well-established semigroup theory in the framework of specular reflections.
\par Moreover, if the proof in the case of Maxwellian diffusion has been made constructive \cite{EGKM}, the case of specular reflections heavily relies on a contradiction argument combined with analyticity (\cite{Gu6} Lemma $22$) and a constructive proof is still an open problem.
\end{remark}
\bigskip

\section{System of equations solving the perturbed Boltzmann equation}\label{sec:extensionSRMD}

This section is dedicated to the proofs of Theorem $\ref{theo:cauchySR}$ and Theorem $\ref{theo:cauchyMD}$. The latter proofs rely on a specific decomposition of the operator $G=-v\cdot\nabla_x+L$ that allows to solve a system of differential equations that connect the larger spaces $L^\infty_{x,v}\pa{m}$ to the more regular space $L^\infty_{x,v}\pa{\langle v \rangle^\beta\mu^{-1/2}}$ where solutions to the perturbed Boltzmann equation are known to exists (see Subsection $\ref{subsec:Linftytheory}$). As said in the introduction. our method follows the recent extension methods for strongly continuous semigroups \cite{GMM} but an analytic adaptation has to be developed since we already saw that $S_{G}(t)$ is not necessarily strongly continuous.
\par Firstly, Subsection $\ref{subsec:decomposition}$ describes the strategy we shall use and presents the new system of equations we will solve. Then Subsection $\ref{subsec:f1}$ and Subsection $\ref{subsec:f2}$ solve the system of differential equations.
\bigskip


\subsection{Decomposition of the perturbed Boltzmann equation and toolbox}\label{subsec:decomposition}

The main strategy is to find a decomposition of the perturbed Boltzmann equation
$\eqref{perturbedBE}$ into a system of differential equations where we could use of the theory developed in $L^\infty_{x,v}$. More precisely, one would like to solve a somewhat simpler equation in $L^\infty_{x,v}\pa{m}$ and that the remainder part has regularising properties and thus be handled in the smaller space $L^\infty_{x,v}\pa{\langle v \rangle^\beta\mu^{-1/2}}$. Then the exponential decay in the more regular space could be carried up to the bigger space. One can easily see that for the weights considered in the present work and for $q=1$ or $q=\infty$ we have 
$$L^\infty_{x,v}\pa{\langle v \rangle^\beta\mu^{-1/2}} \subset L^q_vL^\infty_x\pa{m}.$$

\bigskip
We follow the decomposition of $G$ proposed in \cite{GMM}.
\par For $\delta$ in $(0,1)$, to be chosen later, we consider $\Theta_\delta = \Theta_\delta(v,v_*,\sigma)$ in $C^\infty$ that is bounded by one everywhere, is exactly one on the set
$$\left\{\abs{v}\leq \delta^{-1}    \quad\mbox{and}\quad 2\delta\leq\abs{v-v_*}\leq \delta^{-1}    \quad\mbox{and}\quad \abs{\mbox{cos}\:\theta} \leq 1-2\delta \right\}$$
and whose support is included in
$$\left\{\abs{v}\leq 2\delta^{-1}    \quad\mbox{and}\quad \delta\leq\abs{v-v_*}\leq 2\delta^{-1}    \quad\mbox{and}\quad \abs{\mbox{cos}\:\theta} \leq 1-\delta \right\}.$$
We define the splitting
$$G = A^{(\delta)} +B^{(\delta)},$$
with
$$A^{(\delta)} h (v) = C_\Phi\int_{\R^3\times\mathbb{S}^2}\Theta_\delta\left[\mu'_*h' + \mu'h'_* - \mu h_*\right]b\left(\mbox{cos}\:\theta\right)\abs{v-v_*}^\gamma\:d\sigma dv_*$$
and
$$B^{(\delta)} h (v) = B^{(\delta)}_{2} h (v) -\nu(v) h(v) -v\cdot\nabla_x h(v)= G_\nu h(v) + B^{(\delta)}_{2} h (v),$$
where
$$B^{(\delta)}_{2} h (v) = \int_{\R^3\times\mathbb{S}^2}\left(1-\Theta_\delta\right)\left[\mu'_*h' + \mu'h'_* - \mu h_*\right]b\left(\mbox{cos}\:\theta\right)\abs{v-v_*}^\gamma\:d\sigma dv_*.$$

\bigskip
$A^{(\delta)}$ is a kernel operator with a compactly supported kernel. Therefore it has the following regularising effect.

\bigskip
\begin{lemma}\label{lem:controlA}
For any $q$ in $\br{1,\infty}$, the operator $A^{(\delta)}$ maps $L^q_v$ into $L^q_v$ with compact support. More precisely, for all $\beta\geq 0$ and all $\alpha\geq 0$, there exists $R_{\delta}$ and $C_A = C\pa{\delta,q,\beta,\alpha}>0$  such that 
$$\forall h \in L^q_v,\: \emph{\mbox{supp}}\left(A^{(\delta)} h \right) \subset B(0,R_{\delta}), \quad \norm{A^{(\delta)} h}_{L^q_v\pa{\langle v\rangle^\beta\mu^{-\alpha}}} \leq C_A\norm{h}_{L^q_v}.$$
\end{lemma}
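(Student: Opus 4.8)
The plan is to exploit that $A^{(\delta)}$ is a collision-type operator whose kernel has been rendered bounded and compactly supported by the truncation $\Theta_\delta$.

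First, the compact support in $v$ is immediate: since $\Theta_\delta$ vanishes as soon as $\abs{v}>2\delta^{-1}$, the integral defining $A^{(\delta)}h(v)$ is identically zero for such $v$, so one may take $R_\delta=2\delta^{-1}$, independently of $h$. Next I would reduce the weighted estimate to an unweighted one: because $A^{(\delta)}h$ is supported in the fixed ball $B(0,R_\delta)$, the weight $\langle v\rangle^\beta\mu(v)^{-\alpha}$ is bounded there by some constant $C(\delta,\beta,\alpha)$, whence $\norm{A^{(\delta)}h}_{L^q_v(\langle v\rangle^\beta\mu^{-\alpha})}\leq C(\delta,\beta,\alpha)\norm{A^{(\delta)}h}_{L^q_v}$. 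It therefore suffices to establish the unweighted continuity $\norm{A^{(\delta)}h}_{L^q_v}\leq C\norm{h}_{L^q_v}$ for $q\in\br{1,\infty}$.

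The heart of the matter is to rewrite $A^{(\delta)}$ as a genuine integral operator $A^{(\delta)}h(v)=\int_{\R^3}k_\delta(v,v_*)h(v_*)\:dv_*$ with a good kernel. The loss contribution $-C_\Phi\mu(v)\int_{\mathbb{S}^2}\Theta_\delta\:b\pa{\mbox{cos}\:\theta}\abs{v-v_*}^\gamma\:d\sigma$ is already of this form, and the cutoff makes it a bounded kernel supported in $\br{\abs{v}\leq 2\delta^{-1},\ \abs{v-v_*}\leq 2\delta^{-1}}$. For the two gain contributions, in which $h$ is evaluated at $v'$ and at $v'_*$, I would apply the classical Carleman-type change of variables for the Boltzmann gain term to recast them as kernel operators as well; the compactness of the resulting supports in both arguments follows from the fact that $\abs{v}\leq 2\delta^{-1}$ and $\abs{v-v_*}\leq 2\delta^{-1}$ on $\mbox{supp}(\Theta_\delta)$, which confines $v_*$, $v'$ and $v'_*$ to a fixed ball. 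The decisive role of $\Theta_\delta$ is that on its support the relative velocity satisfies $\delta\leq\abs{v-v_*}\leq 2\delta^{-1}$ and the deviation obeys $\abs{\mbox{cos}\:\theta}\leq 1-\delta$; this keeps the configuration away from the grazing ($\mbox{cos}\:\theta=1$) and head-on ($\mbox{cos}\:\theta=-1$) directions, so that the Jacobian of the collision map stays bounded below and the gain kernels remain bounded.

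Once $k_\delta$ is known to be bounded and compactly supported in $(v,v_*)$, both $\sup_v\int_{\R^3}\abs{k_\delta(v,v_*)}\:dv_*$ and $\sup_{v_*}\int_{\R^3}\abs{k_\delta(v,v_*)}\:dv$ are finite, and Schur's test yields the continuity $L^q_v\to L^q_v$ simultaneously for $q=1$ and $q=\infty$ (indeed for every $q\in[1,\infty]$). Combined with the reduction above, this gives the claimed bound with $C_A=C(\delta,q,\beta,\alpha)$. I expect the main obstacle to be the gain-term change of variables together with the verification that the collision Jacobian is nondegenerate on $\mbox{supp}(\Theta_\delta)$; this is precisely where the two-sided control on $\abs{v-v_*}$ and the angular truncation built into $\Theta_\delta$ are used.
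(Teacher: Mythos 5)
Your proposal is correct and follows essentially the same route as the paper: the paper's proof simply invokes the Carleman representation to write $A^{(\delta)}$ as a kernel operator with kernel $k^{(\delta)}\in C^\infty_c(\R^3\times\R^3)$ and declares the weighted $L^q_v$ bound straightforward, which is exactly the reduction you carry out (compact support from $\Theta_\delta$, boundedness of the weight on the fixed ball, Schur's test). You merely make explicit the details — the nondegeneracy of the collision Jacobian away from grazing angles and the treatment of the loss and gain terms — that the paper delegates to the cited references.
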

\bigskip

\begin{proof}[Proof of Lemma $\ref{lem:controlA}$]
The kernel of the operator $A^{(\delta)}$ is compactly supported so its Carleman representation (see \cite{Ca2} or \cite{Vi2}) gives the existence of $k^{(\delta)}$ in $C^\infty_c\left(\R^3\times\R^3\right)$ such that 
\begin{equation}\label{Adeltaepskernel} 
 A^{(\delta)} h(v) = \int_{\R^3} k^{(\delta)}(v,v_*)h(v_*)\:dv_*,
 \end{equation}
and therefore the control on $\norm{A^{(\delta)} h}_{L^q_v\pa{\langle v\rangle^\beta\mu^{-\alpha}}}$ is straightforward.
\end{proof}
\bigskip

Thanks to this regularising property of the operator $A^{(\delta)}$ we are looking for solutions to the perturbed Boltzmann equation
$$\partial_t f = Gf + Q(f,f)$$
in the form of $f=f_1+f_2$ with $f_1$ in $L^\infty_{x,v}\pa{m}$ and $f_2$ in $L^\infty_{x,v}\pa{\langle v \rangle^\beta\mu^{-1/2}}$ and $(f_1,f_2)$ satisfying the following system of equation

\begin{eqnarray}
\partial_t f_1 &=& B^{(\delta)} f_1 + Q(f_1,f_1+f_2) \quad\mbox{and}\quad f_1(0,x,v)=f_0(x,v),\label{f1}
\\\partial_t f_2 &=& G f_2 + Q(f_2,f_2) + A^{(\delta)}f_1 \quad\mbox{and}\quad f_2(0,x,v)=0\label{f2}
\end{eqnarray}
with either specular reflections or Maxwellian diffusion boundary conditions.

\bigskip
The equation in the smaller space $\eqref{f2}$ will be treated thanks to the previous study in the $L^\infty\pa{\langle v \rangle^\beta\mu^{-1/2}}$ whilst we expect an exponential decay for solutions in the larger space $\eqref{f1}$. Indeed, $B^{(\delta)}$ can be controlled by the multiplicative operator $\nu(v)$ because it has a small norm in the following sense.

\bigskip
\begin{lemma}\label{lem:controlB2}
Consider $q$ in $\br{1,\infty}$. Let $m=e^{\kappa\abs{v}^\alpha}$ with $\kappa >0$ and $\alpha$ in $(0,2)$ or $m=\langle v \rangle^k$ with $k>k_q^*$ where
\begin{equation}\label{kq*}
k^*_q = \pa{\frac{16\pi b_\infty}{l_b}-2}^{1/q}\pa{1+\gamma + \frac{16\pi b_\infty}{l_b}}^{1-1/q}.
\end{equation}
Then $B^{(\delta)}_2$ satisfies
$$\forall h \in L^q_vL^\infty_x\pa{m}, \quad \norm{B^{(\delta)}_2(h)}_{L^q_vL^\infty_x\pa{\nu^{-1} m}} \leq \Delta_{m,q}(\delta)\norm{h}_{L^q_vL^\infty_x\pa{m}},$$
where $\nu(v)$ is the collision frequency $\eqref{LnuK}$ and $\Delta_{m,q}(\delta)$ is a constructive constant such that
\begin{itemize}
\item if $m=e^{\kappa\abs{v}^\alpha}$ then
$$\lim\limits_{\delta\to 0}\Delta_{m,q}(\delta) = 0$$
\item if $m=\langle v \rangle^k$ then
$$\lim\limits_{\delta \to 0}\Delta_{m,q}(\delta) = \phi_q(k) =\frac{16\pi b_\infty}{l_b}\left(\frac{1}{k+2}\right)^{1/q}\left(\frac{1}{k-1-\gamma}\right)^{1-1/q}.$$
\end{itemize}
\end{lemma}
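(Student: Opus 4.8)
The plan is to reduce the mixed-norm estimate to a purely velocity-space Schur-type estimate for a positive kernel operator, and then to compute the resulting Schur quantities asymptotically as $\delta\to0$. Since $B^{(\delta)}_2$ acts only on the velocity variable it is local in $x$; writing $H(v)=\sup_{x\in\Omega}\abs{h(x,v)}$ and bounding the bracket by its absolute value gives, for every $x$,
$$\abs{B^{(\delta)}_2 h(x,v)} \leq \tilde B^{(\delta)}_2 H(v) := \int_{\R^3\times\mathbb{S}^2}(1-\Theta_\delta)\cro{\mu'_* H' + \mu' H'_* + \mu H_*}b(\cos\theta)\abs{v-v_*}^\gamma\,d\sigma dv_*.$$
Because $\norm{h}_{L^q_vL^\infty_x(m)}=\norm{H}_{L^q_v(m)}$, it suffices to prove $\norm{\tilde B^{(\delta)}_2 H}_{L^q_v(\nu^{-1}m)}\leq\Delta_{m,q}(\delta)\norm{H}_{L^q_v(m)}$ for the positive operator $\tilde B^{(\delta)}_2$.

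Next I would write $\tilde B^{(\delta)}_2$ as a sum of three integral operators in the velocity variable; for the two gain contributions ($\mu'_*H'$ and $\mu'H'_*$) I pass to the Carleman representation, exactly as in Lemma \ref{lem:controlA} but now retaining the non-compact tails, so that each term takes the form $\int\mathcal K_j(v,w)H(w)\,dw$ with $\mathcal K_j\geq0$. Folding the weights into the kernel, $\tilde{\mathcal K}_j(v,w)=\nu(v)^{-1}m(v)\mathcal K_j(v,w)m(w)^{-1}$, the desired bound becomes an $L^q_v\to L^q_v$ estimate for $\sum_j\tilde{\mathcal K}_j$. Applying the weighted Schur test with weight one, for $q=\infty$ this is bounded by $\sum_j\mathcal A_j$ with the $L^\infty$-quantity $\mathcal A_j=\sup_v\int\tilde{\mathcal K}_j(v,w)\,dw$, and for $q=1$ by $\sum_j\mathcal B_j$ with the $L^1$-quantity $\mathcal B_j=\sup_w\int\tilde{\mathcal K}_j(v,w)\,dv$; these two endpoint quantities are precisely the origin of the exponents $1/q$ in $k^*_q$ and $\phi_q(k)$.

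The loss term and the bounded-velocity part of the gain terms are then disposed of by dominated convergence. For the loss contribution the prefactor $\mu(v)$ dominates $\nu(v)^{-1}m(v)$ for every admissible weight (polynomial, or stretch exponential with $\alpha<2$, both being sub-Gaussian), confining $v$ to a bounded set; on such a set $1-\Theta_\delta\to0$ pointwise off the grazing set $\abs{\cos\theta}=1$, and the cut-off bound $b_\infty<\infty$ together with $\eqref{lb}$ lets me pass to the limit, so $\mathcal A_3,\mathcal B_3\to0$. The same argument kills the contribution of the gain kernels from any fixed ball $\abs{v}\leq R$. Consequently only the large-velocity tail $\abs{v}\gtrsim\delta^{-1}$ survives, and there $1-\Theta_\delta\equiv1$, so the surviving quantities are $\delta$-independent in form and reduce to evaluating the limit as $\abs{v}\to\infty$ of the unrestricted gain Schur integrands.

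The crux is this large-velocity asymptotic analysis. In the gain integral the Gaussian factor $\mu'_*=\mu(v'_*)$ forces $\abs{v'_*}\lesssim1$; using momentum and energy conservation this pins $v\cdot v_*\approx0$ and $\abs{v'}^2\approx\abs{v}^2+\abs{v_*}^2$, so after integrating the $\sigma$-variable over the $O(\abs{v}^{-2})$ solid angle where $\abs{v'_*}\lesssim1$ and integrating $v_*$ over the transverse plane, the integral collapses to the radial integral $\int^\infty(\abs{v}^2+s^2)^{\gamma/2}m\pa{\sqrt{\abs{v}^2+s^2}}^{-1}s\,ds$. For $m=\langle v\rangle^k$ this is comparable to $\abs{v}^{\gamma-k}$, and after multiplying by $\nu(v)^{-1}m(v)\sim l_b^{-1}\abs{v}^{-\gamma}\langle v\rangle^k$ one obtains a finite nonzero limit; tracking all constants (the solid-angle and transverse-plane Jacobians, the normalisation of $\mu$, $b\leq b_\infty$, and $\nu(v)\sim l_b\abs{v}^\gamma$) gives $\mathcal A_1+\mathcal A_2\to\frac{16\pi b_\infty}{l_b}\frac1{k-1-\gamma}=\phi_\infty(k)$ and, by the dual computation with the $r^2\,dr$ Jacobian in the $v$-integration, $\mathcal B_1+\mathcal B_2\to\frac{16\pi b_\infty}{l_b}\frac1{k+2}=\phi_1(k)$. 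For $m=e^{\kappa\abs{v}^\alpha}$ with $\alpha<2$ the same radial integral is an incomplete Gamma integral of size $e^{-\kappa\abs{v}^\alpha}$, and after multiplying by $m(v)\sim e^{\kappa\abs{v}^\alpha}$ the result decays like $\abs{v}^{-\alpha}\to0$, whence $\Delta_{m,q}(\delta)\to0$. The main obstacle is carrying out this concentration-and-substitution computation rigorously enough to extract the sharp constants $\frac{16\pi b_\infty}{l_b}$, $k+2$ and $k-1-\gamma$, and to confirm the vanishing in the stretch-exponential case; the reductions preceding it are comparatively soft.
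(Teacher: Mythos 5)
The paper itself contains no proof of Lemma \ref{lem:controlB2}: immediately after the statement it defers to \cite{GMM} (Lemma 4.12, hard spheres) and to the extensions in \cite{BriDau} and \cite{Bri9}. Your outline reproduces the strategy of those references: since $B^{(\delta)}_2$ is local in $x$ you may pass to $H(v)=\sup_{x}\abs{h(x,v)}$ and treat a positive kernel operator in $v$ alone; the $L^\infty_v\to L^\infty_v$ and $L^1_v\to L^1_v$ bounds are exactly the two endpoint Schur quantities (no interpolation is needed since $q\in\br{1,\infty}$); the loss term and every piece of the truncation living at bounded velocities, at small or large relative velocities, or at grazing angles is killed by dominated convergence as $\delta\to 0$; and the limit is governed by the large-$\abs{v}$ asymptotics of the weighted gain operator, where the concentration $\abs{v'_*}\lesssim 1$ forced by $\mu'_*$ pins $v\cdot v_*\approx 0$ and produces $b_\infty$ (supremum of $b$ on a small cap) over $l_b$ (from $\nu(v)\sim l_b\abs{v}^\gamma$). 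The stretch-exponential case, with its $\abs{v}^{-\alpha}$ decay after Laplace asymptotics, is also correctly identified. All of this is sound.

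The gap is that you stop exactly where the content of the lemma begins. The only nontrivial assertion here is the precise value of $\lim_{\delta\to 0}\Delta_{m,q}(\delta)$ — everything downstream hinges on $\phi_q(k)<1$ for $k>k^*_q$ (Remark \ref{rem:kq*}) and on $\Delta_{m,q}(\delta)\to 0$ in the stretch-exponential case — and you assert the constants $16\pi b_\infty/l_b$, $k+2$ and $k-1-\gamma$ rather than derive them, explicitly labelling that derivation ``the main obstacle''. Moreover the one computation you do display is internally inconsistent: for $m=\langle v\rangle^k$ the integral $\int^\infty(\abs{v}^2+s^2)^{\gamma/2}m\pa{\sqrt{\abs{v}^2+s^2}}^{-1}s\,ds$ as written is of order $\abs{v}^{\gamma-k+2}$, not the $\abs{v}^{\gamma-k}$ you then multiply by $\nu(v)^{-1}m(v)$; the missing factor is precisely the $(\abs{v}^2+s^2)^{-1}$ solid-angle measure of the cap $\br{\abs{v'_*}\lesssim 1}$, which you mention in prose but drop from the formula. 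To close the proof you must carry out the cap/slab computation with all Jacobians tracked (the factor $4/\abs{v-v_*}^2$ for the pushforward of $d\sigma$ onto the collision sphere, the Gaussian mass of a planar section, the thickness of the slab $\abs{v_*\cdot \hat v}\lesssim 1$), verify that the two gain terms contribute equal halves of $\phi_q(k)$, and control the subleading regimes ($\abs{v_*}\gg\abs{v}$ and intermediate $\abs{v'_*}$), whose convergence is exactly what consumes the $1+\gamma$ in $k^*_\infty$; none of this is routine if the sharp constant, rather than mere boundedness, is the goal.
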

\bigskip

This was proved in \cite[Lemma 4.12]{GMM} in the case if hard sphere ($\gamma=b=1$) and extended to more general hard potential with cutoff kernels in \cite[Lemma 6.3]{BriDau} for $L^2_v$ and \cite[Lemma 2.4]{Bri9} for $L^\infty_v$.

\bigskip
\begin{remark}\label{rem:kq*}
We point out that for $k>k_q^*$ one can check that $\phi_q(k) <1$. This will be of great importance for $B^{(\delta)}_2$ to be controlled by the semigroup generated by collision frequency $S_{G_\nu}(t)$.
\end{remark}
\bigskip

We conclude this subsection with a control on the bilinear term in the $L^\infty_{x,v}$ setting. 
\bigskip
\begin{lemma}\label{lem:controlQ}
For all $h$ and $g$ such that $Q(h,g)$ is well-defined, $Q(h,g)$ belongs to $\cro{\mbox{Ker}(L)}^\bot$ in $L^2_v$:
\begin{equation*}
\pi_L\pa{Q(h,g)}=0.
\end{equation*}
Moreover, let $q$ be in $\br{1,\infty}$ and let  $m=e^{\kappa\abs{v}^\alpha}$ with $\kappa >0$ and $\alpha$ in $(0,2)$ or $m=\langle v \rangle^k$ with $k\geq 0$. Then there exists $C_Q >0$ such that for all $h$ and $g$,
\begin{equation*}
\norm{Q(h,g)}_{L^q_vL^\infty_x\pa{m \nu^{-1}}} \leq C_Q \norm{h}_{L^q_vL^\infty_x\pa{m}}\norm{g}_{L^q_vL^\infty_x\pa{m}}.
\end{equation*}
The constant $C_Q$ is explicit and depends only on $q$, $m$ and the kernel of the collision operator.
\end{lemma}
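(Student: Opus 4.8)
The plan is to treat the two assertions separately: the projection identity $\pi_L\pa{Q(h,g)}=0$ follows directly from the weak formulation of Lemma \ref{lem:integralQ}, while the weighted bilinear bound is a standard estimate that I would reduce to the velocity variable alone.

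\textbf{The projection identity.} First I would note that, by bilinearity and the polarisation $Q(h,g)=\frac12\cro{Q(h+g,h+g)-Q(h,h)-Q(g,g)}$, it suffices to work with $Q(f,f)$, to which Lemma \ref{lem:integralQ} applies verbatim. Recalling the expression \eqref{piL} of $\pi_L$, the claim $\pi_L\pa{Q(h,g)}=0$ is equivalent to $\int_{\R^3}Q(h,g)(v)\phi_i(v)\,dv=0$ for every $i=0,\dots,4$. Since each $\phi_i$ is a linear combination of the collision invariants $1,v_1,v_2,v_3,\abs v^2$, choosing $\Psi=\phi_i$ makes the bracket $\cro{\Psi'_*+\Psi'-\Psi_*-\Psi}$ in Lemma \ref{lem:integralQ} vanish pointwise, by conservation of mass, momentum and kinetic energy at a collision. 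Hence each integral is zero and the projection identity follows.

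\textbf{Reduction to velocity and the weight inequality.} For the quantitative estimate the first observation is that $Q$ is local in $x$. Writing $H(v)=\sup_x\abs{h(x,v)}$ and $G(v)=\sup_x\abs{g(x,v)}$, the triangle inequality inside the collision integral gives $\sup_x\abs{Q(h,g)(x,v)}\leq\mathcal Q(H,G)(v)$, where $\mathcal Q$ denotes the operator obtained by replacing the integrand of $Q$ by its absolute value. Since $\norm{H}_{L^q_v\pa{m}}=\norm{h}_{L^q_vL^\infty_x\pa{m}}$ and likewise for $G$, matters reduce to the purely velocitywise bound $\norm{\mathcal Q(H,G)}_{L^q_v\pa{m\nu^{-1}}}\leq C_Q\norm{H}_{L^q_v\pa{m}}\norm{G}_{L^q_v\pa{m}}$. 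The key structural ingredient is the \emph{sub-multiplicativity} of the weight along a collision, $m(v)\leq m(v')m(v'_*)$. For $m=\langle v\rangle^k$ this is immediate from energy conservation, $\langle v\rangle^2=1+\abs v^2\leq 1+\abs{v'}^2+\abs{v'_*}^2\leq\langle v'\rangle^2\langle v'_*\rangle^2$; for $m=e^{\kappa\abs v^\alpha}$ with $\alpha\in(0,2)$ it follows from $\abs v^\alpha\leq\pa{\abs{v'}^2+\abs{v'_*}^2}^{\alpha/2}\leq\abs{v'}^\alpha+\abs{v'_*}^\alpha$, the last step using the subadditivity of $t\mapsto t^{\alpha/2}$ for $\alpha/2\leq 1$.

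\textbf{Loss and gain terms.} I would then split $\mathcal Q=\mathcal Q^++\mathcal Q^-$. The loss part is routine: there $v$ occupies the unstarred slot, so after integrating the angular variable (bounded by $l_b$) one is left with $\int_{\R^3}\abs{v-v_*}^\gamma(\cdot)_*\,dv_*$, and using $\abs{v-v_*}^\gamma\leq C\pa{\langle v\rangle^\gamma+\langle v_*\rangle^\gamma}$ together with the integrability of $\langle v_*\rangle^\gamma m(v_*)^{-1}$ (when $q=\infty$) or Young's convolution inequality (when $q=1$) controls it by $\nu(v)$ times the product of the norms, which is exactly the required $m\nu^{-1}$ bound. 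The genuine obstacle is the gain term $\mathcal Q^+$. Here one writes $m(v)H(v')G(v'_*)\leq\frac{m(v)}{m(v')m(v'_*)}\cro{m(v')H(v')}\cro{m(v'_*)G(v'_*)}$ and uses $m(v)\leq m(v')m(v'_*)$ to transfer all the weight onto the two factors, leaving the integral of the unweighted kernel to be estimated. The crude bound $m(v)/\pa{m(v')m(v'_*)}\leq 1$ is not integrable in $v_*$, so one must exploit the extra decay of this ratio for large $\abs{v_*}$ produced by energy conservation, or equivalently perform the regular change of variables $(v_*,\sigma)\mapsto(v',v'_*)$ (whose Jacobian is controlled on the support of the kernel) and reduce to a loss-type convolution. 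This weighted gain estimate is the main difficulty, and it is precisely the content of the bilinear lemmas of \cite{BriDau} and \cite{Bri9}, from which the stated inequality follows after collecting the gain and loss contributions.
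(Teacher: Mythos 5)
Your proposal is correct and follows essentially the same route as the paper, whose proof is a two-line citation: \cite{Bri1} (Appendix A.2.1) for the orthogonality --- which is exactly the collision-invariant cancellation you spell out via Lemma \ref{lem:integralQ} and polarisation --- and \cite{GMM} Lemma 5.16 together with $\nu(v)\sim\langle v\rangle^\gamma$ for the weighted bilinear bound, whose standard proof (locality in $x$ and reduction to the velocity variable, submultiplicativity of $m$ along collisions, loss/gain splitting with a change of variables for the gain term) is precisely what you sketch before deferring the gain estimate to the literature. The only caveat worth recording is that your loss-term argument, like the cited lemma, implicitly requires $\langle v\rangle^\gamma m(v)^{-1}\in L^1_v$ for $q=\infty$ (respectively $\langle v\rangle^\gamma\lesssim m$ for $q=1$), which the statement's ``$k\geq 0$'' does not guarantee but which holds in every use the paper makes of the lemma ($k>5+\gamma$).
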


\bigskip
\begin{proof}[Proof of Lemma $\ref{lem:controlQ}$]
Since we use the symmetric definition of $Q$ $\eqref{Qfg}$ the orthogonality property can be found in \cite{Bri1} Appendix $A.2.1$.
\par The estimate follows directly from \cite{GMM} Lemma $5.16$  and the fact that $\nu(v) \sim m(v)$ (see $\eqref{nu0nu1}$).
\end{proof}
\bigskip


\subsection{Study of equation $\eqref{f1}$ in $L^\infty_{x,v}\pa{m}$}\label{subsec:f1}

In the section we study the differential equation $\eqref{f1}$. We prove well-posedness for this problem and above all exponential decay as long as the initial data is small. The case of specular reflections and the case of diffusion are rather different since their treatment relies on the representation of the semigroup $S_{G_\nu}$ we derived in Section $\ref{sec:semigroupcollisionfrequency}$.


\subsubsection{The case of specular reflections in $L^\infty_{x,v}\pa{m}$}\label{subsubsec:f1SR}

We prove the following well-posedness result in the case of specular reflections.

\begin{prop}\label{prop:f1SR}
Let $\Omega$ be a $C^1$ bounded domain. Let $m=e^{\kappa\abs{v}^\alpha}$ with $\kappa >0$ and $\alpha$ in $(0,2)$ or $m=\langle v \rangle^k$ with $k> 5+\gamma$. Let $f_0$ be in $L^\infty_{x,v}\pa{m}$ and $g(t,x,v)$ in $L^\infty_{x,v}\pa{m}$. Then there exists $\delta_m$, $\lambda_m(\delta)>0$ such that for any $\delta$ in $(0,\delta_m]$ there exist $C_1$, $\eta_1>0$ such that if
$$ \norm{f_0}_{L^\infty_{x,v}\pa{m}}\leq \eta_1 \quad\mbox{and}\quad \norm{g}_{L^\infty_t L^\infty_{x,v}\pa{m}}\leq \eta_1,$$
then there exists a solution $f_1$ to 
\begin{equation}\label{f1propSR}
\partial_tf_1 = G_\nu f_1 + B^{(\delta)}_2f_1 + Q(f_1,f_1+g),
\end{equation}
with initial data $f_0$ and satisfying the specular reflections boundary conditions $\eqref{SR}$. Moreover, this solution satisfies
$$\forall t\geq 0,\quad \norm{f_1(t)}_{L^\infty_{x,v}\pa{m}}\leq C_1 e^{-\lambda_m(\delta) t}\norm{f_0}_{L^\infty_{x,v}\pa{m}},$$
and also
$$\lim\limits_{\delta \to 0} \lambda_{e^{\kappa\abs{v}^\alpha}}(\delta)=\nu_0 \quad\mbox{and}\quad \lim\limits_{k\to\infty}\lim\limits_{\delta\to 0} \lambda_{\langle v \rangle^k}(\delta) = \nu_0.$$
The constants $C_1$ and $\eta_1$ are constructive and only depend on $m$, $\delta$ and the kernel of the collision operator.
\end{prop}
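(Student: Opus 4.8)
The plan is to solve the differential equation $\eqref{f1propSR}$ by a Duhamel (fixed-point) argument built on the semigroup $S_{G_\nu}(t)$ constructed in Proposition $\ref{prop:semigroupGnuSR}$. Writing the equation in mild form,
\begin{equation*}
f_1(t) = S_{G_\nu}(t)f_0 + \int_0^t S_{G_\nu}(t-s)\cro{B^{(\delta)}_2 f_1(s) + Q\pa{f_1(s),f_1(s)+g(s)}}\:ds,
\end{equation*}
I would set up the map $\Psi$ sending a candidate $f_1$ to the right-hand side and look for a fixed point in the space of functions $u$ with finite norm $\norm{u}_X = \sup_{t\geq 0}\cro{e^{\lambda t}\norm{u(t)}_{L^\infty_{x,v}\pa{m}}}$ for a decay rate $\lambda<\nu_0$ to be chosen. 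The crucial structural input is that both the linear perturbation $B^{(\delta)}_2$ and the quadratic term $Q$ are controlled in the weaker weight $m\nu^{-1}$ by the stronger weight $m$: this is exactly what Lemma $\ref{lem:controlB2}$ and Lemma $\ref{lem:controlQ}$ provide, and it is what allows the multiplicative factor $\nu(v)$ lost by those operators to be reabsorbed by the $e^{-\nu_0(t-s)}$ smoothing in the semigroup.

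The key estimate is therefore a Duhamel bound of the form
\begin{equation*}
\norm{\int_0^t S_{G_\nu}(t-s)H(s)\:ds}_{L^\infty_{x,v}\pa{m}} \leq C\int_0^t e^{-\nu_0(t-s)}\norm{H(s)}_{L^\infty_{x,v}\pa{m\nu^{-1}}}\:ds,
\end{equation*}
which follows because $S_{G_\nu}(t)$ contracts the $L^\infty_{x,v}\pa{m}$-norm with rate $e^{-\nu_0 t}$ while the extra $\nu(v)$ weight in the source absorbs the lack of a full $m$-bound. Applying this with $H = B^{(\delta)}_2 f_1$ gives a contribution bounded by $\Delta_{m,q}(\delta)$ times $\norm{f_1}_X$ (up to the time-convolution factor $\pa{\nu_0-\lambda}^{-1}$), and applying it with $H=Q(f_1,f_1+g)$ gives a quadratic contribution bounded by $C_Q\norm{f_1}_X\pa{\norm{f_1}_X+\norm{g}_X}$. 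Summing, $\Psi$ maps a ball of $X$ of radius $\eta_1$ into itself and is a contraction provided $\delta$ is small enough that $\Delta_{m,q}(\delta)<\nu_0-\lambda$ and $\eta_1$ is small enough to dominate the quadratic terms; the Banach fixed-point theorem then yields a unique $f_1$ with the asserted exponential decay, and the decay constant $\lambda_m(\delta)$ can be read off as any $\lambda<\nu_0$ for which the contraction closes.

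The main obstacle is to ensure that the linear term genuinely acts as a \emph{small} perturbation, i.e. that the coefficient $\Delta_{m,q}(\delta)$ can be made strictly smaller than the available decay budget $\nu_0-\lambda$. For the stretch-exponential weights Lemma $\ref{lem:controlB2}$ gives $\Delta_{m,q}(\delta)\to 0$ as $\delta\to 0$, so this is immediate and one recovers $\lambda_{e^{\kappa\abs{v}^\alpha}}(\delta)\to\nu_0$. For the polynomial weight $m=\langle v\rangle^k$ the limit is only $\phi_\infty(k)=\tfrac{16\pi b_\infty}{l_b}\tfrac{1}{k-1-\gamma}$, which is strictly less than one precisely under the hypothesis $k>5+\gamma$ (indeed $k>k_\infty^*=1+\gamma+\tfrac{16\pi b_\infty}{l_b}$ guarantees $\phi_\infty(k)<1$, see Remark $\ref{rem:kq*}$); letting $k\to\infty$ drives $\phi_\infty(k)\to 0$ and hence $\lambda_{\langle v\rangle^k}(\delta)\to\nu_0$, which is the second limit claimed. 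The only additional care needed is that the positive, exponentially decaying contraction property of $S_{G_\nu}(t)$ in $L^\infty_{x,v}\pa{m}$ makes the Duhamel iteration well-defined at the level of traces on $\Lambda$, but this is exactly the content of Proposition $\ref{prop:semigroupGnuSR}$ and requires no new estimate in the specular case.
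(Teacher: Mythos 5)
Your overall architecture is the same as the paper's: a Duhamel formulation built on $S_{G_\nu}(t)$ from Proposition \ref{prop:semigroupGnuSR}, with $B^{(\delta)}_2$ and $Q$ treated as perturbations measured in the weight $m\nu^{-1}$ via Lemma \ref{lem:controlB2} and Lemma \ref{lem:controlQ}, and a contraction in a space weighted by $e^{\lambda t}$ in time (the paper runs an explicit iteration $h_{l+1}$ with the $B^{(\delta)}_2$ term kept implicit rather than invoking Banach's theorem directly, but that is only bookkeeping).

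The one step that does not hold as written is your displayed ``key estimate''
$$\norm{\int_0^t S_{G_\nu}(t-s)H(s)\:ds}_{L^\infty_{x,v}\pa{m}} \leq C\int_0^t e^{-\nu_0(t-s)}\norm{H(s)}_{L^\infty_{x,v}\pa{m\nu^{-1}}}\:ds.$$
No such constant $C$ exists: pointwise along the backward characteristic one has $m(v)\abs{S_{G_\nu}(t-s)H(s)(x,v)} \leq \nu(v)\,e^{-\nu(v)(t-s)}\norm{H(s)}_{L^\infty_{x,v}\pa{m\nu^{-1}}}$ (using that specular reflection preserves $\abs{v}$, hence $\nu$ and $m$), and the prefactor $\nu(v)$ is unbounded, so it cannot be traded for a $v$-independent constant before the time integration (test with $H$ concentrated near $s=t$ and supported at large $\abs{v}$). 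The correct mechanism, which is the one in the paper, is to keep $\nu(v)e^{-\nu(v)(t-s)}$ and use the splitting $-\nu(v)(t-s)\leq -\eps\nu_0(t-s) - (1-\eps)\nu(v)(t-s)$ together with $\int_0^\infty \nu(v)e^{-(1-\eps)\nu(v)\tau}\:d\tau = (1-\eps)^{-1}$; this produces the dimensionless contraction coefficient $\Delta_{m,\infty}(\delta)/(1-\eps)$ with $\lambda=\eps\nu_0$, and the closing condition is $\Delta_{m,\infty}(\delta)<1-\eps$. Your version yields instead the condition $\Delta_{m,\infty}(\delta)<\nu_0-\lambda$, which differs by a factor $\nu_0$ and is not merely cosmetic: in the polynomial case $\Delta_{m,\infty}(\delta)\to\phi_\infty(k)$ as $\delta\to 0$, which is not small for fixed $k$, so if $\nu_0<\phi_\infty(k)$ your condition cannot be met for any $\lambda>0$, whereas the correct condition always gives the positive rate $\lambda<\nu_0\pa{1-\phi_\infty(k)}$. (You are right, incidentally, that $\phi_\infty(k)<1$ requires $k>k^*_\infty=1+\gamma+16\pi b_\infty/l_b$ rather than just $k>5+\gamma$; the paper's main theorems impose the former.) Once the displayed estimate is corrected, the rest of your argument goes through and coincides with the paper's proof.
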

\bigskip

\begin{proof}[Proof of Proposition $\ref{prop:f1SR}$]
If $f_1$ is solution to $\eqref{f1propSR}$ then, thanks to Proposition $\ref{prop:semigroupGnuSR}$, $G_\nu$ combined with boundary conditions generates a semigroup $S_{G_\nu}(t)$ in $L^\infty_{x,v}\pa{m}$. Therefore $f_1$ has the following Duhamel representation almost everywhere in $\R^+\times\Omega\times\R^3$
\begin{equation}\label{duhamelf1}
f_1(t,x,v) = S_{G_\nu}(t)f_0 + \int_0^t S_{G_\nu}(t-s)\cro{B^{(\delta)}_2 f_1+Q(f_1,f_1+g)}\:ds.
\end{equation}

\bigskip
To prove existence and exponential decay we use the following iteration scheme starting from $h_0 = 0$.
$$\left\{\begin{array}{l} \disp{h_{l+1} = S_{G_\nu}(t)f_0 + \int_0^t S_{G_\nu}(t-s)\cro{B^{(\delta)}_2 h_{l+1} + Q(h_l,h_l+g)}\:ds } \vspace{2mm}\\ \disp{h_{l+1}(0,x,v) = f_0(x,v).}\end{array}\right.$$
A contraction argument with the Duhamel representation would imply that $\pa{h_l}$ is well-defined in $L^\infty_{x,v}\pa{m}$ and satisfies specular reflections boundary condition (because $S_{G_\nu}$ does). The computations to prove this contraction are similar to the ones we make to prove that $\pa{h_l}$ is a Cauchy sequence and we therefore only write down the latter.
\begin{equation*}
\begin{split}
\abs{h_{l+1}-h_l}(t,x,v) \leq& \int_0^t \abs{S_{G_\nu}(t-s)B^{(\delta)}_2\pa{h_{l+1}-h_l}}\:ds
\\&+ \int_0^t \abs{S_{G_\nu}(t-s)Q\pa{h_l-h_{l-1},h_l+h_{l-1}+g}}\:ds.
\end{split}
\end{equation*}
 
\par For almost all $(t,x,v)$ using the representation of $S_{G_\nu}(t)$ $\eqref{representationGnuSR}$ with $\eps=0$ there exists $(X(t,x,v),V(t,x,v))$ in $\Omega\times\R^3$ such that the backward characteristics starting at $(x,v)$ reaches the initial plane $\br{t=0}$ at $(X(t,x,v),V(t,x,v))$ and
$$S_{G_{\nu}}(t)h = e^{-\nu(v)t}h(X(t,x,v),V(t,x,v)).$$
This implies that for almost all $(t,x,v)$
\begin{equation}\label{f1SRdifferenceCauchy}
\begin{split}
m(v)\abs{h_{l+1}-h_l}(t,x,v) &\leq \int_0^t e^{-\nu(v)(t-s)}m\abs{B^{(\delta)}_2\pa{h_{l+1}-h_l}(s,X(t-s),V(t-s))}ds
\\&\:+\int_0^t e^{-\nu(v)(t-s)}m\abs{Q\pa{h_l-h_{l-1},h_l+h_{l-1}+g}(s,X,V)}ds
\\&= I_1+I_2.
\end{split}
\end{equation}

\bigskip
$I_1$ is dealt with using Lemma $\ref{lem:controlB2}$,
\begin{eqnarray*}
I_1 &\leq& \int_0^t \nu(v)e^{-\nu(v)(t-s)}\norm{{B^{(\delta)}_2\pa{h_{l+1}-h_l}(s)}}_{L^\infty_{x,v}\pa{m\nu^{-1}}}\:ds
\\&\leq& \Delta_{m,\infty}(\delta)\int_0^t \nu(v)e^{-\nu(v)(t-s)}\norm{(h_{l+1}-h_l)(s)}_{L^\infty_{x,v}\pa{m}}\:ds.
\end{eqnarray*}
For $\delta$ small enough we have $\Delta_{m,\infty}(\delta)<1$, as emphasized in Remark $\ref{rem:kq*}$. Therefore,
\begin{equation}\label{epsSR}
\exists\: \eps \in (0,1), \quad \eps < 1-\Delta_{m,\infty}(\delta).
\end{equation}
Since $0< \eps <1$ it follows
$$\forall 0\leq s \leq t,\quad -\nu(v)(t-s) \leq -\eps\nu_0 t -\nu(v)(1-\eps)(t-s) +\eps\nu_0 s.$$
We can further bound $I_1$,
\begin{equation}\label{I1SR}
\begin{split}
I_1 &\leq e^{-\eps\nu_0 t} \Delta_{m,\infty}(\delta)\pa{\int_0^t \nu(v)e^{-\nu(v)(1-\eps)s}\:ds}\sup\limits_{0\leq s \leq t}\pa{e^{\eps\nu_0s}\norm{h_{l+1}-h_l}_{L^\infty_{x,v}\pa{m}}}
\\&\leq e^{-\eps \nu_0 t} \frac{\Delta_{m,\infty}(\delta)}{1-\eps}\sup\limits_{0\leq s \leq t}\pa{e^{\eps \nu_0 s}\norm{h_{l+1}-h_l}_{L^\infty_{x,v}\pa{m}}}.
\end{split}
\end{equation}

\bigskip
For the second term $I_2$ we multiply by $\nu(v)\nu(v)^{-1}$ to compensate for the loss of weight $\nu(v)$ in the control of $Q$. Then, with previous computations and using Lemma $\ref{lem:controlQ}$ this yields
\begin{equation}\label{I2SR}
\begin{split}
I_2 \leq& \frac{C_Q}{1-\eps} e^{-\eps \nu_0 t}\sup\limits_{0\leq s\leq t}\pa{e^{\eps\nu_0 s}\norm{h_l-h_{l-1}}_{L^\infty_{x,v}\pa{m}}}
\\&\quad\quad\quad\times\cro{\norm{h_l}_{L^\infty_{t,x,v}\pa{m}}+\norm{h_{l-1}}_{L^\infty_{t,x,v}\pa{m}}+\norm{g}_{L^\infty_{t,x,v}\pa{m}}}.
\end{split}
\end{equation}

\bigskip
We plug $\eqref{I1SR}$ and $\eqref{I2SR}$ into $\eqref{f1SRdifferenceCauchy}$ and multiply it by $e^{\eps\nu_0 t}$ before taking the supremum in $x$, $v$ and $t$. It yields
\begin{equation*}
\begin{split}
\pa{1-\frac{\Delta_{m,\infty}(\delta)}{1-\eps}}&\sup\limits_{0\leq s \leq t}\pa{e^{\eps\nu_0 s}\norm{h_{l+1}-h_{l}}_{L^\infty_{x,v}\pa{m}}} 
\\&\quad\quad\quad\leq   \frac{C_Q}{1-\eps}\cro{\norm{h_l}_{L^\infty_{t,x,v}\pa{m}}+\norm{h_{l-1}}_{L^\infty_{t,x,v}\pa{m}}+\norm{g}_{L^\infty_{t,x,v}\pa{m}}} 
\\&\quad\quad\quad\quad\quad\quad\times\sup\limits_{0\leq s\leq t}\pa{e^{\eps\nu_0 s}\norm{h_l-h_{l-1}}_{L^\infty_{x,v}\pa{m}}}.
\end{split}
\end{equation*}
Our choice of $\eps$ $\eqref{epsSR}$ implies $\cro{1-\Delta_{m,\infty}(\delta)(1-\eps)^{-1}} > 0$. Denoting by $C_m$ any positive constant independent of $l$ it follows
\begin{equation}\label{f1cauchyfinal}
\begin{split}
&\sup\limits_{0\leq s \leq t}\cro{e^{\eps\nu_0 s}\norm{h_l-h_{l+1}}_{L^\infty_{x,v}\pa{m}}}
\\&\leq C_m\cro{\norm{h_l}_{L^\infty_{t,x,v}\pa{m}}+\norm{h_{l-1}}_{L^\infty_{t,x,v}\pa{m}}+\norm{g}_{L^\infty_{t,x,v}\pa{m}}} \sup\limits_{0\leq s \leq t}\cro{e^{\eps\nu_0 s}\norm{h_l-h_{l-1}}_{L^\infty_{x,v}\pa{m}}}.
\end{split}
\end{equation}

\bigskip
We now prove that $\norm{h_l}_{L^\infty_{x,v}\pa{m}}$ is uniformly bounded. Starting with the definition of $h_{l+1}$ and making the same computations without subtracting $h_l$ we obtain
\begin{equation*}
\begin{split}
&\pa{1-\frac{\Delta_{m,\infty}(\delta)}{1-\eps}}\sup\limits_{0\leq s \leq t}\pa{e^{\eps\nu_0 s}\norm{h_{l+1}}_{L^\infty_{x,v}\pa{m}}} 
\\&\quad\quad\quad\leq e^{-\nu_0(1-\eps)t}\norm{f_0}_{L^\infty_{x,v}}+ C_Q\frac{\Delta_{m,\infty}(\delta)}{1-\eps}\sup\limits_{0\leq s \leq t}\pa{e^{\eps\nu_0 s}\norm{h_l}_{L^\infty_{[0,t],x,v}\pa{m}}}
\\&\quad\quad\quad\quad\quad\quad\quad\quad\quad\quad\quad\quad\quad\quad\times\pa{\norm{h_l}_{L^\infty_{[0,t],x,v}\pa{m}} + \norm{g}_{L^\infty_{[0,t],x,v}\pa{m}}},
\end{split}
\end{equation*}
where we used the exponential decay of $S_{G_\nu}(t)$ on $f_0$ (Proposition $\ref{prop:semigroupGnuSR}$). Denoting $C^{(1)}_m$ and $C^{(2)}_m$ any positive constant independent of $l$, we further bound
\begin{equation}\label{f1boundfinal}
\begin{split}
&\sup\limits_{0\leq s \leq t}\pa{e^{\eps\nu_0 s}\norm{h_{l+1}}_{L^\infty_{x,v}\pa{m}}}
\\&\leq C^{(1)}_m\norm{f_0}_{L^\infty_{x,v}} + C^{(2)}_m \sup\limits_{0\leq s \leq t}\pa{e^{\eps\nu_0 s}\norm{h_l}_{L^\infty_{[0,t],x,v}\pa{m}}}\cro{\norm{h_l}_{L^\infty_{[0,t],x,v}\pa{m}} + \norm{g}_{L^\infty_{[0,t],x,v}\pa{m}}}.
\end{split}
\end{equation}
Therefore, if $\norm{h_0}_{L^\infty_{x,v}}$ and $\norm{g}_{L^\infty_{t,x,v}}$ are smaller than $\eta_1>0$ such that
$$C^{(1)}_m \eta_1 + 2\pa{1+C^{(1)}_m}^2 C^{(2)}_m\eta_1^2 \leq \pa{1+C^{(1)}_m}\eta_1$$
then for all $l\in \N$
\begin{equation}\label{f1decaySR}
\sup\limits_{t\geq 0}\pa{e^{\eps\nu_0 s}\norm{h_{l+1}}_{L^\infty_{x,v}\pa{m}}} \leq \pa{1+C^{(1)}_m}\norm{f_0}_{L^\infty_{x,v}},
\end{equation}
which gives the desired exponential decay if $\pa{h_l}_{l \in \N}$ converges.

\bigskip
Combining $\eqref{f1cauchyfinal}$ and $\eqref{f1decaySR}$ we have
$$\sup\limits_{0\leq s \leq t}\pa{e^{\eps\nu_0 s}\norm{h_{l+1}-h_l}_{L^\infty_{t,x,v}(m)}} \leq 3C_m\pa{1+C^{(1)}_m}\eta_1 \sup\limits_{0\leq s \leq t}\pa{e^{\eps\nu_0 s}\norm{h_l-h_{l-1}}_{L^\infty_{x,v}(m)}}.$$
Therefore, for $\eta_1$ small enough the sequence $\pa{h_l}_{l \in \N}$ is a Cauchy sequence in $L^\infty_{t,x,v}(m)$ and therefore converges towards $f_1$. From $\eqref{f1decaySR}$, $f_1$ satisfies the desired exponential decay with $\lambda_m\pa{\delta} = \eps\nu_0$.
\par The asymptotic behaviour of $\lambda_m\pa{\delta}$ is straightforward since as $\Delta_{m,\infty}(\delta)$ is closer to $0$ we can choose $\eps$ closer to $1$ by $\eqref{epsSR}$.
\end{proof}
\bigskip


\subsubsection{The case of Maxwellian diffusion in $L^\infty_{x,v}\pa{e^{\kappa\abs{v}^\alpha}}$}\label{subsubsec:f1MD}

We prove the following well-posedness result for $\eqref{f1}$ in the case of Maxwellian diffusion.

\begin{prop}\label{prop:f1MD}
Let $\Omega$ be a $C^1$ bounded domain. Let $m=e^{\kappa\abs{v}^\alpha}$ with $\kappa >0$ and $\alpha$ in $(0,2)$.Let $f_0$ be in $L^\infty_{x,v}\pa{m}$ and $g(t,x,v)$ in $L^\infty_{x,v}\pa{m}$. Then there exists $\delta_m$, $\lambda_m(\delta)>0$ such that for any $\delta$ in $(0,\delta_m]$ there exist $C_1$, $\eta_1>0$ such that if
$$ \norm{f_0}_{L^\infty_{x,v}\pa{m}}\leq \eta_1 \quad\mbox{and}\quad \norm{g}_{L^\infty_tL^\infty_{x,v}\pa{m}}\leq \eta_1,$$
then there exists a solution $f_1$ to 
\begin{equation}\label{f1propMD}
\partial_tf_1 = G_\nu f_1 + B^{(\delta)}_2f_1 + Q(f_1,f_1+g),
\end{equation}
with initial data $f_0$ and satisfying the Maxwellian diffusion boundary conditions $\eqref{MD}$. Moreover, this solution satisfies
$$\forall t\geq 0,\quad \norm{f_1(t)}_{L^\infty_{x,v}\pa{m}}\leq C_1 e^{-\lambda_m(\delta) t}\norm{f_0}_{L^\infty_{x,v}\pa{m}},$$
and also
$$\lim\limits_{\delta \to 0} \lambda_m(\delta)=\nu_0.$$
The constants $C_1$ and $\eta_1$ are constructive and only depend on $m$, $\delta$ and the kernel of the collision operator.
\end{prop}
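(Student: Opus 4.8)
The plan is to mirror the proof of Proposition \ref{prop:f1SR}, replacing the pointwise representation of $S_{G_\nu}(t)$ — unavailable for Maxwellian diffusion — by the norm estimates of Proposition \ref{prop:semigroupGnuMD} and the $\nu$-gain of Corollary \ref{cor:SGnuMDcontrolnu-1}. Starting from $h_0=0$ I would build the solution to \eqref{f1propMD} through the Duhamel iteration
\[ h_{l+1}(t) = S_{G_\nu}(t)f_0 + \int_0^t S_{G_\nu}(t-s)\cro{B^{(\delta)}_2 h_{l+1}(s) + Q(h_l, h_l + g)(s)}\:ds, \]
each $h_{l+1}$ satisfying the diffusion boundary condition \eqref{MD} because $S_{G_\nu}(t)$ does (Proposition \ref{prop:semigroupGnuMD}); the inner linear problem for $h_{l+1}$ is itself solved by a Neumann series in $B^{(\delta)}_2$, convergent once $\Delta_{m,\infty}(\delta)$ is small. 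One then shows $\pa{h_l}$ is uniformly bounded and Cauchy for the weighted-in-time norm $\sup_{s\geq 0}e^{\eps\nu_0 s}\norm{\cdot(s)}_{L^\infty_{x,v}(m)}$, with $\eps\in(0,1)$ to be fixed.

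The heart of the matter is to estimate the Duhamel integral of $\Phi = B^{(\delta)}_2\pa{h_{l+1}-h_l} + Q\pa{h_l-h_{l-1},\,h_l+h_{l-1}+g}$ in $L^\infty_{x,v}(m)$, which I would split according to whether the backward characteristic over the elapsed time $t-s$ reaches the initial plane ($\br{t_1\leq 0}$) or hits the boundary ($\br{t_1>0}$). On $\br{t_1\leq 0}$ the semigroup acts explicitly, $S_{G_\nu}(t-s)\Phi(s)=e^{-\nu(v)(t-s)}\Phi(s,x-(t-s)v,v)$, exactly as in \eqref{startinductionMD}, so the loss of the weight $\nu$ in Lemma \ref{lem:controlB2} and Lemma \ref{lem:controlQ} is compensated pointwise by the factor $\nu(v)$ in $\int_0^t \nu(v)e^{-\nu(v)(t-s)}\,ds\leq 1$, reproducing the clean absorption of \eqref{I1SR}--\eqref{I2SR}. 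On $\br{t_1>0}$ there is no pointwise formula, and this is where Corollary \ref{cor:SGnuMDcontrolnu-1} is indispensable: it yields $\norm{S_{G_\nu}(t-s)\Phi(s)\mathbf{1}_{t_1>0}}_{L^\infty_{x,v}(m)}\leq C_m e^{-\nu_0'(t-s)}\norm{\Phi(s)}_{L^\infty_{x,v}(m\nu^{-1})}$, gaining precisely the factor $\nu$ needed to absorb the $\nu^{-1}$ loss in $\norm{B^{(\delta)}_2 h}_{L^\infty_{x,v}(m\nu^{-1})}\leq \Delta_{m,\infty}(\delta)\norm{h}_{L^\infty_{x,v}(m)}$ and $\norm{Q(h,g)}_{L^\infty_{x,v}(m\nu^{-1})}\leq C_Q\norm{h}_{L^\infty_{x,v}(m)}\norm{g}_{L^\infty_{x,v}(m)}$. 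Here one must also take $\nu_0'>\eps\nu_0$ so that the convolution of $e^{-\nu_0'(t-s)}$ against the weight $e^{\eps\nu_0 s}$ integrates to a bounded multiple of $e^{-\eps\nu_0 t}$.

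Collecting both pieces, multiplying by $e^{\eps\nu_0 t}$ and taking suprema as in \eqref{f1cauchyfinal}--\eqref{f1boundfinal}, the contraction inequality takes the schematic form
\[ \pa{1-\frac{\Delta_{m,\infty}(\delta)}{1-\eps}-\frac{C_m\Delta_{m,\infty}(\delta)}{\nu_0'-\eps\nu_0}}\sup_{0\leq s\leq t}\cro{e^{\eps\nu_0 s}\norm{h_{l+1}-h_l}_{L^\infty_{x,v}(m)}} \leq C\,\mathcal{Q}_l \sup_{0\leq s\leq t}\cro{e^{\eps\nu_0 s}\norm{h_l-h_{l-1}}_{L^\infty_{x,v}(m)}}, \]
where $\mathcal{Q}_l$ gathers the small factors $\norm{h_l}+\norm{h_{l-1}}+\norm{g}$. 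This is exactly where the restriction to a stretch exponential $m=e^{\kappa\abs{v}^\alpha}$ becomes essential: by Lemma \ref{lem:controlB2} one has $\Delta_{m,\infty}(\delta)\to 0$ as $\delta\to 0$, so for $\delta$ small enough the bracketed coefficient on the left is bounded below by a positive constant and the $B^{(\delta)}_2$ contributions can be moved to the left-hand side. For a polynomial weight $\Delta_{m,\infty}(\delta)$ tends only to $\phi_\infty(k)>0$, and since it is now multiplied by the uncontrolled constant $C_m$ of the implicit semigroup estimate, the absorption fails — this is why the polynomial case is excluded from the statement.

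Smallness of $\norm{f_0}_{L^\infty_{x,v}(m)}$ and $\norm{g}_{L^\infty_{t,x,v}(m)}$ below $\eta_1$ then closes the quadratic bootstrap for the uniform bound exactly as in \eqref{f1decaySR}, after which the same computation shows $\pa{h_l}$ is Cauchy; its limit $f_1$ solves \eqref{f1propMD} with the diffusion boundary condition and decays like $e^{-\lambda_m(\delta)t}$ with $\lambda_m(\delta)=\eps\nu_0$. Since $\Delta_{m,\infty}(\delta)\to 0$ lets $\eps$ be taken arbitrarily close to $1$ (and $\nu_0'$ close to $\nu_0$), we obtain $\lambda_m(\delta)\to\nu_0$. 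The single genuine obstacle, compared with the specular case, is the loss of the pointwise contractive representation on $\br{t_1>0}$; everything hinges on Corollary \ref{cor:SGnuMDcontrolnu-1} supplying the compensating factor $\nu$, together with the stretch-exponential smallness $\Delta_{m,\infty}(\delta)\to 0$ that defeats the extra constant $C_m$.
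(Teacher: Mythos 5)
Your proposal is correct and follows essentially the same route as the paper: the same Duhamel iteration, the same splitting of the integral according to whether the backward characteristic reaches $\br{t=0}$ or the boundary, and the same reliance on the $\nu$-gain after the first diffusive rebound together with $\Delta_{m,\infty}(\delta)\to 0$ to absorb the non-explicit constant $C_m$ (which is exactly why the polynomial weights are deferred to Proposition \ref{prop:f1MDpoly}). The only difference is presentational: you invoke Corollary \ref{cor:SGnuMDcontrolnu-1} as a black box for the entire $\br{t_1>0}$ contribution, whereas the paper unfolds the $p$-rebound representation \eqref{inductionMDf1} and treats separately the trajectories reaching the initial plane ($I_3^{(1)}$, estimated by hand using the boundedness of $c_\mu\mu(v)m(v)$ and the integrability of $\abs{v}\nu(v)/m(v)$ against $d\sigma_{x_i}$) and the remainder after $p$ bounces ($I_3^{(2)}$, where the Corollary is actually used) — an inline re-derivation of the very estimate you quote.
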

\bigskip

\begin{proof}[Proof of Proposition $\ref{prop:f1MD}$]
Thanks to Proposition  $\ref{prop:semigroupGnuMD}$, $G_\nu$ combined with Maxwellian diffusion boundary conditions generates a semigroup $S_{G_\nu}(t)$ in all the $L^\infty_{x,v}\pa{m}$. Therefore a solution $f_1$ to $\eqref{f1propMD}$ has the following Duhamel representation almost everywhere in $\R^+\times\Omega\times\R^3$
\begin{equation}\label{duhamelf1MD}
f_1(t,x,v) = S_{G_\nu}(t)f_0 + \int_0^t S_{G_\nu}(t-s)\cro{B^{(\delta)}_2 f_1+Q(f_1,f_1+g)}\:ds.
\end{equation}

\bigskip
We use the same iteration as for specular reflections, starting from $h_0=0$ and defining
$$\left\{\begin{array}{l} \disp{h_{l+1} = S_{G_\nu}(t)f_0 + \int_0^t S_{G_\nu}(t-s)\cro{B^{(\delta)}_2 h_{l+1} + Q(h_l,h_l+g)}\:ds } \vspace{2mm}\\ \disp{h_{l+1}(0,x,v) = f_0(x,v).}\end{array}\right.$$
Again, the well-posedness of $h_{l+1}$ follows a contraction argument with the Duhamel representation and the estimates we shall prove in order to show that $\pa{h_l}_{l \in \N}$ is a Cauchy sequence with uniform exponential decay. We therefore only prove the latter.
\par Using the implicit representation of $S_{G_\nu}(t)$ $\eqref{startinductionMD}$-$\eqref{inductionMD}$ (note that we do not have the change of weight) we have for $h$ in $L^\infty_{x,v}\pa{m}$:
\begin{itemize}
\item if $t_1\leq 0$ then 
\end{itemize}
\begin{equation}\label{startinductionMDf1}
S_{G_\nu}(t)h (x,v) = e^{-\nu(v)t}h(x-tv,v);
\end{equation}
\begin{itemize}
\item if $t_1 > 0$ then for all $p\geq 2$, 
\end{itemize}
\begin{equation}\label{inductionMDf1}
\begin{split}
S_{G_\nu}(t)h (x,v) &= c_\mu \mu(v) e^{-\nu(v)(t-t_1)}\sum\limits_{i=1}^{p}\int_{\prod\limits_{j=1}^{p}\br{v_j \cdot n(x_i)>0}}\mathbf{1}_{[t_{i+1},t_i)}(0)\:h(x_i - t_iv_i,v_i)d\Sigma_i(0)
\\&\quad +c_\mu \mu(v)e^{-\nu(v)(t-t_1)}\int_{\prod\limits_{j=1}^{p}\br{v_j \cdot n(x_i)>0}}\mathbf{1}_{t_{p+1}>0}\:S_{G_\nu}(t_p)h(x_p,v_p)d\Sigma_{p}(t_p),
\end{split}
\end{equation}
where
\begin{equation}\label{dSigmai}
d\Sigma_{i}(s) = \frac{1}{c_\mu \mu(v_i)}e^{-\nu(v_i)(t_i-s)}\pa{\prod\limits_{j=1}^{i-1}e^{-\nu(v_j)(t_j-t_{j+1})}} \:d\sigma_{x_1}(v_1)\dots d\sigma_{x_{p}}(v_{p}).
\end{equation}

\bigskip
We shall prove that $\pa{h_l}_{l\in\N}$ is a Cauchy sequence in $L^\infty_tL^\infty_{x,v}\pa{m}$. We bound $\abs{h_{l+1}-h_l}$ in $L^\infty_{x,v}\pa{m}$ by
\begin{equation*}
\begin{split}
&\norm{h_{l+1}-h_l}_{L^\infty_{x,v}\pa{m}} \leq \sup\limits_{(x,v)\in\Omega\times\R^3}\cro{\int_0^t  m(v) \abs{S_{G_\nu}(t-s)B^{(\delta)}_2\pa{h_{l+1}-h_l}}(x,v)\:ds}
\\&\quad\quad\quad\quad\quad+\sup\limits_{(x,v)\in\Omega\times\R^3}\cro{\int_0^t m(v)\abs{S_{G_\nu}(t-s)Q\pa{h_l-h_{l-1},h_l+h_{l-1}+g}}(x,v)\:ds}.
\end{split}
\end{equation*}

Since the behaviour of $S_{G_\nu}(t-s)$ differs whether $t_1\leq 0$ or $t_1> 0$, where $t_1=t_1(t-s,x,v)$, we can further decompose each of the terms on the right-hand side.
\begin{equation}\label{f1MDdifferenceCauchy}
\norm{h_{l+1}-h_l}_{L^\infty_{x,v}\pa{m}} \leq \max\br{I_1;I_3} + \max\br{I_2; I_4},
\end{equation}
where we defined
\begin{equation*}
\begin{split}
I_1 =& \sup\limits_{(x,v)\in\Omega\times\R^3}\cro{\int_0^t  m(v)\abs{S_{G_\nu}(t-s)B^{(\delta)}_2\pa{h_{l+1}-h_l}\mathbf{1}_{t_1\leq 0}}(x,v)\:ds}
\\I_2 =& \sup\limits_{(x,v)\in\Omega\times\R^3}\cro{\int_0^t m(v)\abs{S_{G_\nu}(t-s)Q\pa{h_l-h_{l-1},h_l+h_{l-1}+g}\mathbf{1}_{t_1\leq 0}}(x,v)\:ds}
\\I_3 =& \sup\limits_{(x,v)\in\Omega\times\R^3}\cro{\int_0^t  m(v)\abs{S_{G_\nu}(t-s)B^{(\delta)}_2\pa{h_{l+1}-h_l}\mathbf{1}_{t_1> 0}}(x,v)\:ds}
\\I_4 =& \sup\limits_{(x,v)\in\Omega\times\R^3}\cro{\int_0^t m(v)\abs{S_{G_\nu}(t-s)Q\pa{h_l-h_{l-1},h_l+h_{l-1}+g}\mathbf{1}_{t_1 > 0}}(x,v)\:ds}.
\end{split}
\end{equation*}
We fix $\eps$ in $(0,1)$.

\bigskip
\textbf{Study of $\mathbf{I_1}$ and $\mathbf{I_2}$.}
When $t_1(t-s,x,v)\leq 0$ the semigroup $S_{G_\nu}(t-s)$ is a mere multiplication by $e^{-\nu(v)(t-s)}$ and so
$$I_1 \leq \sup\limits_{v\in\R^3}\int_0^t \nu(v)e^{-\nu(v)(t-s)} \sup\limits_{x\in\Omega}\cro{m(v)\nu(v)^{-1}B^{(\delta)}_2\pa{h_{l+1}-h_l}}\:ds$$
and equivalently for $I_2$. Similar computations as $\eqref{I1SR}$-$\eqref{I2SR}$ yields

\begin{eqnarray}
I_1 &\leq& e^{-\eps\nu_0 t} \frac{\Delta_{m,\infty}(\delta)}{1-\eps}\:\sup\limits_{0\leq s \leq t}\cro{e^{\eps\nu_0 s}\norm{h_{l+1}-h_l}_{L^\infty_{x,v}\pa{m}}} \label{I1MD}
\\I_2 &\leq& \frac{C_Q}{1-\eps} e^{-\eps\nu_0 t}\:\sup\limits_{0\leq s\leq t}\cro{e^{\eps\nu_0 s}\norm{h_l-h_{l-1}}_{L^\infty_{x,v}\pa{m}}}\label{I2MD}
\\&&\quad\quad\times\cro{\norm{h_l}_{L^\infty_{t,x,v}\pa{m}}+\norm{h_{l-1}}_{L^\infty_{t,x,v}\pa{m}}+\norm{g}_{L^\infty_{t,x,v}\pa{m}}} . \nonumber
\end{eqnarray}

\bigskip
\textbf{Study of $\mathbf{I_3}$ and $\mathbf{I_4}$.} We study $I_3$ and $I_4$ are dealt with the same way and we therefore only write down the details for $I_3$.
\par We decompose 
\begin{equation}\label{decompositionI3}
I_3 \leq I_3^{(1)}+I_3^{(2)}
\end{equation}
into two terms defined by $\eqref{inductionMDf1}$:
\begin{eqnarray*}
I_3^{(1)} &=& \sup\limits_{(x,v) \in \Omega\times\R^3}\Big\{\int_0^t c_\mu \mu(v)m(v) e^{-\nu(v)(t-s-t_1)}
\\&& \quad\times\sum\limits_{i=1}^{p}\int_{\prod\limits_{j=1}^{p}\br{v_j \cdot n(x_i)>0}}\mathbf{1}_{[t_{i+1},t_i)}(0)\:\abs{B^{(\delta)}_2\pa{h_{l+1}-h_l}}(s,x_i-t_iv_i,v_i)d\Sigma_i(0)\Big\}
\\I_3^{(2)} &=& \sup\limits_{(x,v) \in \Omega\times\R^3}\Big\{\int_0^t  c_\mu \mu(v)m(v)e^{-\nu(v)(t-t_1)}
\\&&\quad\times\int_{\prod\limits_{j=1}^{p}\br{v_j \cdot n(x_i)>0}}\mathbf{1}_{t_{p+1}>0}\:\abs{S_{G_\nu}(t_p)\pa{B^{(\delta)}_2\pa{h_{l+1}-h_l}}}(x_p,v_p)d\Sigma_{p}(t_p)\Big\}.
\end{eqnarray*}

\bigskip
We multiply and divide by $m(v_i)\nu^{-1}(v_i)$ and we take the supremum over $\Omega\times\R^3$ inside the $i^{th}$ integral. We know that $c_\mu\mu(v)m(v)$ is bounded in $v$ and therefore, denoting by $C_m$ any positive constant, we obtain
\begin{equation}
\begin{split}
I_3^{(1)} \leq& C_m \int_0^t e^{-\nu_0(t-s)}\norm{B_2^{\delta}(h_{l+1}-h_l)(s)}_{L^\infty_{x,v}\pa{m\nu^{-1}}} \label{I31MDstart}
\\&\times\sup\limits_{(x,v) \in \Omega\times\R^3}\sum\limits_{i=1}^{p}\int_{\prod\limits_{\overset{j=1}{j\neq i}}^{p}\br{v_j \cdot n(x_i)>0}}\mathbf{1}_{[t_{i+1},t_i)}(0)\pa{\int_{\R^3}\frac{\abs{v_i} \nu(v_i)}{m(v_i)}\:dv_i} \:d\sigma_{x_i}ds.
\end{split}
\end{equation}
We recall that $d\sigma_{x_i}$ is a probability measure on $\br{v_j\cdot n(x_i)>0}$. The integral in the variable $v_i$ is finite and only depends on $m$ and $\nu$. Using Lemma $\ref{lem:controlB2}$ we conclude
\begin{eqnarray}
I_3^{(1)} &\leq & C_m \Delta_{m,\infty}(\delta) \int_0^t e^{-\nu_0(t-s)}\norm{h_{l+1}(s)-h_l(s)}_{L^\infty_{x,v}\pa{m}}\:ds \nonumber
\\ &\leq & \frac{C_m \Delta_{m,\infty}(\delta)}{1-\eps}e^{-\eps\nu_0 t}\sup\limits_{0\leq s \leq t}\cro{e^{\eps \nu_0 s}\norm{h_{l+1}(s)-h_l(s)}_{L^\infty_{x,v}\pa{m}}}. \label{I31MD}
\end{eqnarray}

\bigskip
To estimate $I_3^{(2)}$ we first see that, as noticed in $\eqref{tp+1tot1}$,
\begin{equation*}
\begin{split}
&\mathbf{1}_{t_{p+1}>0}\:\abs{S_{G_\nu}(t_p)\pa{B^{(\delta)}_2\pa{h_{l+1}-h_l}}}(x_p,v_p) 
\\&\quad\quad\quad\leq \mathbf{1}_{t_{p}>0}\:\abs{S_{G_\nu}(t_p)\pa{B^{(\delta)}_2\pa{h_{l+1}-h_l}}\mathbf{1}_{t_1(t_p,x_p,v_p)}}(x_p,v_p).
\end{split}
\end{equation*}
Thanks to Corollary $\ref{cor:SGnuMDcontrolnu-1}$ with $\nu_0' = (1-\eps')\nu_0$, where $0<\eps<\eps'<1$, and then Lemma $\ref{lem:controlB2}$ we can estimate the above further by
\begin{eqnarray*}
\mathbf{1}_{t_{p+1}>0}\:\abs{S_{G_\nu}(t_p)\pa{B^{(\delta)}_2\pa{h_{l+1}-h_l}}} &\leq& C_m e^{-\nu'_0 t_p}\norm{B^{(\delta)}_2\pa{h_{l+1}-h_l}}_{L^\infty_{x,v}\pa{m\nu^{-1}}}
\\&\leq& C_m \Delta_{m,\infty}(\delta)e^{-(1-\eps') \nu_0 t_p}\norm{h_{l+1}-h_l}_{L^\infty_{x,v}\pa{m}}.
\end{eqnarray*}

Plugging the above into the definition of $I^{(2)}_3$ yields
\begin{equation}\label{I32MD}
\begin{split}
I^{(2)}_3 \leq & \:C_m\Delta_{m,\infty}(\delta) e^{-\nu_0(t-t_1)}
\\&\times\int_0^t \sup\limits_{(x,v)\in\Omega\times\R^3}\int_{\prod\limits_{j=1}^{p}\br{v_j \cdot n(x_i)>0}}e^{-\nu_0(t_1-t_p)}e^{-(1-\eps')\nu_0 t_p}\norm{h_{l+1}-h_l}_{L^\infty_{x,v}\pa{m}}(t_p)d\sigma_{x_i}.
\\\leq & \:C_m \Delta_{m,\infty}(\delta)t e^{-\eps'\nu_0 t}\sup\limits_{0\leq s \leq t}\cro{e^{\eps\nu_0 s}\norm{h_{l+1}-h_l}_{L^\infty_{x,v}\pa{m}}}
\\\leq & \:C_{m} \Delta_{m,\infty}(\delta)e^{-\eps\nu_0 t}\sup\limits_{0\leq s \leq t}\cro{e^{\eps\nu_0 s}\norm{h_{l+1}-h_l}_{L^\infty_{x,v}\pa{m}}}.
\end{split}
\end{equation}

\bigskip
We conclude the estimate about $I_3$ by gathering $\eqref{I31MD}$ and $\eqref{I32MD}$ inside the decomposition of $I_3$ $\eqref{decompositionI3}$:
\begin{equation}\label{I3MD}
I_3 \leq C_{m}\Delta_{m,\infty}(\delta)e^{-\eps\nu_0t}\sup\limits_{0\leq s \leq t}\cro{e^{\eps\nu_0 s}\norm{h_{l+1}-h_l}_{L^\infty_{x,v}\pa{m}}}
\end{equation}

For the term $I_4$ we can do exactly the same computations with $\Delta_{m,\infty}(\delta)$ replaced by $C_Q$ from Lemma $\ref{lem:controlQ}$, which can be included into the generic constant $C_{m}$. Hence
\begin{equation}\label{I4MD}
\begin{split}
I_4 \leq &\:C_{m}e^{-\eps\nu_0t}\sup\limits_{0\leq s \leq t}\cro{e^{\eps\nu_0 s}\norm{h_{1}-h_{l-1}}_{L^\infty_{x,v}\pa{m}}}
\\&\:\quad\quad\quad\times\cro{\norm{h_l}_{L^\infty_{t,x,v}\pa{m}}+\norm{h_{l-1}}_{L^\infty_{t,x,v}\pa{m}}+\norm{g}_{L^\infty_{t,x,v}\pa{m}}} 
\end{split}
\end{equation}

\bigskip
\textbf{Conclusion.} From the decomposition $\eqref{f1MDdifferenceCauchy}$ of $h_{l+1}-h_l$ and estimates $\eqref{I1MD}$-$\eqref{I2MD}$-$\eqref{I3MD}$-$\eqref{I4MD}$ we obtain
\begin{equation*}
\begin{split}
&\pa{1-C_m\Delta_{m,\infty}(\delta)}\sup\limits_{0\leq s \leq t}\norm{h_{l+1}-h_l}_{L^\infty_{x,v}\pa{m}}
\\&\quad\quad\leq C_m \cro{\norm{h_l}_{L^\infty_{t,x,v}\pa{m}}+\norm{h_{l-1}}_{L^\infty_{t,x,v}\pa{m}}+\norm{g}_{L^\infty_{t,x,v}\pa{m}}}
\\&\quad\quad\quad\times\pa{\sup\limits_{0\leq s \leq t}\cro{e^{\varepsilon\nu_0 s}\norm{h_{l+1}-h_l}_{L^\infty_{x,v}\pa{m}}}+\sup\limits_{0\leq s \leq t}\cro{e^{\varepsilon\nu_0 s}\norm{h_{l}-h_{l-1}}_{L^\infty_{x,v}\pa{m}}}}
\end{split}
\end{equation*}
In the case of a stretch exponential $m$, Lemma $\ref{lem:controlB2}$ states that $\Delta_{m,\infty}(\delta)$ tends to $0$ as $\delta$ goes to $0$. We can therefore choose $\delta$ small enough such that 
\begin{equation}\label{choiceDeltam}
1-C_m\Delta_{m,\infty}(\delta) \geq \frac{1}{2}.
\end{equation}
With such a choice the following holds
\begin{equation}\label{f1cauchyfinalMD}
\begin{split}
&\sup\limits_{0\leq s \leq t}\norm{h_{l+1}-h_l}_{L^\infty_{x,v}\pa{m}}
\\&\quad\quad\leq C_m \cro{\norm{h_l}_{L^\infty_{t,x,v}\pa{m}}+\norm{h_{l-1}}_{L^\infty_{t,x,v}\pa{m}}+\norm{g}_{L^\infty_{t,x,v}\pa{m}}}
\\&\quad\quad\quad\times\pa{\sup\limits_{0\leq s \leq t}\cro{e^{\varepsilon\nu_0 s}\norm{h_{l+1}-h_l}_{L^\infty_{x,v}\pa{m}}}+\sup\limits_{0\leq s \leq t}\cro{e^{\varepsilon\nu_0 s}\norm{h_{l}-h_{l-1}}_{L^\infty_{x,v}\pa{m}}}}
\end{split}
\end{equation}

\bigskip
Similar computations with the use of the exponential decay of $S_{G_\nu}(t)$ on $f_0$ gives us the following bound on $h_{l+1}$
\begin{equation}\label{f1boundfinalMD}
\begin{split}
&\sup\limits_{0\leq s \leq t}\cro{e^{\eps\nu_0 s}\norm{h_{l+1}}_{L^\infty_{x,v}\pa{m}}}
\\&\leq C^{(1)}_m\norm{f_0}_{L^\infty_{x,v}} + C^{(2)}_m \cro{\norm{h_l}_{L^\infty_{[0,t],x,v}\pa{m}} + \norm{g}_{L^\infty_{[0,t],x,v}\pa{m}}}\sup\limits_{0\leq s \leq t}\cro{e^{\eps\nu_0 s}\norm{h_l}_{L^\infty_{[0,t],x,v}\pa{m}}}.
\end{split}
\end{equation}

\bigskip
The latter results $\eqref{f1cauchyfinalMD}$ and $\eqref{f1boundfinalMD}$ are identical to respectively $\eqref{f1cauchyfinal}$ and $\eqref{f1cauchyfinal}$ in the case of specular reflections boundary conditions. Therefore the same arguments hold and if $\norm{f_0}_{L^\infty_{x,v}\pa{m}}$ and $\norm{g}_{L^\infty_{x,v}\pa{m}}$ are smaller than $\eta_1>0$ small enough we obtain that $\pa{h_l}_{l\in\N}$ is a Cauchy sequence and thus converges towards the desired solution $f_1$, which satisfies the required exponential decay. This concludes the proof of Proposition $\ref{prop:f1MD}$.

\end{proof}
\bigskip


\subsubsection{The case of Maxwellian diffusion in $L^\infty_{x,v}\pa{\langle v \rangle^k}$}\label{subsubsec:f1MDpoly}

Looking at the proof of Proposition $\ref{prop:f1MD}$ we remark that the key property used in the case of a stretch exponential weight $m$ is that $\Delta_{m,\infty}(\delta)$ tends to $0$ as $\delta$ goes to $0$. This strong property allowed us to control the supremum of $c_\mu \mu(v) m(v)$ in $I^{(1)}_3$ thanks to $\Delta_{m,\infty}(\delta)$ and still obtain a quantity that is less than $1$, see $\eqref{I31MD}$ and $\eqref{choiceDeltam}$.
\par Unfortunately, in the case of a polynomial weight $m_k(v) =\langle v \rangle^k$, Lemma $\ref{lem:controlB2}$ states that $\Delta_{m,\infty}(\delta)$ converges to a quantity less than $1$ but not as small as one wants unless one allows $k$ to be as large as one wants. However, $\Delta_{m_k,\infty}(\delta)$ goes to $0$ as $k$ tends to infinity like $k^{-1}$ which is not enough to control the supremum of $c_\mu \mu(v) m_k(v)$ which grows like $(2k)^k$.
\par The key idea to deal with the polynomial weight $m_k$ is that fact that $B^{(\delta)}_2$ can also be estimated in $L^1_vL^\infty_x\pa{\langle v \rangle^{2+\gamma+0}}$ (see Lemma $\ref{lem:controlB2}$). The latter norm is weaker than $L^\infty_{x,v}\pa{m_k}$ and appear in the estimate of $I^{(1)}_1$. Again, Lemma $\ref{lem:controlB2}$ still does not give the appropriate decay for $\Delta_{m_k,1}(\delta)$. The following lemma shows a new estimate on $B^{(\delta)}_2$ involving a mixing of the $L^1_vL^\infty_x$ and the $L^\infty_{x,v}$ frameworks.

\bigskip
\begin{lemma}\label{lem:mixingcontrolB2}
Let $k > 5+\gamma$ and $m_k(v)=\langle v \rangle^k$. Then for any $\delta >0$ there exists $\tilde{\Delta}_k(\delta)$ such that for all $h$ in $L^\infty_{x,v}\pa{m_k}$,
$$\norm{B^{(\delta)}_2 h}_{L^1_vL^\infty_x\pa{\langle v \rangle^2}} \leq \tilde{\Delta}_k(\delta) \norm{h}_{L^\infty_{x,v}\pa{m_k}}.$$
Moreover, the following holds for any $k>5+\gamma$
$$\lim\limits_{\delta \to 0} \tilde{\Delta}_k(\delta) = 0.$$
\end{lemma}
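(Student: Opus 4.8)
The plan is to exploit that $\Theta_\delta \to 1$ almost everywhere on $\R^3\times\R^3\times\mathbb{S}^2$ as $\delta\to 0$, so that $1-\Theta_\delta\to 0$ pointwise a.e., and then to conclude by dominated convergence once a $\delta$-independent integrable majorant has been produced. Normalizing $\norm{h}_{L^\infty_{x,v}\pa{m_k}}=1$ and setting $H(v)=\sup_{x\in\Omega}\abs{h(x,v)}$, so that $H(v)\le \langle v\rangle^{-k}$, I would first reduce to a pure velocity estimate: since $B^{(\delta)}_2$ acts only on $v$,
$$\norm{B^{(\delta)}_2 h}_{L^1_vL^\infty_x\pa{\langle v\rangle^2}} \le \int_{\R^3}\langle v\rangle^2\int_{\R^3\times\mathbb{S}^2}(1-\Theta_\delta)\cro{\mu'_*H' + \mu'H'_* + \mu H_*}b(\cos\theta)\abs{v-v_*}^\gamma\,d\sigma\,dv_*\,dv,$$
and then treat the three contributions separately.

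For the loss term $\mu H_*$ no change of variables is needed: bounding $H_*\le\langle v_*\rangle^{-k}$ and using $\langle v\rangle^2\mu(v)\le C\mu(v)^{1/2}$, the integrand is dominated by $C\mu(v)^{1/2}\langle v_*\rangle^{-k}b(\cos\theta)\abs{v-v_*}^\gamma$, whose integral over $(v,v_*,\sigma)$ is finite whenever $k>3+\gamma$ (integrate $\sigma$ against $b_\infty$, then $v_*$ using $\abs{v-v_*}^\gamma\lesssim\langle v\rangle^\gamma+\langle v_*\rangle^\gamma$, then $v$ against the Gaussian).

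The two gain terms are the crux. The difficulty is that the Maxwellian weight sits on a post-collisional velocity while the heavy loss $\langle v\rangle^2$ and the factor $\abs{v-v_*}^\gamma$ sit on the fixed variable $v$, so a naive bound retains no velocity decay. I would remedy this with the pre-post collisional change of variables $(v,v_*,\sigma)\mapsto(v',v'_*,\widehat{v-v_*})$, an involution of unit Jacobian that leaves $\abs{v-v_*}$ and $\cos\theta$ (hence $b(\cos\theta)\abs{v-v_*}^\gamma$) invariant and exchanges $(v,v_*)\leftrightarrow(v',v'_*)$. The $\mu'H'_*$ term then becomes $\langle v'\rangle^2\mu(v)H(v_*)$ times the invariant kernel, now carrying the Gaussian on the fixed variable. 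Invoking energy conservation in the form $\langle v'\rangle^2\le\langle v\rangle^2\langle v_*\rangle^2$, together with $\langle v\rangle^2\mu(v)\le C\mu(v)^{1/2}$ and $H(v_*)\le\langle v_*\rangle^{-k}$, I would dominate the integrand by $C\mu(v)^{1/2}\langle v_*\rangle^{2-k}b(\cos\theta)\abs{v-v_*}^\gamma$; the $\mu'_*H'$ term is identical after swapping the roles of $v$ and $v_*$. Integrating $\sigma$ against $b_\infty$ and $v$ against the Gaussian leaves $\int_{\R^3}\langle v_*\rangle^{2-k+\gamma}\,dv_*$, finite precisely when $k-2-\gamma>3$, i.e. $k>5+\gamma$; this is exactly where the stated threshold enters.

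Finally I would invoke dominated convergence. The three majorants are $\delta$-independent and integrable for $k>5+\gamma$, and since the pre-post collisional map preserves Lebesgue measure, $1-\Theta_\delta$ (respectively its transport through the change of variables) tends to $0$ almost everywhere; hence each of the three integrals tends to $0$ and produces $\tilde\Delta_k(\delta)$ with $\lim_{\delta\to 0}\tilde\Delta_k(\delta)=0$. The main obstacle is the gain terms: transferring the Gaussian decay onto the correct variable through the collisional change of variables and checking that the transported cutoff still vanishes a.e., so that dominated convergence applies with a majorant integrable exactly at the threshold $k>5+\gamma$.
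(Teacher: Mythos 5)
Your argument is correct and rests on the same central mechanism as the paper's proof: the pre--post collisional involution $(v,v_*,\sigma)\mapsto(v',v'_*,\widehat{v-v_*})$ (combined with $(v,v_*,\sigma)\mapsto(v_*,v,-\sigma)$) to transfer the Gaussian onto the integrated variable, after which the decay budget forces exactly the threshold $k>5+\gamma$. The differences are in the bookkeeping and in the final limiting step. First, you bound $\langle v'\rangle^2\leq\langle v\rangle^2\langle v_*\rangle^2$ term by term, whereas the paper keeps all three weights together and uses the exact energy identity $\langle v'\rangle^2+\langle v'_*\rangle^2=\langle v\rangle^2+\langle v_*\rangle^2$; both give the same integrability requirement. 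Second, and more substantively, you conclude by dominated convergence from $1-\Theta_\delta\to 0$ a.e.\ (which is valid: off the null set $\br{v=v_*}\cup\br{\abs{\cos\theta}=1}$ every point eventually lies in the set where $\Theta_\delta=1$, and the involution is measure-preserving so the transported cutoff also vanishes a.e.), whereas the paper replaces $1-\Theta_\delta$ by the indicator of an explicit union of four sets $\tilde{\Xi}^{(1)}_\delta\cup\dots\cup\tilde{\Xi}^{(4)}_\delta$ invariant under both changes of variables and estimates each tail integral separately. What the paper's route buys is a quantitative, constructive rate for $\tilde{\Delta}_k(\delta)\to 0$ (in terms of Gaussian and polynomial tails at radius $\delta^{-1}$, the measure of $\br{\abs{v-v_*}\leq 2\delta}$, and the angular mass near $\abs{\cos\theta}=1$), which matters because the constants in Theorem \ref{theo:cauchyMD} are claimed to be explicit; your dominated-convergence step yields the limit but no rate. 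If you want your proof to feed into the constructive statement, replace that last step by the explicit tail estimates, which your $\delta$-independent majorant already sets up.
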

\bigskip

\begin{proof}[Proof of Lemma $\ref{lem:mixingcontrolB2}$]
We recall the definition of $B^{(\delta)}_2 h$,
$$B^{(\delta)}_{2} h (v) = \int_{\R^3\times\mathbb{S}^2}\left(1-\Theta_\delta\right)\left[\mu'_*h' + \mu'h'_* - \mu h_*\right]b\left(\mbox{cos}\:\theta\right)\abs{v-v_*}^\gamma\:d\sigma dv_*,$$
where $\Theta_\delta = \Theta_\delta(v,v_*,\sigma)$ is a $C^\infty$ function such that $0 \leq 1-\Theta_\delta \leq 1$ and such that $1-\Theta_\delta=0$ on the set
$$\Xi_\delta = \left\{\abs{v}\leq \delta^{-1}    \quad\mbox{and}\quad 2\delta\leq\abs{v-v_*}\leq \delta^{-1}    \quad\mbox{and}\quad \abs{\mbox{cos}\:\theta} \leq 1-2\delta \right\}.$$
We denote by $\Xi^c_\delta$ the complementary set of $\Xi_\delta$ in $\R^3\times\R^3\times\mathbb{S}^2$.

\bigskip
Only $h$ has a dependency in $x$ hence
$$\norm{B^{(\delta)}_2h}_{L^1_vL^\infty_x\pa{\langle v \rangle^2}} \leq \int_{\Xi_\delta^c} \cro{\mu'_*H' + \mu'H'_* + \mu H_*}\abs{v-v_*}^\gamma\abs{b\pa{\cos \theta}}\langle v \rangle^2\:dvdv_*d\sigma,$$
where we used the notation
$$H(v) = \sup\limits_{x \in \Omega} \abs{h(x,v)}.$$
We notice that $\Xi_\delta^c \subset \tilde{\Xi}_\delta$ where we defined
\begin{eqnarray}
\tilde{\Xi}_\delta &=& \br{\sqrt{\abs{v}^2+\abs{v_*}^2}\geq \frac{1}{\delta}} \cup \br{\abs{v-v_*}\leq 2\delta} \nonumber
\\&\:& \quad\cup \br{\abs{v-v_*} \geq \frac{1}{\delta}} \cup \br{1-\delta \leq \abs{\cos \theta}\leq 1}\nonumber
\\ &=& \tilde{\Xi}^{(1)}_\delta \cup \tilde{\Xi}^{(2)}_\delta \cup \tilde{\Xi}^{(3)}_\delta \cup \tilde{\Xi}^{(4)}_\delta \label{decompositiontildeXi}
\end{eqnarray}
and hence
$$\norm{B^{(\delta)}_2 h}_{L^1_vL^\infty_x\pa{\langle v\rangle^2}} \leq \int_{\tilde{\Xi}_\delta}\cro{\mu'_*H' + \mu'H'_* + \mu H_*}\abs{v-v_*}^\gamma\abs{b\pa{\cos \theta}}\langle v \rangle^2\:dvdv_*d\sigma.$$

\bigskip
The set $\tilde{\Xi}_\delta$ is invariant under the standard changes of variables $(v,v_*,\sigma) \mapsto (v_*,v,-\sigma)$ and $(v,v_*,\sigma) \mapsto (v',v'_*,k)$ with $k=\pa{v-v_*}/\abs{v-v_*}$ that have Jacobian $1$ (see \cite{CIP} or \cite{Vi2} for instance). Applying these change of variables gives
$$\norm{B^{(\delta)}_2h}_{L^1_vL^\infty_x\pa{\langle v \rangle^2}} \leq \int_{\tilde{\Xi}_\delta} \mu_* H\cro{\langle v_*' \rangle^2+\langle v' \rangle^2+\langle v \rangle^2}\abs{v-v_*}^\gamma\abs{b\pa{\cos \theta}}\:dvdv_*d\sigma.$$
Thanks to the elastic collisions one has
$$\langle v_*' \rangle^2+\langle v' \rangle^2 = \langle v_* \rangle^2+\langle v \rangle^2.$$
Therefore we have 
$$\norm{B^{(\delta)}_2h}_{L^1_vL^\infty_x\pa{\langle v \rangle^2}} \leq \tilde{\Delta}_k(\delta)\norm{h}_{L^\infty_{x,v}\pa{m_k}},$$
with
\begin{eqnarray*}
\tilde{\Delta}_k(\delta) &=& 3\int_{\tilde{\Xi}_\delta} \mu_* \langle v_* \rangle^2 \frac{\langle v \rangle^2}{m_k(v)} \abs{v-v_*}^\gamma\abs{b\pa{\cos \theta}} \:dvdv_*d\sigma 
\\&=&  3\int_{\tilde{\Xi}_\delta} \mu_* \langle v_* \rangle^2 \frac{1}{\langle v \rangle^{k-2}} \abs{v-v_*}^\gamma\abs{b\pa{\cos \theta}} \:dvdv_*d\sigma.
\end{eqnarray*}
It remains to show that when $k > 5+\gamma$ is fixed then $\tilde{\Delta}_k(\delta)$ goes to $0$ when $\delta$ goes to $0$.

\bigskip
We decompose the integral over $\tilde{\Xi}_\delta$ into integrals over $\tilde{\Xi}^{(1)}_\delta$, $\tilde{\Xi}^{(2)}_\delta$, $\tilde{\Xi}^{(3)}_\delta$ and $\tilde{\Xi}^{(4)}_\delta$ where these domains are given by $\eqref{decompositiontildeXi}$:
$$\tilde{\Delta}_k(\delta)= \tilde{\Delta}^{(1)}_k(\delta)+\tilde{\Delta}^{(2)}_k(\delta)+\tilde{\Delta}^{(3)}_k(\delta)+\tilde{\Delta}^{(4)}_k(\delta)$$

\par For $\tilde{\Delta}^{(1)}_k(\delta)$ and $\tilde{\Delta}^{(3)}_k(\delta)$ we bound crudely $\abs{v-v_*}$ by $\langle v \rangle \langle v_*\rangle$. Moreover, the inequality $\sqrt{\abs{v}^2+\abs{v_*}^2}\geq \delta^{-1}$ implies that $\abs{v}\geq 1/(2\delta)$ or $\abs{v_*}\geq 1/(2\delta)$. And the same holds for $\abs{v-v_*}\geq \delta^{-1}$. Therefore, for $i=1$ and $i=3$ we have
\begin{equation*}
\begin{split}
\tilde{\Delta}^{(i)}_k(\delta) \leq& l_b\pa{\int_{\abs{v_*}\geq \frac{1}{2\delta}}\mu_*\langle v_* \rangle^{2+\gamma}\:dv_*} \pa{\int_{\R^3}\frac{dv}{\langle v \rangle^{k-2-\gamma}}}
\\&\quad + l_b\pa{\int_{\R^3}\mu_*\langle v_* \rangle^{2+\gamma}\:dv_*} \pa{\int_{\abs{v}\geq \frac{1}{2\delta}}\frac{dv}{\langle v \rangle^{k-2-\gamma}}},
\end{split}
\end{equation*}
where $l_b$ is the integral of $b\pa{\cos \theta}$ on $\mathbb{S}^2$. Since $k>5+\gamma$, $\langle v \rangle^{k-2-\gamma}$ is integrable and all the integrals on the right-hand side are well defined. Moreover, by integrability it follows that as $\delta$ goes to $0$ the integrals involving $\delta$ tend to $0$ as well.
\par At last, $\tilde{\Delta}^{(2)}_k(\delta)$ and $\tilde{\Delta}^{(4)}_k(\delta)$ also tend to $0$ since
$$\tilde{\Delta}^{(2)}_k(\delta) \leq 2l_b\delta \pa{\int_{\R^3}\mu_*\langle v_* \rangle^2\:dv_*} \pa{\int_{\R^3}\frac{dv}{\langle v \rangle^{k-2}}}$$ 
and
$$\tilde{\Delta}^{(4)}_k(\delta) \leq \pa{\int_{\R^3}\mu_*\langle v_* \rangle^{2+\gamma}\:dv_*} \pa{\int_{\R^3}\frac{dv}{\langle v \rangle^{k-2-\gamma}}} \cro{\int_{\abs{\cos \theta}\in [1-\delta,1]} b\pa{\cos \theta}\:d\sigma}$$
and $b\pa{\cos \theta}$ is integrable on the sphere. Which concludes the proof of Lemma $\ref{lem:mixingcontrolB2}$.
\end{proof}
\bigskip

We are now able prove the following well-posedness result for $\eqref{f1}$ in the case of Maxwellian diffusion with polynomial weight.

\begin{prop}\label{prop:f1MDpoly}
Let $\Omega$ be a $C^1$ bounded domain. Let $k > 5+\gamma$ and $m_k(v)=\langle v \rangle^k$. Let $f_0$ be in $L^\infty_{x,v}\pa{m_k}$ and $g(t,x,v)$ in $L^\infty_{x,v}\pa{m_k}$. Then there exists $\delta_k$, $\lambda_k(\delta)>0$ such that for any $\delta$ in $(0,\delta_k]$ there exists $C_1$, $\eta_1>0$ such that if
$$ \norm{f_0}_{L^\infty_{x,v}\pa{m_k}}\leq \eta_1 \quad\mbox{and}\quad \norm{g}_{L^\infty_tL^\infty_{x,v}\pa{m_k}}\leq \eta_1,$$
then there exists a solution $f_1$ to 
\begin{equation}\label{f1propMDpoly}
\partial_tf_1 = G_\nu f_1 + B^{(\delta)}_2f_1 + Q(f_1,f_1+g),
\end{equation}
with initial data $f_0$ and satisfying the Maxwellian diffusion boundary conditions $\eqref{MD}$. Moreover, this solution satisfies
$$\forall t\geq 0,\quad \norm{f_1(t)}_{L^\infty_{x,v}\pa{m_k}}\leq C_1 e^{-\lambda_k(\delta) t}\norm{f_0}_{L^\infty_{x,v}\pa{m_k}},$$
and also
$$\lim\limits_{\delta \to 0} \lambda_k(\delta)=\nu_0.$$
The constants $C_1$ and $\eta_1$ are constructive and only depend on $k$, $\delta$ and the kernel of the collision operator.
\end{prop}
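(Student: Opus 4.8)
The plan is to follow the proof of Proposition~\ref{prop:f1MD} essentially verbatim, the only new point being the treatment of the boundary-reflection contribution of $B^{(\delta)}_2$. By Proposition~\ref{prop:semigroupGnuMD} the operator $G_\nu$ generates a semigroup $S_{G_\nu}(t)$ on $L^\infty_{x,v}\pa{m_k}$, so I would write the Duhamel formula \eqref{duhamelf1MD}, set up the same implicit iteration $h_0=0$, $h_{l+1}=S_{G_\nu}(t)f_0+\int_0^t S_{G_\nu}(t-s)\cro{B^{(\delta)}_2 h_{l+1}+Q(h_l,h_l+g)}\,ds$, and estimate $\norm{h_{l+1}-h_l}_{L^\infty_{x,v}\pa{m_k}}$ through the splitting \eqref{f1MDdifferenceCauchy} into $\max\br{I_1;I_3}+\max\br{I_2;I_4}$ according to whether $t_1\le 0$ or $t_1>0$, using the representation \eqref{startinductionMDf1}--\eqref{inductionMDf1} of $S_{G_\nu}$. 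The quadratic terms $I_2,I_4$ are bounded exactly as in \eqref{I2MD} and \eqref{I4MD}, carrying the factor $C_Q$ of Lemma~\ref{lem:controlQ} together with $\cro{\norm{h_l}+\norm{h_{l-1}}+\norm{g}}$, hence are absorbed once $\eta_1$ is small. The no-rebound term $I_1$ is treated as in \eqref{I1MD}: here $S_{G_\nu}(t-s)$ is the pure multiplication $e^{-\nu(v)(t-s)}$, so Lemma~\ref{lem:controlB2} gives the coefficient $\Delta_{m_k,\infty}(\delta)$, which is $<1$ under the weight hypothesis since $\phi_\infty(k)<1$.

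The genuine difficulty is $I_3$, the rebound part. Copying the stretch-exponential computation \eqref{I31MD} fails: the diffusive boundary operator produces the prefactor $c_\mu\mu(v)m_k(v)$, whose supremum $C_m=\sup_v c_\mu\mu(v)m_k(v)$ is finite but grows super-polynomially in $k$ (like $(2k)^k$), while Lemma~\ref{lem:controlB2} only provides $\Delta_{m_k,\infty}(\delta)\to\phi_{\infty}(k)$, a quantity bounded away from $0$. Thus the product $C_m\,\Delta_{m_k,\infty}(\delta)$ cannot be made $<1$ by shrinking $\delta$, and the choice \eqref{choiceDeltam} is no longer available.

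To overcome this I would split $I_3=I_3^{(1)}+I_3^{(2)}$ as in the proof of Proposition~\ref{prop:f1MD}, but estimate each piece differently. For the part $I_3^{(1)}$ reaching $\br{t=0}$, I would exploit the cancellation $\frac{1}{c_\mu\mu(v_i)}\,d\sigma_{x_i}(v_i)=\abs{v_i\cdot n(x_i)}\,dv_i$ contained in the measure $d\Sigma_i$: after taking the supremum over the spatial variables and integrating the innermost rebound velocity $v_i$, the contribution of $B^{(\delta)}_2(h_{l+1}-h_l)$ becomes an $L^1_vL^\infty_x\pa{\langle v \rangle}$-norm, which I bound by $L^1_vL^\infty_x\pa{\langle v \rangle^2}$ and then by the new mixing estimate Lemma~\ref{lem:mixingcontrolB2}. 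This replaces $\Delta_{m_k,\infty}(\delta)$ by $\widetilde{\Delta}_k(\delta)$, so that $I_3^{(1)}$ is controlled by $C_m\,\widetilde{\Delta}_k(\delta)$ with $C_m$ \emph{fixed} and $\widetilde{\Delta}_k(\delta)\to 0$ as $\delta\to 0$. For the tail $I_3^{(2)}$, where $B^{(\delta)}_2$ sits inside $S_{G_\nu}(t_p)$, I would keep Corollary~\ref{cor:SGnuMDcontrolnu-1} to obtain the pointwise decay $e^{-\nu_0' t_p}$ but, instead of estimating the remaining integral by $1$, use the smallness \eqref{controlreboundT0} of the measure of trajectories that have not reached $\br{t=0}$ after $p\sim T_0^{5/4}$ rebounds. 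On the finite horizon $t\in[0,T_0]$ this produces the coefficient $C_m\,\Delta_{m_k,\infty}(\delta)\pa{1/2}^{C_2 T_0^{5/4}}$, made $<1$ by taking $T_0$ large (the fixed constant $C_m$ being beaten by the super-exponential gain), exactly as in Proposition~\ref{prop:semigroupGnuMD}; one then propagates the decay on $[nT_0,(n+1)T_0]$ by iteration.

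Collecting the four estimates gives, as in \eqref{f1cauchyfinalMD} and \eqref{f1boundfinalMD}, a combined coefficient in front of the linear term that is $<1$ once $\eps\in(0,1)$ is fixed and then $\delta$ and $T_0$ are chosen small and large respectively, so that $1-(\cdot)\ge 1/2$. From this point the argument is identical to Proposition~\ref{prop:f1MD}: for $\eta_1$ small the sequence $\pa{h_l}_{l\in\N}$ is Cauchy in $L^\infty_t L^\infty_{x,v}\pa{m_k}$, its limit $f_1$ solves \eqref{f1propMDpoly} with the Maxwellian diffusion boundary condition and satisfies the stated exponential decay, the asymptotics $\lambda_k(\delta)\to\nu_0$ being obtained as in Proposition~\ref{prop:f1MD}. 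The main obstacle is precisely the $I_3$ estimate: reconciling the super-polynomial boundary prefactor $C_m$ with the decay of $B^{(\delta)}_2$, which is why the mixing estimate of Lemma~\ref{lem:mixingcontrolB2} (for the returning part) and the return-measure bound \eqref{controlreboundT0} (for the tail) are both needed.
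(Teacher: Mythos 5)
Your proposal is correct in its overall architecture and matches the paper's proof almost everywhere: same iteration scheme, same splitting into $I_1,\dots,I_4$, same treatment of $I_1,I_2,I_4$ via Lemma \ref{lem:controlB2} and Lemma \ref{lem:controlQ}, and the same key new ingredient for $I_3^{(1)}$, namely the cancellation $\frac{1}{c_\mu\mu(v_i)}d\sigma_{x_i}(v_i)=\abs{v_i\cdot n(x_i)}dv_i$ in $d\Sigma_i$ followed by the mixing estimate of Lemma \ref{lem:mixingcontrolB2}, which is exactly \eqref{I31MDpoly}. You diverge on $I_3^{(2)}$. The paper handles it by invoking Corollary \ref{cor:SGnuMDcontrolnu-1} in the \emph{mixed} framework $q=1$ with weight $\langle v\rangle^3$ — i.e.\ the $L^1_vL^\infty_x$ semigroup theory — and then Lemma \ref{lem:mixingcontrolB2} a second time, so that the whole of $I_3$ carries the coefficient $C_{m_k}\tilde{\Delta}_k(\delta)\to 0$, uniformly in $t\geq 0$, with no finite-horizon argument; the large constant $\sup_v c_\mu\mu(v)m_k(v)$ is beaten by $\tilde\Delta_k(\delta)$ alone. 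Your alternative (pointwise $L^\infty(m_k)$ decay of $S_{G_\nu}(t_p)$ combined with the trajectory-measure bound \eqref{controlreboundT0} on $[0,T_0]$, then iteration over $[nT_0,(n+1)T_0]$) is plausible but needs two repairs you do not spell out. First, after the pointwise bound the $v_p$-integral is taken against $\frac{\abs{v_p\cdot n(x_p)}}{\langle v_p\rangle^{k}}dv_p$, which is \emph{not} dominated by a constant multiple of the probability measure $d\sigma_{x_p}(v_p)=c_\mu\mu(v_p)\abs{v_p\cdot n(x_p)}dv_p$ (the ratio is $\sim e^{\abs{v_p}^2/2}\langle v_p\rangle^{-k}$), so \eqref{controlreboundT0} does not apply verbatim to the $p$-fold integral; one must first integrate out $v_p$ separately (using that $\mathbf{1}_{t_{p+1}>0}\leq\mathbf{1}_{t_p>0}$ and $t_p$ does not depend on $v_p$) and then apply \eqref{controlreboundT0} to the remaining $p-1$ probability measures. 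Second, the restart at times $nT_0$ takes place inside the nonlinear Cauchy-sequence estimate for $h_{l+1}-h_l$, whose value at $nT_0$ is nonzero, so one must propagate it through $S_{G_\nu}(t-nT_0)$ and absorb the restart constants into a slightly degraded rate, as in Step 3.3 of Proposition \ref{prop:semigroupGnuMD}. Both points are fixable, so your route yields the result, but the paper's use of the $L^1_vL^\infty_x$ semigroup bound buys a cleaner, global-in-time estimate and avoids the horizon iteration entirely.
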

\bigskip

\begin{proof}[Proof of Proposition $\ref{prop:f1MDpoly}$]
We closely follow the proof of Proposition $\ref{prop:f1MD}$ in the case of a stretch exponential and we refer to it for most of the details of computations.
\par To build a solution of $\eqref{f1propMDpoly}$ we use the iterative scheme
$$\left\{\begin{array}{l} \disp{h_{l+1} = S_{G_\nu}(t)f_0 + \int_0^t S_{G_\nu}(t-s)\cro{B^{(\delta)}_2 h_{l+1} + Q(h_l,h_l+g)}\:ds } \vspace{2mm}\\ \disp{h_{l+1}(0,x,v) = f_0(x,v) \quad\mbox{and}\quad h_0 = 0.}\end{array}\right.$$
and prove that $\pa{h_l}_{l\in\N}$ is a Cauchy sequence in $L^\infty_{x,v}\pa{m_k}$. Again the well-posed of $h_{l+1}$ would follow from a contraction argument from similar computations.
\par We use the same decomposition as in $\eqref{f1MDdifferenceCauchy}$ with $m(v)$ replaced by $m_k(v)$ :
\begin{equation}\label{decompositionCauchyMDpoly}
\norm{h_{l+1}-h_l}_{L^\infty_{x,v}\pa{m_k}} \leq \max\br{I_1;I_3} + \max\br{I_2; I_4}.
\end{equation}
Since the control of $Q$ and $B^{(\delta)}_2$ also holds in $L^\infty_{x,v}\pa{m_k}$ (see respectively Lemma $\ref{lem:controlQ}$ and Lemma $\ref{lem:controlB2}$) the terms $I_1$, $I_2$ and $I_4$ are estimated in the same way as $\eqref{I1MD}$-$\eqref{I2MD}$-$\eqref{I4MD}$ with $\Delta_{m,\infty}(\delta)$ replaced by $\Delta_{m_k,\infty}(\delta)$ defined in Lemma $\ref{lem:controlB2}$. This gives for any $\eps$ in $(0,1)$
\begin{eqnarray}
I_1 &\leq& e^{-\eps\nu_0 t} \frac{\Delta_{m_k,\infty}(\delta)}{1-\eps}\:\sup\limits_{0\leq s \leq t}\cro{e^{\eps\nu_0 s}\norm{h_{l+1}-h_l}_{L^\infty_{x,v}\pa{m_k}}} \label{I1MDpoly}
\\I_2 &\leq& \frac{C_Q}{1-\eps} e^{-\eps\nu_0 t}\:\sup\limits_{0\leq s\leq t}\cro{e^{\eps\nu_0 s}\norm{h_l-h_{l-1}}_{L^\infty_{x,v}\pa{m_k}}}\label{I2MDpoly}
\\&&\quad\quad\quad\times\cro{\norm{h_l}_{L^\infty_{t,x,v}\pa{m_k}}+\norm{h_{l-1}}_{L^\infty_{t,x,v}\pa{m_k}}+\norm{g}_{L^\infty_{t,x,v}\pa{m_k}}} \nonumber
\\I_4 &\leq &\:C_{m_k}e^{-\eps\nu_0t}\sup\limits_{0\leq s \leq t}\cro{e^{\eps\nu_0 s}\norm{h_{l+1}-h_l}_{L^\infty_{x,v}\pa{m_k}}} \label{I4MDpoly}
\\&&\quad\quad\quad\times\cro{\norm{h_l}_{L^\infty_{t,x,v}\pa{m_k}}+\norm{h_{l-1}}_{L^\infty_{t,x,v}\pa{m_k}}+\norm{g}_{L^\infty_{t,x,v}\pa{m_k}}}.\nonumber
\end{eqnarray}

\bigskip
The main difference lies in $I_3$ where we recall the decomposition $\eqref{decompositionI3}$
\begin{equation}\label{decompositionI3MDpoly}
I_3 \leq I_3^{(1)}+I_3^{(2)}
\end{equation}
with
\begin{eqnarray*}
I_3^{(1)} &=& \sup\limits_{(x,v) \in \Omega\times\R^3}\Big\{\int_0^t c_\mu \mu(v)m_k(v) e^{-\nu(v)(t-s-t_1)}
\\&& \quad\times\sum\limits_{i=1}^{p}\int_{\prod\limits_{j=1}^{p}\br{v_j \cdot n(x_i)>0}}\mathbf{1}_{[t_{i+1},t_i)}(0)\:\abs{B^{(\delta)}_2\pa{h_{l+1}-h_l}}(s,x_i-t_iv_i,v_i)d\Sigma_i(0)\Big\}
\\I_3^{(2)} &=& \sup\limits_{(x,v) \in \Omega\times\R^3}\Big\{\int_0^t  c_\mu \mu(v)m_k(v)e^{-\nu(v)(t-t_1)}
\\&&\quad\times\int_{\prod\limits_{j=1}^{p}\br{v_j \cdot n(x_i)>0}}\mathbf{1}_{t_{p+1}>0}\:\abs{S_{G_\nu}(t_p)\pa{B^{(\delta)}_2\pa{h_{l+1}-h_l}}}(x_p,v_p)d\Sigma_{p}(t_p)\Big\}.
\end{eqnarray*}

\par By definition of $d\Sigma_i(0)$ $\eqref{dSigmai}$ and denoting by $C_k$ the supremum of $c_\mu \mu(v)m_k(v)$ we get
\begin{equation*}
\begin{split}
I_3^{(1)} \leq C_k \int_0^t e^{-\nu_0(t-s)}\sup\limits_{(x,v) \in \Omega\times\R^3}\sum\limits_{i=1}^{p}&\int_{\prod\limits_{\overset{j=1}{j\neq i}}^{p}\br{v_j \cdot n(x_i)>0}}\mathbf{1}_{[t_{i+1},t_i)}(0)
\\&\pa{\int_{\R^3}B_2^{\delta}(h_{l+1}-h_l)(s,x_i,v_i)\abs{v_i}\:dv_i} \:d\sigma_{x_i}ds.
\end{split}
\end{equation*}
We use Lemma $\ref{lem:mixingcontrolB2}$ to estimate the integral in the $i^{th}$ variable and we remind that $d\sigma_{x_i}$ is a probability measure on $\br{v_j\cdot n(x_i)>0}$. This yields
\begin{eqnarray}
I_3^{(1)} &\leq& C_k\tilde{\Delta}_k(\delta)\pa{\int_0^t e^{-\nu_0 (t-s)}\norm{(h_{l+1}-h_l)(s)}_{L^{\infty}_{x,v}\pa{m_k}}\:ds}\nonumber
\\&\leq& e^{-\eps\nu_0 t}C_k \:\frac{\tilde{\Delta}_k(\delta)}{1-\eps}\:\sup\limits_{0\leq s \leq t}\cro{e^{\eps\nu_0 s}\norm{h_{l+1}-h_l}_{L^{\infty}_{x,v}\pa{m_k}}}. \label{I31MDpoly}
\end{eqnarray}

\par The term $I_3^{(2)}$ needs the $L^1_vL^\infty_x\pa{\langle v \rangle^3}$ semigroup theory for $S_{G_\nu}(t)$. Indeed, as in the proof of Proposition $\ref{prop:f1MD}$ we estimate it by
\begin{eqnarray*}
I_3^{(2)} &\leq& C_k \sup\limits_{(x,v) \in \Omega\times\R^3}\Big\{e^{-\nu_0(t-t_1)}\int_0^t e^{-\nu_0(t_1-t_p)}\sum\limits_{i=1}^{p} \int_{\prod\limits_{\overset{j=1}{j\neq i}}^{p}\br{v_j \cdot n(x_i)>0}}\mathbf{1}_{t_p >0}
\\&\:& \quad\quad\quad\pa{\int_{\R^3}\abs{S_{G_\nu}(t_p)\pa{B_2^{\delta}(h_{l+1}-h_l)}\mathbf{1}_{t_1(t_p,x_p,v_p)>0}}\abs{v_p}\:dv_p} \:d\sigma_{x_i}ds\Big\}
\\&\leq& C_k \sup\limits_{(x,v) \in \Omega\times\R^3}\big\{e^{-\nu_0(t-t_1)}\int_0^t ds \:e^{-\nu_0(t_1-t_p)}
\\&\:&\sum\limits_{i=1}^{p} \int_{\prod\limits_{\overset{j=1}{j\neq i}}^{p}\br{v_j \cdot n(x_i)>0}}\mathbf{1}_{t_p >0}\norm{S_{G_\nu}(t_p)\pa{B_2^{\delta}(h_{l+1}-h_l)}\mathbf{1}_{t_1>0}}_{L^1_vL^\infty_x\pa{\langle v \rangle^3}} \:d\sigma_{x_i}ds\Big\}
\end{eqnarray*}
Using Corollary $\ref{cor:SGnuMDcontrolnu-1}$ with $k=3$ and $\nu_0'= (1-\eps')\nu_0$ with $\eps <\eps'<1$ and then applying Lemma $\ref{lem:mixingcontrolB2}$ to obtain
\begin{eqnarray*}
&&\norm{S_{G_\nu}(t_p)\pa{B_2^{\delta}(h_{l+1}-h_l)}\mathbf{1}_{t_1>0}}_{L^1_vL^\infty_x\pa{\langle v \rangle^3}} 
\\&&\quad\quad\quad\quad\quad\leq C_{m_k}e^{-(1-\eps')\nu_0 t_p}\norm{B_2^{\delta}(h_{l+1}-h_l)}_{L^1_vL^\infty_x\pa{\langle v \rangle^2}}
\\&&\quad\quad\quad\quad\quad\leq C_{m_k}\tilde{\Delta}_k(\delta)e^{-(1-\eps')\nu_0 t_p}\norm{h_{l+1}-h_l}_{L^\infty_{x,v}\pa{m_k}}.
\end{eqnarray*}
This estimates allow us to copy the computations made in $\eqref{I32MD}$ and conclude, with $C_{m_k}>0$ a generic constant depending only on $k$
\begin{equation}\label{I32MDpoly}
I_3^{(2)} \leq e^{-\eps\nu_0 t} C_{m_k}\tilde{\Delta}_k(\delta) \sup\limits_{0\leq s \leq t}\cro{e^{\eps\nu_0 s}\norm{h_{l+1}-h_l}_{L^\infty_{x,v}\pa{m_k}}}.
\end{equation}

\par Plugging $\eqref{I31MDpoly}$ and $\eqref{I32MDpoly}$ into $\eqref{decompositionI3MDpoly}$ yields the last estimate
\begin{equation}\label{I3MDpoly}
I_3 \leq e^{-\eps\nu_0t}\:C_{m_k}\tilde{\Delta}_k(\delta)\sup\limits_{0\leq s \leq t}\cro{e^{\eps\nu_0 s}\norm{h_{l+1}-h_l}_{L^\infty_{x,v}\pa{m_k}}}.
\end{equation}

\bigskip
To conclude, we choose $\delta$ small enough such that 
$$\Delta_{m_k,\infty}(\delta) < \Delta_k = \frac{\frac{4}{k-1-\gamma}+1}{2} < 1.$$
Fix $\eps$ in $(0,1)$ such that 
$\eps < 1-\Delta_k$
and finally make $\delta$ even smaller so that in $\eqref{I32MDpoly}$
$$C_{m_k}\tilde{\Delta}_k(\delta) \leq 1 - \frac{\Delta_k}{1-\eps}.$$

\par We gather $\eqref{I1MDpoly}$-$\eqref{I2MDpoly}$-$\eqref{I31MDpoly}$-$\eqref{I4MDpoly}$ and combine them with $\eqref{decompositionCauchyMDpoly}$ 
\begin{equation*}
\begin{split}
&\pa{1-\frac{\Delta_k}{1-\eps}-C_{m_k}\tilde{\Delta}_k(\delta)}\sup\limits_{0\leq s \leq t}\norm{h_{l+1}-h_l}_{L^\infty_{x,v}\pa{m_k}} 
\\&\quad\quad\quad\quad\quad\leq C_{m_k} \sup\limits_{0\leq s \leq t}\cro{e^{\eps\nu_0 s}\norm{h_{l+1}-h_l}_{L^\infty_{x,v}\pa{m_k}}}
\\&\quad\quad\quad\quad\quad\quad\quad\times\cro{\norm{h_l}_{L^\infty_{t,x,v}\pa{m_k}}+\norm{h_{l-1}}_{L^\infty_{t,x,v}\pa{m_k}}+\norm{g}_{L^\infty_{t,x,v}\pa{m_k}}}.
\end{split}
\end{equation*}
Since the constant on the left-hand side is positive we conclude with exactly the same arguments as in the end of the proof of Proposition $\ref{prop:f1SR}$ or Proposition $\ref{prop:f1MD}$.
\end{proof}
\bigskip


\subsection{Existence and exponential decay for equation $\eqref{f2}$ in $L^\infty_{x,v}\pa{\langle v \rangle^\beta\mu^{-1/2}}$}\label{subsec:f2}

In this section we establish the well-posedness and the exponential decay of $\eqref{f2}$ in $L^\infty_{x,v}\pa{\langle v \rangle^\beta\mu^{-1/2}}$, with $\beta$ such that Theorem $\ref{theo:semigroupLinfty}$ holds.

\bigskip
\begin{prop}\label{prop:f2}
Let $\Omega$ be an analytic strictly convex domain (resp. a $C^1$ bounded domain)and let $0<\lambda_G'<\lambda_G$ (defined by Theorem $\ref{theo:semigroupLinfty}$). Let $m=e^{\kappa\abs{v}^\alpha}$ with $\kappa >0$ and $\alpha$ in $(0,2)$ or $m=\langle v \rangle^k$ with $k>5+\gamma$. Then there exists $\eta_2>0$ such that if $g(t,x,v)$ is in $L^\infty_{x,v}\pa{m}$ with
$$\forall t\geq 0, \quad \norm{g(t)}_{L^\infty_{x,v}\pa{m}}\leq \eta_2e^{-\lambda_G t}.$$
Then there exists a solution $f_2$ in $L^\infty_{x,v}\pa{\langle v \rangle^\beta\mu^{-1/2}}$ to 
\begin{equation}\label{f2prop}
\partial_tf_2 = G f_2  + Q(f_2,f_2) + A^{(\delta)}g,
\end{equation}
with zero as initial data and satisfying the specular reflections (resp. Maxwellian diffusion) boundary conditions. Moreover, if we assume $\Pi_G(f_2+g) = 0$ then there exists $C_2>0$ such that
$$\forall t\geq 0,\quad \norm{f_2(t)}_{L^\infty_{x,v}\pa{\langle v \rangle^\beta\mu^{-1/2}}}\leq C_2 \eta_2 e^{-\lambda_G' t}.$$
$\Pi_G$ is the projection on the kernel of $G$ and depends on the boundary conditions (see $\eqref{PiGSR}$-$\eqref{PiGMD}$).The constants $\eta_2$ and $C_2$ are constructive and only depend on $\lambda_G'$, $k$, $q$, $\delta$ and the kernel of the collision operator. 
\end{prop}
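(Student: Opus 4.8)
The plan is to recast the Cauchy problem \eqref{f2prop} with zero initial datum into its Duhamel form relative to the semigroup $S_G(t)$ furnished by Theorem \ref{theo:semigroupLinfty},
\[
f_2(t) = \int_0^t S_G(t-s)\cro{Q(f_2,f_2) + A^{(\delta)}g}(s)\:ds,
\]
and to solve it by a fixed point argument in the time-weighted space normed by $N(f)=\sup_{t\geq 0}\cro{e^{\lambda_G' t}\norm{f(t)}_{L^\infty_{x,v}\pa{w_\beta}}}$, where $w_\beta=\langle v\rangle^\beta\mu^{-1/2}$. Concretely I would iterate $f_2^{(0)}=0$ and $f_2^{(l+1)}(t)=\int_0^t S_G(t-s)\cro{Q(f_2^{(l)},f_2^{(l)})+A^{(\delta)}g}(s)\:ds$, the quadratic term supplying the contraction once $\eta_2$ is small.

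The argument rests on two estimates. First, the source term: since $A^{(\delta)}$ acts only on $v$ with a compactly supported kernel, Lemma \ref{lem:controlA} with $q=\infty$ and $\alpha=1/2$ lifts $A^{(\delta)}g(s,x,\cdot)$ from $L^\infty_v$ into $L^\infty_v\pa{w_\beta}$; using $m\geq 1$ to discard the weight on $g$ and taking the supremum in $x$ (locality in position),
\[
\norm{A^{(\delta)}g(s)}_{L^\infty_{x,v}\pa{w_\beta}} \leq C_A\norm{g(s)}_{L^\infty_{x,v}\pa{m}} \leq C_A\eta_2 e^{-\lambda_G s}.
\]
Combined with the decay $\norm{S_G(t-s)\cdot}\leq C_G e^{-\lambda_G(t-s)}\norm{\cdot}$, the time convolution gives $\int_0^t e^{-\lambda_G(t-s)}e^{-\lambda_G s}\:ds = te^{-\lambda_G t}$, and since $te^{-\lambda_G t}\leq C_{\lambda_G'}e^{-\lambda_G' t}$ this is precisely the mechanism producing the slightly degraded rate $\lambda_G'<\lambda_G$. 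Second, the quadratic term is dealt with directly by the Duhamel estimate of Theorem \ref{theo:semigroupLinfty}, which already outputs the $e^{-\lambda_G' t}$ rate and is quadratic in $N(f_2)$; this is what renders the map a contraction for $\eta_2$ small.

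The delicate point is that $S_G(t)$ and its decay are available only on the conservation-constrained subspaces $\mbox{SR}\cro{\cdot}$ or $\mbox{MD}\cro{\cdot}$, where $\Pi_G=0$, whereas neither $A^{(\delta)}g$ nor $f_2$ need lie there. I would therefore split $f_2=\Pi_G^\bot f_2+\Pi_G f_2$ and run the Duhamel iteration above only on the microscopic part $\Pi_G^\bot f_2$, with source $\Pi_G^\bot A^{(\delta)}g$ (note $\Pi_G^\bot Q=Q$ by Lemma \ref{lem:controlQ}). For the macroscopic part, applying $\Pi_G$ to \eqref{f2prop} and using, on the one hand, that $\Pi_G G=0$ (the \emph{a priori} conservation of the collision invariants under the flow forces $G$ to map into, and preserve, $\mbox{Ran}(\Pi_G^\bot)$) and, on the other, that $\Pi_G Q(f_2,f_2)=0$ (Lemma \ref{lem:controlQ} gives $\pi_L Q=0$, hence the moments of $Q$ against $1$ and $\abs{v}^2$ vanish, cf. \eqref{piL}), one finds the ODE $\frac{d}{dt}\Pi_G f_2=\Pi_G A^{(\delta)}g$ with $\Pi_G f_2(0)=0$. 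Under the standing hypothesis $\Pi_G(f_2+g)=0$ this yields $\Pi_G f_2=-\Pi_G g$; since $\Pi_G$ is bounded on $L^\infty_{x,v}\pa{w_\beta}$ (its coefficients are integrals of $g$ against $1$ and $\abs{v}^2$, finite because $m^{-1}\langle v\rangle^2\in L^1_v$ for both admissible weights), the macroscopic part inherits the decay of $g$, namely $\norm{\Pi_G f_2(t)}_{L^\infty_{x,v}\pa{w_\beta}}\leq C\eta_2 e^{-\lambda_G t}$.

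Collecting the two contributions, the fixed point contracts on a ball of radius $\sim\eta_2$ in the $N$-norm, giving a unique $f_2$ with $\norm{f_2(t)}_{L^\infty_{x,v}\pa{w_\beta}}\leq C_2\eta_2 e^{-\lambda_G' t}$. I expect the main obstacle to be exactly the interplay between the weight-lifting role of $A^{(\delta)}$ and the conservation structure: the source $A^{(\delta)}g$ falls outside the natural invariant subspace of $S_G$, so its macroscopic component can be controlled only through the hypothesis $\Pi_G(f_2+g)=0$, and one must check carefully that $\Pi_G$ is well defined and bounded on $L^\infty_{x,v}\pa{w_\beta}$ and that $\Pi_G Q=0$, using the integrability constraints on $m$ and the spectral structure of $L$.
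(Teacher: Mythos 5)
Your proposal is correct and follows essentially the same route as the paper: Duhamel with $S_G(t)$ from Theorem \ref{theo:semigroupLinfty}, the split $f_2=\Pi_G f_2+\Pi_G^\bot f_2$ with the macroscopic part pinned to $-\Pi_G g$ by the hypothesis $\Pi_G(f_2+g)=0$, Lemma \ref{lem:controlA} for the source, the orthogonality of $Q$ for the microscopic equation, and the $te^{-\lambda_G t}\leq C e^{-\lambda_G' t}$ convolution producing the degraded rate. The only cosmetic differences are that the paper closes the nonlinear estimate by a bootstrap-in-time argument on the a priori solution (whose existence it imports from \cite{Gu6}) rather than by your fixed-point iteration, and that your detour through the ODE $\frac{d}{dt}\Pi_G f_2=\Pi_G A^{(\delta)}g$ is unnecessary since the constraint $\Pi_G f_2=-\Pi_G g$ is assumed outright.
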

\bigskip

\begin{proof}[Proof of Proposition $\ref{prop:f2}$]
We start by noticing that the Cauchy problem for 
$$\partial_t f = Gf + Q(f,f)$$
has been solved in $L^\infty_{x,v}\pa{\langle v \rangle^\beta\mu^{-1/2}}$ for small initial data in \cite{Gu6} Section $5$, both for specular and diffusive boundary conditions. The addition of a mere source term $A^{(\delta)}g$ for a null initial data is handled the same way and we therefore have the existence of $f_2$ solution to $\eqref{f2prop}$ in $L^\infty_{x,v}\pa{\langle v \rangle^\beta\mu^{-1/2}}$.

\bigskip
We suppose that $\Pi_G (f_2+g)=0$, in other words we ask for $f_2+g$ to have the appropriate conservation laws depending on the boundary conditions. We would like to apply the $L^\infty$ theory for $S_G$ given by Theorem $\ref{theo:semigroupLinfty}$ but it is only applicable in the space of functions in $L^\infty_{x,v}\pa{\langle v \rangle^\beta\mu^{-1/2}}$ satisfying the respective boundary conditions. We thus need to independently study $\Pi_G(f_2)$ and $\Pi_G^\bot(f_2)$.

\bigskip
\textbf{Study of the projection $\Pi_G(f_2)$.} Since $\Pi_G (f_2+g)=0$, we have that $\Pi_G(f_2) = -\Pi_G(g)$. By Definition  $\ref{PiGSR}$ for specular reflections or Definition $\ref{PiGMD}$ for Maxwellian diffusion we have the following form for $\Pi_G(g)$:
$$\Pi_G (g) = \sum\limits_{i=0}^{d+1}c_i\pa{\int_{\Omega\times\R^3}g(t,x,v)\phi_i(v)\:dxdv}\phi_i(v)\mu(v)$$
with $c_i$ is either $0$ or $1$ and $\phi_i$ is given by $\eqref{piL}$. It follows that
$$\norm{\Pi_G(f_2)}_{L^\infty_{x,v}\pa{\langle v \rangle^\beta\mu^{-1/2}}} \leq \sum\limits_{i=0}^{d+1}\abs{\int_{\Omega\times\R^3}g(t,x,v)\phi_i(v)\:dxdv}\sup\limits_{v\in\R^3}\pa{\langle v \rangle^\beta\phi\mu^{1/2}}.$$
Since $k>5+\gamma$ and $\phi_i$ is a polynomial in $v$ of order $0$, $1$ or $2$ it follows
$$\abs{\int_{\Omega\times\R^3}g(t,x,v)\phi_i(v)\:dxdv} \leq \abs{\Omega}\pa{\int_{\R^3}\langle v \rangle^{-k}\phi_i(v)\:dv}\norm{g}_{L^\infty_{x,v}\pa{m}}.$$
As a conclusion, there exists $C_\Pi >0$ such that
\begin{equation}\label{PiGf2}
\forall t\geq 0, \quad \norm{\Pi_G(f_2)}_{L^\infty_{x,v}\pa{\langle v \rangle^\beta\mu^{-1/2}}} \leq C_\Pi \norm{g}_{L^\infty_{x,v}\pa{m}}\leq C_\Pi \eta_2 e^{-\lambda_Gt}.
\end{equation}

\bigskip
\textbf{Study of the orthogonal part of $f_2$.} By definition we have that $\Pi_G(G(f_2))=0$ and $G(f_2) = G(\Pi_G^\bot(f_2))$. Thanks to the orthogonality property of $Q$, given by Lemma $\ref{lem:controlQ}$, $\Pi_G(Q(f_2,f_2))=0$. Therefore, $F_2 = \Pi_G^\bot(f_2)$ satisfies the following differential equation
$$\partial_t F_2 = G(F_2) + Q(f_2,f_2) + \Pi_G^\bot\pa{A^{(\delta)}g}.$$
Every term in the latter equation satisfies the conservation laws associated to the boundary conditions. We can therefore use Theorem $\ref{theo:semigroupLinfty}$ and have the following Duhamel representation for $F_2$ almost everywhere
\begin{equation}\label{F2}
F_2(t,x,v) = \int_0^t S_G(t-s)Q(f_2,f_2)(x,v)\:ds + \int_0^tS_G(t-s)\Pi_G^\bot\pa{A^{(\delta)}g}\:ds
\end{equation}

\par The first term on the right-hand side of $\eqref{F2}$ is dealt with by Theorem $\ref{theo:semigroupLinfty}$ with $\lambda_G'<\lambda_G$.
\begin{equation*}
\begin{split}
&\norm{\int_0^t S_G(t-s)Q(f_2,f_2)(x,v)\:ds}_{L^\infty_{x,v}\pa{\langle v \rangle^\beta\mu^{-1/2}}} 
\\&\quad\quad\quad\quad\quad\leq  C_G e^{-\lambda_G' t}\sup\limits_{s\in [0,t]}\pa{e^{\lambda_G's} \norm{f_2(s)}^2_{L^\infty_{x,v}\pa{\langle v \rangle^\beta\mu^{-1/2}}}}.
\end{split}
\end{equation*}
Then we use the fact that $f_2 = F_2 +\Pi_G(f_2)$ together with the exponential decay of $\Pi_G(f_2)$ $\eqref{PiGf2}$. This yields
\begin{equation}\label{PiGbotf2Q}
\begin{split}
&\norm{\int_0^t S_G(t-s)Q(f_2,f_2)(x,v)\:ds}_{L^\infty_{x,v}\pa{\langle v \rangle^\beta\mu^{-1/2}}} 
\\&\quad\quad\quad\leq 2C_Ge^{-\lambda'_G t}\cro{\pa{\sup\limits_{s\in [0,t]}\pa{e^{\lambda'_Gs} \norm{F_2(s)}_{L^\infty_{x,v}\pa{\langle v \rangle^\beta\mu^{-1/2}}}}}^2+ C^2_\Pi \eta_2^2}.
\end{split}
\end{equation}

\bigskip
For the second term on the right-hand side of $\eqref{F2}$ we use Theorem $\ref{theo:semigroupLinfty}$ to get
\begin{equation*}
\begin{split}
&\norm{\int_0^t S_G(t-s)\Pi_G^\bot\pa{A^{(\delta)}g}\:ds}_{L^\infty_{x,v}\pa{\langle v \rangle^\beta\mu^{-1/2}}} 
\\&\quad\quad\quad\quad\quad\leq C_G\int_0^t e^{-\lambda_G(t-s)}\norm{\Pi_G^\bot\pa{A^{(\delta)}g}(s)}_{L^\infty_{x,v}\pa{\langle v \rangle^\beta\mu^{-1/2}}}\:ds.
\end{split}
\end{equation*}
Here again we can bound the norm of $\Pi_G^\bot\pa{A^{(\delta)}g}$ by the norm of $A^{(\delta)}g$ which is itself bounded by Lemma $\ref{lem:controlA}$. This yields
\begin{eqnarray*}
\norm{\Pi_G^\bot\pa{A^{(\delta)}g}(s)}_{L^\infty_{x,v}\pa{\langle v \rangle^\beta\mu^{-1/2}}} &\leq& C_\Pi C_A \norm{g(s)}_{L^\infty_{x,v}\pa{m}}
\\&\leq& \eta_2 C_\Pi C_A e^{-\lambda_G s}.
\end{eqnarray*}
Hence, since $\lambda_G \leq \lambda'_G$,
\begin{eqnarray}
\norm{\int_0^t S_G(t-s)\Pi_G^\bot\pa{A^{(\delta)}g}\:ds}_{L^\infty_{x,v}\pa{\langle v \rangle^\beta\mu^{-1/2}}} &\leq& \eta_2 C_\Pi C_A C_G  t e^{-\lambda_G t} \nonumber
\\&\leq& \eta_2C_{\Pi^\bot}e^{-\lambda_G't},\label{PiGbotf2A}
\end{eqnarray}
where $C_{\Pi^\bot}>0$ is a constant depending on $\lambda_G'$.

\bigskip
Plugging $\eqref{PiGbotf2Q}$ and $\eqref{PiGbotf2A}$ into $\eqref{F2}$ yields, with $C_2$ denoting any positive constant independent of $\eta_2$,
$$e^{\lambda_G't}\norm{F_2(t)}_{L^\infty_{x,v}\pa{\langle v \rangle^\beta\mu^{-1/2}}} \leq  C_2\cro{ \eta_2^2 + \eta_2 +  \pa{\sup\limits_{s\in [0,t]}\pa{e^{\lambda'_Gs} \norm{F_2(s)}_{L^\infty_{x,v}\pa{\langle v \rangle^\beta\mu^{-1/2}}}}}^2}.$$

\par At $t=0$, $F_2=0$ and therefore we can define
$$t_M = \sup\br{t\geq 0:\quad e^{\lambda_G't}\norm{F_2(t)}_{L^\infty_{x,v}\pa{\langle v \rangle^\beta\mu^{-1/2}}}\leq (C_2+2)\eta_2}.$$
Suppose that $t_M < +\infty$ then we have that
$$e^{\lambda_G't_M}\norm{F_2(t_M)}_{L^\infty_{x,v}\pa{\langle v \rangle^\beta\mu^{-1/2}}} \leq C_2\eta_2 + C_2\pa{(C_2+2)^2+1}\eta_2^2$$
and if $\eta_2$ is small enough:
$$e^{\lambda_G't_M}\norm{F_2(t_M)}_{L^\infty_{x,v}\pa{\langle v \rangle^\beta\mu^{-1/2}}} \leq (C_2+1)\eta_2.$$
Therefore if $\eta_2$ is small enough we reach a contradiction by definition of $t_M$, which implies $t_M = +\infty$ and
\begin{equation}\label{PiGbotf2}
\forall t\geq 0, \quad e^{\lambda_G't}\norm{\Pi_G^\bot(f_2)}_{L^\infty_{x,v}\pa{\langle v \rangle^\beta\mu^{-1/2}}} \leq (C_2+2)\eta_2.
\end{equation}

\bigskip
To conclude the proof we simply gather the control of $\Pi_G(f_2)$ of $\eqref{PiGf2}$ and the control of the orthogonal part $\eqref{PiGbotf2}$.
\end{proof}
\bigskip

\section{Cauchy theory for the full Boltzmann equation}\label{sec:mainproofs}

This section is dedicated to the proof of Theorem $\ref{theo:cauchySR}$ and of Theorem $\ref{theo:cauchyMD}$. We tackle each of the issues: existence and exponential decay, uniqueness, continuity and positivity separately.
\bigskip


\subsection{Existence and exponential decay}\label{subsec:existenceexpodecay}

The existence and exponential trend to equilibrium of $F$ solution to the full Boltzmann equation $\eqref{BE}$ near to equilibrium $F=\mu + f$ is equivalent to the existence and exponential decay of $f$ solution to the perturbed Boltzmann equation $\eqref{perturbedBE}$. The latter directly follows from Proposition $\ref{prop:f1SR}$, Proposition $\ref{prop:f1MD}$ and Proposition $\ref{prop:f2}$. Indeed, we consider the following scheme.
\par Define $f_1^{(0)}(t,x,v) = f^{(0)}_2(t,x,v) = 0$ and the iterative process
\begin{eqnarray*}
\partial_t f^{(l+1)}_1 &=& B^{(\delta)} f^{(l+1)}_1 + Q(f_1^{(l+1)},f_1^{(l+1)}+f^{(l)}_2) \quad\mbox{and}\quad f^{(l+1)}_1(0,x,v)=f_0(x,v),
\\\partial_t f^{(l+1)}_2 &=& G f^{(l+1)}_2 + Q(f^{(l+1)}_2,f^{(l+1)}_2) + A^{(\delta)}f^{(l)}_1 \quad\mbox{and}\quad f^{(l+1)}_2(0,x,v)=0
\end{eqnarray*}
with either specular reflections or Maxwellian diffusion boundary conditions and the additional condition $\Pi_G (f_2^{(l+1)}+f_1^{(l)})=0$.

\bigskip
Suppose that $\norm{f_0}_{L^\infty_{x,v}\pa{m}}$ is smaller than $\eta_0$ with $\eta_0$ such that
$$C_1C_2\eta_0 \leq \eta_1 \quad\mbox{and}\quad C_1\eta_0 \leq \eta_2,$$
where the constants $C_1$, $\eta_1$ are defined in Proposition $\ref{prop:f1SR}$ or Proposition $\ref{prop:f1MD}$ (depending on the boundary conditions) and $C_2$, $\eta_2$ in Proposition $\ref{prop:f2}$.

\bigskip
We define $\lambda = \min\br{\lambda_G,\lambda_m(\delta)}$. Fix $\lambda'$ in $[0,\lambda]$.
\par By induction we shall prove that for all $l$, $f^{(l)}_1$ and $f^{(l)}_2$ are well-defined and satisfy
\begin{eqnarray*}
\forall t\geq 0, \quad &&\norm{f^{(l)}_1}_{L^\infty_{x,v}\pa{m}} \leq C_1 e^{-\lambda t}\norm{f_0}_{L^\infty_{x,v}\pa{m}}
\\ &&\norm{f^{(l)}_2}_{L^\infty_{x,v}\pa{\langle v \rangle^\beta\mu^{-1/2}}} \leq C_1C_2 e^{-\lambda't}\norm{f_0}_{L^\infty_{x,v}\pa{m}},
\end{eqnarray*}
where $\beta$ is such that Theorem $\ref{theo:semigroupLinfty}$ holds.
\par If the latter inequalities hold at rank $l$ then by definition of $\eta_0$ we have that
$$\norm{f^{(l)}_2}_{L^\infty_tL^\infty_{x,v}\pa{\langle v \rangle^\beta\mu^{-1/2}}} \leq \eta_1 \quad\mbox{and}\quad  \norm{f^{(l)}_1}_{L^\infty_tL^\infty_{x,v}\pa{m}} \leq \eta_2 e^{-\lambda t}$$
and by Proposition $\ref{prop:f1SR}$, Proposition $\ref{prop:f1MD}$ and Proposition $\ref{prop:f2}$ we can therefore construct $f_1^{(l+1)}$ and $f_2^{(l+1)}$. Moreover, these functions satisfy for all $t\geq 0$,
\begin{eqnarray*}
&&\norm{f^{(l+1)}_1}_{L^\infty_{x,v}\pa{m}} \leq C_1 e^{-\lambda t}\norm{f_0}_{L^\infty_{x,v}\pa{m}},
\\ &&\norm{f^{(l+1)}_2}_{L^\infty_{x,v}\pa{\langle v \rangle^\beta\mu^{-1/2}}} \leq C_2 e^{-\lambda 't}\norm{f^{(l)}_1}_{L^\infty_tL^\infty_{x,v}\pa{m}} \leq C_1C_2e^{-\lambda't}\norm{f_0}_{L^\infty_{x,v}\pa{m}}.
\end{eqnarray*}

\bigskip
We thus derive the weak-* convergence of $\pa{f_1^{(l)}}_{l \in \N}$ and $\pa{f_2^{(l)}}_{l \in \N}$ (up to subsequences) towards $f_1$ and $f_2$ solutions of the system of equations $\eqref{f1}$-$\eqref{f2}$. Therefore $f=f_1+f_2$ is a solution to the perturbed Boltzmann equation  $\eqref{perturbedBE}$ and satisfies the desired exponential decay.
\bigskip


\subsection{Uniqueness of solutions}\label{subsec:uniqueness}

Like the results about uniqueness obtained in \cite{Gu6} in $L^\infty_{x,v}\pa{\langle v \rangle^\beta \mu^{-1/2}}$, the uniqueness results given in Theorem $\ref{theo:cauchySR}$ and of Theorem $\ref{theo:cauchyMD}$ only apply in a perturbative regime. In other words it states the uniqueness of functions of the specific form $F= \mu + f$. This allows to use most of the computations made in previous sections.
\par More precisely, we fix boundary conditions (either specular or diffusive) and we consider $f_0$ such that
$$\norm{f_0}_{L^\infty_{x,v}\pa{m}}\leq \eta_0$$
with $\eta_0$ small enough such that we can construct (see previous subsection) $F= \mu + f$ a solution to the full Boltzmann equation. Note that we have exponential decay for $f$ in the following form
\begin{equation}\label{expodecayuniqueness}
\exists C_m>0, \: \forall t\geq 0,\quad \norm{f(t)}_{L^\infty_{x,v}\pa{m}} \leq C_m e^{-\lambda_m t}\norm{f_0}_{L^\infty_{x,v}\pa{m}}.
\end{equation}
We are about to prove that any other solution to the full Boltzmann equation of the form $H =\mu + h$ with $h_0=f_0$ and satisfying the boundary conditions must be $F$ itself on condition that $\eta_0$ is small enough.

\bigskip
Consider $H = \mu + h$ to be another solution to the Boltzmann equation with the same boundary conditions as $F$ and the same initial data then $f-h$ satisfies
$$\partial_t \pa{f-h} + = G\pa{f-h} + Q\pa{f+h,f-h},$$
with the same boundary conditions and zero as initial data. If 
\begin{equation}\label{etauniqueness}
\sup\limits_{0\leq t \leq T_0}\norm{f+h}_{L^\infty_{x,v}\pa{m}}\leq \eta
\end{equation}
where $\eta$ is small enough then we can use exactly the same computations as in Section $\ref{sec:extensionSRMD}$ to prove that for a given initial data $g_0 \in L^\infty_{x,v}\pa{m}$ there exists a solution in $L^\infty_{x,v}\pa{m}$ to
\begin{equation}\label{eqdiffuniqueness}
\partial_t g + = G\pa{g} + Q\pa{f+h,g},
\end{equation}
with the required boundary condition (the smallness assumption on $f+h$ playing the same role as the smallness of $\Delta_{m,\infty}(\delta)$). 
\par Moreover, if $g_0$ is small enough to fit the computations in Section $\ref{sec:extensionSRMD}$ we have an exponential decay for $g$ and in particular
$$\exists C_m>0,\:\forall t\geq 0, \quad \norm{g(t)}_{L^\infty_{x,v}\pa{m}}\leq  C_m \norm{g_0}_{L^\infty_{x,v}\pa{m}}.$$
The latter inequality yields uniqueness for $g$ for small initial data $g_0$.

\bigskip
The uniqueness $f=h$ follows from a bootstrap argument. Consider $\eta_0$ such that
$$\eta_0\leq \frac{\eta}{4\max\br{1,C_m}}$$
with $\eta$ defined in $\eqref{etauniqueness}$ and $C_m$ defined by $\eqref{expodecayuniqueness}$. Define
$$T_0 = \sup\br{T >0,\quad \norm{f(T)}_{L^\infty_{x,v}\pa{m}}\leq \eta/2\quad\mbox{and}\quad\norm{h(T)}_{L^\infty_{x,v}\pa{m}}\leq \eta/2}.$$
Suppose $T_0<+\infty$, then we have that $\eqref{etauniqueness}$ holds on $[0,T_0]$ and therefore $f-h$ is the unique solution to $\eqref{eqdiffuniqueness}$ with initial value $0$ and thus almost everywhere in $\Omega\times\R^3$
$$\forall t \in [0,T_0],\quad f(t,x,v)=h(t,x,v).$$
Thanks to the exponential decay $\eqref{expodecayuniqueness}$ satisfied by $f$ we have at $T_0$
$$\norm{f(T_0)}_{L^\infty_{x,v}\pa{m}} \leq C_m \eta_0 \leq \frac{\eta}{4}$$
and
$$\norm{h(T_0)}_{L^\infty_{x,v}\pa{m}} \leq C_m \eta_0 \leq \frac{\eta}{4}.$$
This contradicts the definition of $T_0$ and therefore we must have $T_0 = +\infty$ and $f(t,x,v)=h(t,x,v)$ for all $t\geq 0$ and almost every $(x,v)$ in $\Omega\times\R^3$. This concludes the proof of uniqueness in the perturbative regime.
\bigskip


\subsection{Continuity of solutions away from the grazing set}\label{subsec:continuity}

The continuity of solutions away from the grazing set have already been studied in \cite{Gu6}, both for specular reflections and Maxwellian diffusion in convex domains, and in \cite{EGKM} for Maxwellian diffusion with more general bounded domains. We prove here that their results apply in our present work.
\par Obviously, the continuity of $F=\mu+f$ is equivalent to the one of $f$, which we tackle here.

\bigskip
We recall \cite{Gu6} Lemma $21$ with our notations.
\begin{lemma}\label{lem:continuitySR}
Let $\Omega$ be a $C^2$ strictly convex in the sense $\eqref{strictlyconvex}$ bounded domain. Let $f_0$ be continuous on $\bar{\Omega}\times\R^3 - \Lambda_0$ and satisfying the specular reflections condition at the boundary. At last, let $q(t,x,v)$ be continuous in the interior of $[0,+\infty)\times\Omega\times\R^3$ with
$$\sup\limits_{[0,+\infty)\times\Omega\times\R^3}\abs{\frac{q(t,x,v)}{\nu(v)}} < +\infty.$$
Then the solution to
$$\partial_t f = G_\nu(f) + q(t,x,v)$$
with initial data $f_0$ satisfying the specular reflections boundary conditions is continuous on $[0,+\infty)\times \pa{\bar{\Omega}\times\R^3 - \Lambda_0}$.
\end{lemma}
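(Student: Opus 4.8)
The plan is to represent the solution explicitly along the backward billiard characteristics and to reduce the continuity of $f$ to the continuity of the characteristic flow together with that of the data $f_0$ and of the source $q$. First I would write, via the Duhamel formula attached to the semigroup $S_{G_\nu}(t)$ constructed in Proposition $\ref{prop:semigroupGnuSR}$,
\begin{equation*}
f(t,x,v) = \pa{S_{G_\nu}(t)f_0}(x,v) + \int_0^t \pa{S_{G_\nu}(t-s)q(s,\cdot,\cdot)}(x,v)\:ds,
\end{equation*}
and then exploit the explicit representation $\eqref{representationGnuSR}$ (with $\eps=0$). Since $\nu$ is invariant under specular reflection, it is constant along each trajectory, so both pieces carry only the scalar factor $e^{-\nu(v)\tau}$, which is continuous in $(t,x,v)$; it therefore suffices to control the composition of $f_0$ and of $q$ with the backward flow.

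For the free part, the representation reads, for almost every $(x,v)\notin\Lambda_0$,
\begin{equation*}
\pa{S_{G_\nu}(t)f_0}(x,v) = e^{-\nu(v)t}f_0\pa{X(t,x,v),V(t,x,v)},
\end{equation*}
where $(X,V)$ is the footprint on the initial plane $\br{t=0}$ of the backward characteristic issued from $(t,x,v)$ (the sum in $\eqref{representationGnuSR}$ reducing to a single term almost everywhere). By the construction of characteristics in a $C^2$ strictly convex domain $\cite{Bri2}$, the map $(t,x,v)\mapsto(X,V)$ is continuous on $[0,+\infty)\times\pa{\bar{\Omega}\times\R^3-\Lambda_0}$ and the reached phase point $(X,V)$ stays outside $\Lambda_0$; composing with the continuous $f_0$ yields continuity of this term. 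The specular reflection condition guarantees that the one-sided limits at a non-grazing boundary point $(x,v)$ and at its reflection $(x,\mathcal{R}_x(v))$ coincide, so that no jump is created at $\partial\Omega$.

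For the source part I would use that, along the backward flow $\tau\mapsto(X(\tau),V(\tau))$ issued from $(t,x,v)$,
\begin{equation*}
\int_0^t \pa{S_{G_\nu}(t-s)q(s)}(x,v)\:ds = \int_0^t e^{-\nu(v)\tau}\,q\pa{t-\tau,X(\tau),V(\tau)}\:d\tau.
\end{equation*}
The hypothesis $\sup\abs{q/\nu}<\infty$ provides the integrable domination $e^{-\nu(v)\tau}\abs{q}\leq C\nu(v)e^{-\nu_0\tau}$, so the integral is well defined and, by dominated convergence, continuous in $(t,x,v)$: as $(t,x,v)$ tends to a limit the integrand converges for every $\tau$ that is not a rebound time of the limiting trajectory (a set of measure zero), along which the flow lands in the interior $\Omega\times\R^3$ where $q$ is continuous. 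Summing the two contributions gives continuity of $f$ on $[0,+\infty)\times\pa{\bar{\Omega}\times\R^3-\Lambda_0}$.

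The main obstacle is purely geometric and is concentrated in the second step: one must show that the backward billiard flow, together with the number of rebounds and the rebound times and positions, depends continuously on $(t,x,v)$ as long as $(x,v)\notin\Lambda_0$. This is exactly where the strict convexity $\eqref{strictlyconvex}$ is indispensable — at a grazing velocity the flow is genuinely discontinuous, since a tangential characteristic may or may not re-enter $\Omega$, which forces the exclusion of $\Lambda_0$ — and rather than reprove it I would invoke the detailed characteristic analysis of $\cite{Bri2}$ to secure this continuity.
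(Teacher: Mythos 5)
The paper offers no proof of this lemma: it is imported verbatim from \cite{Gu6} (Lemma~21 there), so there is nothing internal to compare against. Your reconstruction follows the same route Guo takes — Duhamel along the semigroup $S_{G_\nu}$, the explicit backward-characteristic representation, and reduction to continuity of the data composed with the billiard flow — so the strategy is the right one and the argument is essentially sound.

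Two points deserve more care than you give them. First, the map $(t,x,v)\mapsto (X,V)$ is \emph{not} continuous on all of $[0,+\infty)\times(\bar{\Omega}\times\R^3-\Lambda_0)$: across configurations where a rebound time $t_i(x,v)$ crosses $0$, the number of rebounds changes by one and the footprint velocity jumps from $v_{i-1}$ to $\mathcal{R}_{x_i}(v_{i-1})$. What is true is that $f_0(X,V)$ is continuous there, precisely because $f_0$ is assumed to satisfy the specular reflection condition; you do invoke this, but you should state it as the resolution of a genuine discontinuity of the flow rather than as continuity of the flow itself (and similarly for the source term, where no boundary compatibility is available but the offending set of $\tau$ has measure zero, as you note). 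Second, the whole argument presupposes that the backward trajectory from a non-grazing point undergoes only finitely many rebounds in time $t$, with the rebound count locally constant or changing only through the mechanism above, and that intermediate rebounds never become grazing. This is exactly the content of Guo's velocity lemma for strictly convex domains (the near-invariance of $\xi(x)^2+(v\cdot\nabla\xi(x))^2-2\xi(x)\,v\cdot\nabla^2\xi(x)\cdot v$ along trajectories), and it is the one place where strict convexity and the $C^2$ regularity enter quantitatively; deferring it to the characteristic analysis of \cite{Bri2} is acceptable, but the statement you need is stronger than the almost-everywhere finiteness quoted in Section~3 of this paper — you need it pointwise on the complement of $\Lambda_0$, with local uniformity, which is what the velocity lemma provides.
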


\bigskip
Thanks to the previous proof of uniqueness, $f$ is the limit of $f_1^{(l)} + f_2^{(l)}$ where the two sequences have been defined in Subsection $\ref{subsec:existenceexpodecay}$ by
\begin{eqnarray*}
\partial_t f^{(l+1)}_1 &=& G_\nu f^{(l+1)}_1 + B^{(\delta)}_2 f^{(l+1)}_1 + Q(f_1^{(l+1)},f_1^{(l+1)}+f^{(l)}_2),
\\\partial_t f^{(l+1)}_2 &=& G_\nu f^{(l+1)}_2 + (A^{(\delta)}+B^{(\delta)}_2)f^{(l+1)}_2 + Q(f^{(l+1)}_2,f^{(l+1)}_2) + A^{(\delta)}f^{(l)}_1
\end{eqnarray*}
with $f^{(l+1)}_1(0,x,v)=f_0(x,v)$ and $f^{(l+1)}_2(0,x,v)=0$ and specular reflections boundary condition.
\par We use the method of \cite{Gu6} Section $1$ and approximate these solutions by respectively $f^{(l,l')}_1$ and $f^{(l,l')}_2$ with same initial data and boundary conditions and satisfying both  $f^{(l,0)}_1=f^{(l,0)}_2 =0$ and the inductive property
\begin{eqnarray*}
\partial_t f^{(l+1,l'+1)}_1 &=& G_\nu f^{(l+1,l'+1)}_1 + B^{(\delta)}_2 f^{(l+1,l')}_1 + Q(f_1^{(l+1,l')},f_1^{(l+1,l')}+f^{(l)}_2),
\\\partial_t f^{(l+1,l'+1)}_2 &=& G_\nu f^{(l+1,l'+1)}_2 + (A^{(\delta)}+B^{(\delta)}_2)f^{(l+1,l')}_2 + Q(f^{(l+1,l')}_2,f^{(l+1,l')}_2) + A^{(\delta)}f^{(l)}_1.
\end{eqnarray*}
Note that all the functions involved here are in $L^\infty_{x,v}\pa{m}$.

\bigskip
The proof of continuity is then done by induction on $l$.
\par Suppose that $f^{(l)}_1$ and $f^{(l)}_2$ are continuous on $[0,+\infty)\times \pa{\bar{\Omega}\times\R^3 - \Lambda_0}$ and that $f^{(l+1,l')}_1$ and $f^{(l+1,l')}_2$ as well. Then we can easily apply Lemma $\ref{lem:continuitySR}$ to $ f^{(l+1,l'+1)}_1$ and $f^{(l+1,l'+1)}_2$ thanks to the control on $A^{(\delta)}$ (Lemma $\ref{lem:controlA}$), on $B^{(\delta)}_2$ (Lemma $\ref{lem:controlB2}$) and on $Q$ (Lemma $\ref{lem:controlQ}$) and the fact that $f^{(l)}_1$, $f^{(l)}_2$, $f^{(l+1,l')}_1$ and $f^{(l+1,l')}_2$ are all in $L^\infty_{x,v}\pa{m}$. Therefore $ f^{(l+1,l'+1)}_1$ and $f^{(l+1,l'+1)}_2$ are also continuous on $[0,+\infty)\times \pa{\bar{\Omega}\times\R^3 - \Lambda_0}$.
\par To conclude, same computations as in Subsection $\ref{subsec:f1}$ and Subsection $\ref{subsec:f2}$ shows that $\pa{f^{(l+1,l')}_1}_{l'\in\N}$ and $\pa{f^{(l+1,l')}_2}_{l'\in\N}$ are Cauchy sequences in $L^\infty_{x,v}\pa{m}$ and therefore their respective limits, $f^{(l+1)}_1$ and $f^{(l+1)}_2$, are also continuous away from the grazing set. Which concludes the induction
\par Thanks to Subsection $\ref{subsec:f1}$ and Subsection $\ref{subsec:f2}$ we also know that $\pa{f^{(l)}_1}_{l\in\N}$ and $\pa{f^{(l)}_2}_{l'\in\N}$ are Cauchy sequences in $L^\infty_{x,v}\pa{m}$ and hence their respective limits are also continuous away from the grazing set. This concludes the fact that $f_1+f_2$ and therefore $F=\mu + f_1+f_2$ are continuous on  $[0,+\infty)\times \pa{\bar{\Omega}\times\R^3 - \Lambda_0}$ in the case of specular reflections.

\bigskip
The case of Maxwellian diffusion boundary condition is dealt with thanks to similar arguments, starting from the continuity Lemma $26$ in \cite{Gu6} which is equivalent to Lemma $\ref{lem:continuitySR}$ for diffusive boundary in the case of a convex bounded domain or from Proposition $6.1$ in \cite{EGKM} for more general $C^1$ bounded domains.
\bigskip


\subsection{Positivity of solutions}\label{subsec:positivity}

The positivity of the solutions to the Boltzmann equation $\eqref{BE}$ follows from two recent results by the author \cite{Bri2}\cite{Bri5}. The latter articles give constructive \textit{a priori} maxwellian lower bounds on the solutions to the Boltzmann equation  in $C^2$ convex bounded domains with specular reflections boundary conditions \cite{Bri2} and Maxwellian diffusion boundary conditions \cite{Bri5}.
\par More precisely, in both cases, the following property holds when the collision kernel describes a hard potential with Grad's angular cutoff (which is true for the collision kernels considered in the present work). If a solution $F$ to the Boltzmann equation on $[0,T_{max})$, $T_{max}$ can be infinity, satisfies
\begin{enumerate}
\item[(i)] $F_0$ is a non-negative function with positive mass 
$$M = \int_{\Omega\times\R^3}F(t,x,v)\:dxdv >0,$$
\item[(ii)] $F$ is continuous on $[0,T_{max}) \times \pa{\bar{\Omega} \times\R^3-\Lambda_0}$, in other words continuous away from the grazing set,
\item[(iii)] $F$ has uniformly bounded local energy
$$E_F =\sup\limits_{(t,x) \in [0,T_{max})\times\R^3}\int_{\R^3}\abs{v}^2 F(t,x,v)\:dv <+\infty,$$
\end{enumerate}
then for all $\tau \in (0,T_{max})$ there exists $\rho_\tau$, $\theta_\tau >0$ depending only on $M$, $E_F$, $\tau$ and the collision kernel such that almost everywhere
$$\forall t \in [\tau,T_{max}) ,\:\forall (x,v)\in\bar{\Omega}\times\R^3, \quad F(t,x,v) \geq \frac{\rho_\tau}{\pa{2\pi\theta_\tau}^{3/2}}e^{-\frac{\abs{v}^2}{}2\theta_\tau}.$$

\bigskip
In the present work we constructed solutions to the Boltzmann equation in $L^\infty_{x,v}\pa{m}$ of the form $F=\mu +f$ with $F_0 = \mu + f_0 \geq 0$ satisfying the conservation laws associated with the boundary conditions. Therefore $F$ preserves the total mass so in our case $M =1 >0$ so that point $(i)$ is satisfied. Point $(ii)$ is exactly what we proved in the previous subsection. Finally, since the solution $F$ is in $L^\infty_{x,v}\pa{m}$ with exponential trend to equilibrium it follows
$$\int_{\R^3}\abs{v^2}F(t,x,v)\:dv \leq \norm{F(t)}_{L^\infty_{x,v}\pa{m}}\int_{\R^3}\frac{\abs{v}^2}{m(v)^2}\:dv \leq C_m \norm{F_0}_{L^\infty_{x,v}\pa{m}}$$
and point $(iii)$ is also satisfied. The positivity of $F$ therefore follows from the lower bound property described above.
\bigskip



%
\bibliographystyle{acm}
\bibliography{bibliography}


\bigskip
\signmb

\end{document}